
\documentclass{article}%
\usepackage{amsmath}
\usepackage{amsfonts}
\usepackage{amssymb}
\usepackage{graphicx}%
\setcounter{MaxMatrixCols}{30}
\providecommand{\U}[1]{\protect\rule{.1in}{.1in}}
\newtheorem{theorem}{Theorem}[section]

\newtheorem{corollary}[theorem]{Corollary}

\newtheorem{definition}[theorem]{Definition}
\newtheorem{example}[theorem]{Example}

\newtheorem{lemma}[theorem]{Lemma}

\newtheorem{proposition}[theorem]{Proposition}
\newtheorem{remark}[theorem]{Remark}

\newenvironment{proof}[1][Proof]{\noindent \textbf{#1.} }{\  \rule{0.5em}{0.5em}}
\numberwithin{equation}{section}
\begin{document}

\title{The Pricing Mechanism \\of Contingent Claims and its Generating Function}
\author{Shige PENG\thanks{The author thanks the partial support from the Natural
Science Foundation of China, grant No. 10131040. This research is supported in
part by The National Natural Science Foundation of China No. 10131040. This
reversion is made after the author's visit, during November 2003, to Institute
of Mathematics and System Science, Academica Sinica, where he gives a series
of lectures on this paper. He thanks Zhiming Ma and Jia-an Yan, for their
fruitful suggestions, critics and warm encouragements. He also thanks to
Claude Dellacherie for his suggestions and critics. }\\Institute of Mathematics, Institute of Finance\\Shandong University\\250100, Jinan, China\\peng@sdu.edu.cn}
\date{Version: March 31, 2006}
\maketitle
\tableofcontents

\bigskip\medskip\medskip

\noindent\textbf{Abstract. }{\small In this paper we study dynamic pricing
mechanism of contingent claims. A typical model of such pricing mechanism is
the so-called }$g${\small --expectation }${\small E}_{s,t}^{g}{\small [X]}$
{\small defined by the solution of the backward stochastic differential
equation with generator }$g$ {\small and with the contingent claim
}${\small X}$ {\small as terminal condition. The generating function }$g$
{\small this BSDE. We also provide examples of determining the price
generating function }${\small g=g(y,z)}$ {\small by testing. }

{\small The main result of this paper is as follows: if a given dynamic
pricing mechanism is $\mathbb{E}^{g_{\mu}}$--dominated, i.e., the criteria
(A5) $\mathbb{E}_{s,t}[X]-\mathbb{E}_{s,t}[X^{\prime}]\leq\mathbb{E}^{g_{\mu}%
}[X-X^{\prime}]$ is satisfied for a large enough $\mu>0$, where $g_{\mu}%
=\mu(|y|+|z|)$, then $\mathbb{E}_{s.t}[\cdot]$ is a $g$--pricing mechanism.
This domination condition was statistically tested using CME data docoments.
The result of test is significantly positive.}

\medskip

\noindent\textbf{Keywords: }BSDE, nonlinear expectation, dynamic pricing
mechanism, $g$--expectation, nonlinear evaluation, $g$-martingale, nonlinear
martingale, Doob-Meyer decomposition. \newline\newline\medskip

\noindent\textbf{MSC 2000 Classification Numbers: }60H10, 60H05,
\newline\newline

\section{Introduction\label{ss1}\bigskip}

There are a lot of data of the processes of prices of huge variety of
contingent claims, vanilla options, exotic options, etc. Each process
corresponds the price of a specific contingent claim issued in a specific
market and offered by a specific financial institution. A typical example is
the call and put options with a specific stock price as their underlying
asset. We can find the real time data of of the option price $C_{t}$, $t\geq
t_{0}$ for a call option $C_{T}=(S_{T}-k)^{+}$ with $T$ as its maturity. There
exist many processes of prices of this specific product, e.g., the bid price,
the ask price, the we-buy price and we-sell price by a market maker under a
specific background, etc. The main point of view of this paper is, behind a
price process, there is a pricing mechanism. Take the above option market
price $C_{t}$ for example, there exists a mapping $\mathbb{E}_{t,T}[\cdot]$
from $\Lambda_{T}$ the space of option price states at time $T$ to
$\Lambda_{t}$ at the time $t\in\lbrack t_{0},T]$ such that $C_{t}$ is produced
by $\mathbb{E}_{t,T}[C_{T}]$. This family of mapping
\[
\mathbb{E}_{t,T}[X]:X\in\Lambda_{T}\longmapsto\Lambda_{t},\ t\leq T
\]
forms the pricing mechanism for this specific option market prices.

Black-Scholes formula can be regarded as a dynamic pricing mechanism
of contingent claim. In fact, it can be regarded as to solve a
specific linear backward stochastic differential equation (BSDE).
More generally, each BSDE with a given generating function $g$ forms
a dynamic model of pricing mechanism of contingent claims.

In this paper we explain the following result: if an a dynamic
pricing mechanism is dominated by $g_{\mu}$--pricing mechanism, with
large enough $\mu>0$, then it is a $g$--pricing mechanism: there
exists a unique generating function $g$, such that the price of the
pricing mechanism is solved by the corresponding BSDE. In this case,
to find the corresponding generating function $g$ by using data of
the pricing process is a very interesting problem, since $g$
determines entirely the pricing mechanism. The domination condition
can be tested also by data analysis of the price processes.

The paper is organized as follows: in section \ref{ss2}, we present the the
notion of $\mathcal{F}_{t}$--consistent pricing mechanisms in subsection
\ref{ss2.1}. We then give a concrete $\mathcal{F}_{t}$--pricing mechanism:
$\mathbb{E}^{g}$--pricing mechanisms in subsection \ref{ss2.2}. The main
result, Theorem \ref{m7.1}, will be presented in section \ref{ss3}. We also
provide some examples and explain how to find the function $g$ through by
testing the input--output data. This main theorem will be proved in Section
\ref{ss8}. Nonlinear decomposition theorems of Doob--Meyer's type, i.e.,
Proposition \ref{p2.3} and Theorem \ref{m6.1} play crucial roles in the proof
of Theorem \ref{m7.1}. Theorem \ref{m6.1} has also an interesting
interpretation in finance (see Remark \ref{m6.1Rem1}).

The crucial domination inequality (\ref{e3.1}) of our main result Theorem
\ref{m7.1} is tested by using data of parameter files, provided by CME, for
options based on S\&P500 futures. The result strongly support that the option
pricing mechanism of CME is a $g$--pricing mechanism.

Another application of the dynamical expectations and pricing mechanisms is to
risk measures. Axiomatic conditions for a (one step) coherent risk measure was
introduced by Artzner, Delbaen, Eber and Heath 1999 \cite{ADEH1999} and, for a
convex risk measure, by F\"{o}llmer and Schied (2002) {\cite{Fo-Sc}}. Rosazza
Gianin (2003) studied dynamical risk measures using the notion of
$g$--expectations in \cite{rosazza} (see also \cite{Peng2003b}, \cite{El-Bar},
\cite{El-Bar2005}) in which (B1)--(B4) are satisfied. In fact conditions
(A1)-(A4), as well as their special situation (B1)--(B4) provides an ideal
characterization of the dynamical behaviors of a the a risk measure. But in
this paper we emphasis the study of the mechanism of the pricing mechanism to
a further payoff, for which is, in general, the translation property in risk
measure is not satisfied.

\section{The pricing mechanisms and $g$--pricing mechanism by BSDE\label{ss2}}

\subsection{Basic setting\label{ss2.1}}

We assume that the price $S$ of the underlying assets is driven by a
$d$--dimensional Brownian motion $(B_{t})_{t\geq0}$ in a probability space
$(\Omega,\mathcal{F},P)$. We don't need to precise the model of $S_{t}$, what
we assume here is that the information $\mathcal{F}_{t}^{S}$ of the price $S$
coincides with that of the Brownian motion:
\[
\mathcal{F}_{t}^{S}=\mathcal{F}_{t}:=\sigma\{B_{s},\;s\leq t\}
\]
For each $t\in\lbrack0,\infty)$, the state of contingent prices will be given
in the following space

\begin{itemize}
\item $\Lambda_{t}=L^{2}(\mathcal{F}_{t}):=$\{the space of all real valued
$\mathcal{F}_{t}$--measurable random variables such that $E[|X|^{p}]<\infty$\}.
\end{itemize}

\begin{definition}
\label{d2.1}A system of operators:
\[
\mathbb{E}_{s,t}[X]:X\in L^{2}(\mathcal{F}_{t})\rightarrow L^{2}%
(\mathcal{F}_{s}),\;T_{0}\leq s\leq t\leq T_{1}%
\]
is called an $\mathcal{F}_{t}$--\textrm{consistent pricing mechanism} defined
on $[T_{0},T_{1}]$ if it satisfies the following properties: for each
$T_{0}\leq s\leq t\leq T_{1}$ and for each $X_{t}$, $X_{t}^{\prime}\in
L^{2}(\mathcal{F}_{t})$, \newline\textbf{(A1)} $\mathbb{E}_{s,t}[X_{t}%
]\geq\mathbb{E}_{s,t}[X_{t}^{\prime}]$, a.s., if $X_{t}\geq X_{t}^{\prime}$,
a.s.; \newline\textbf{(A2)} $\mathbb{E}_{t,t}[X_{t}]=X_{t}$, a.s.;
\newline\textbf{(A3)} $\mathbb{E}_{r,s}[\mathbb{E}_{s,t}[X_{t}]]=\mathbb{E}%
_{r,t}[X_{t}]$, a.s.; for $r\leq s$\newline\textbf{(A4)} $1_{A}\mathbb{E}%
_{s,t}[X_{t}]=1_{A}\mathbb{E}_{s,t}[1_{A}X_{t}]$, a.s.\ $\forall
A\in\mathcal{F}_{s}$.
\end{definition}

We will often consider (A1)--(A4) plus an additional condition: \newline%
\textbf{(A4}$_{0}$\textbf{)} $\mathbb{E}_{s,t}[0]=0$, a.s.\ $\forall0\leq
s\leq t\leq T$.

\begin{remark}
The raison we use the letter $\mathbb{E}_{s,t}[\cdot]$ to denote the above
pricing mechanism is that its behavior is very like the conditional
expectation $E[X_{t}|\mathcal{F}_{s}]$ for a $\mathcal{F}_{t}$--measurable
random variable. It is wise profit this similarity to introduce the notion of
$\mathbb{E}$-martingales which are the data of the processes of option prices
produced by this pricing mechanism.
\end{remark}

\begin{remark}
(A1) and (A2) are economically obvious conditions. Condition (A3) means that
the value $\mathbb{E}_{s,t}[X_{t}]$ can be regarded as a contingent claim at
the maturity $s$. The price of this contingent claim at the time $r\leq s$ is
$\mathbb{E}_{r,s}[\mathbb{E}_{s,t}[X_{t}]]$. It have to be the same as the
price $\mathbb{E}_{r,t}[X_{t}]$.
\end{remark}

\begin{remark}
The meaning of condition (A4) is: at time $s$, the agent knows whether $I_{A}$
worthes $1$. If it is $1$, then the price $\mathbb{E}_{s,t}[1_{A}X_{t}]$ must
be the same as $\mathbb{E}_{s,t}[X_{t}]$.
\end{remark}

\begin{proposition}
\label{m2a4}(A4) plus (A4$_{0}$) is equivalent to \newline\textbf{(A4')}
$1_{A}\mathbb{E}_{s,t}[X]=\mathbb{E}_{s,t}[1_{A}X]$, a.s.\ $\forall
A\in\mathcal{F}_{s}$.
\end{proposition}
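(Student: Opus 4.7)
The plan is to prove the two implications separately, with the forward direction being the only one that requires an actual idea; the reverse is essentially a substitution.

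For the direction (A4)+(A4$_0$)$\Rightarrow$(A4'), the goal is to show that $\mathbb{E}_{s,t}[1_A X]$ equals $1_A\mathbb{E}_{s,t}[X]$, not merely $1_A \mathbb{E}_{s,t}[1_A X]$. The trick will be to apply (A4) once to the set $A$ and then again to its complement $A^c \in \mathcal{F}_s$. More precisely, I would first write
\begin{equation*}
\mathbb{E}_{s,t}[1_A X] = 1_A\,\mathbb{E}_{s,t}[1_A X] + 1_{A^c}\,\mathbb{E}_{s,t}[1_A X],
\end{equation*}
then apply (A4) to $A$ with argument $X$ to rewrite the first term as $1_A\mathbb{E}_{s,t}[X]$, and apply (A4) to $A^c$ with argument $1_A X$ to rewrite the second term as $1_{A^c}\mathbb{E}_{s,t}[1_{A^c}\cdot 1_A X]=1_{A^c}\mathbb{E}_{s,t}[0]$. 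The condition (A4$_0$) then kills this second term, giving (A4').

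For the direction (A4')$\Rightarrow$(A4)+(A4$_0$), I would first substitute $1_A X$ in place of $X$ in (A4') to get $1_A\mathbb{E}_{s,t}[1_A X]=\mathbb{E}_{s,t}[1_A\cdot 1_A X]=\mathbb{E}_{s,t}[1_A X]=1_A\mathbb{E}_{s,t}[X]$, which is (A4). To recover (A4$_0$), I would take $X=0$ and $A=\emptyset\in\mathcal{F}_s$ in (A4'), which yields $0=\mathbb{E}_{s,t}[0]$.

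The whole argument is essentially algebraic manipulation of indicator functions together with the measurability hypothesis $A\in\mathcal{F}_s$; there is no real obstacle. The only step worth stressing is the reapplication of (A4) to $A^c$, since without this the weaker identity (A4) produces only $1_A\mathbb{E}_{s,t}[1_A X]$, not the full (A4'). The role of (A4$_0$) is exactly to discard the $1_{A^c}\mathbb{E}_{s,t}[0]$ term; without it, one would only recover something of the form $\mathbb{E}_{s,t}[1_A X]=1_A\mathbb{E}_{s,t}[X]+1_{A^c}\mathbb{E}_{s,t}[0]$, which explains why (A4$_0$) is needed in the statement.
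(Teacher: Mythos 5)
Your proof is correct and follows essentially the same route as the paper: the forward direction splits $\mathbb{E}_{s,t}[1_AX]$ over $A$ and $A^{c}$, applies (A4) to $A^{c}$ with argument $1_AX$ to reduce the second term to $1_{A^{c}}\mathbb{E}_{s,t}[0]$, and invokes (A4$_{0}$) to discard it, exactly as in the paper's argument. The converse (substituting $1_AX$ into (A4') and taking $A=\emptyset$ for (A4$_{0}$)) also matches the paper, which merely states this part as clear.
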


\begin{proof}
\textbf{. } It is clear that (A4') implies (A4). $\mathbb{E}_{s,t}[0]\equiv0$
can be derived by putting $A=\emptyset$ in (A4'). On the other hand, (A4) plus
the additional condition implies
\[
1_{A^{C}}\mathbb{E}_{s,t}[1_{A}X]=1_{A^{C}}\mathbb{E}_{s,t}[1_{A^{c}}%
1_{A}X]=0.
\]
We thus have
\begin{align*}
\mathbb{E}_{s,t}[1_{A}X]  &  =1_{A^{C}}1_{A}\mathbb{E}_{s,t}[X]+1_{A}%
1_{A}\mathbb{E}_{s,t}[X]\\
&  =1_{A}\mathbb{E}_{s,t}[X].
\end{align*}

\end{proof}

\begin{proposition}
\label{mA4} (A4) is equivalent to, for each $0\leq s\leq t$ and $X,X^{\prime
}\in L^{2}(\mathcal{F}_{t})$,
\begin{equation}
\mathbb{E}_{s,t}[1_{A}X+1_{A^{C}}X^{\prime}]=1_{A}\mathbb{E}_{s,t}%
[X]+1_{A^{C}}\mathbb{E}_{s,t}[X^{\prime}],\; \mathrm{a.s.}\; \forall
A\in\mathcal{F}_{s}. \label{eA4}%
\end{equation}

\end{proposition}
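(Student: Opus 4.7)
The plan is to establish both implications directly, relying only on (A4) itself (no appeal to (A4$_0$), which is not assumed here).

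For the forward direction, I would set $Y := 1_A X + 1_{A^C} X'$ and check the identity (\ref{eA4}) after multiplying both sides by $1_A$ and $1_{A^C}$ separately. On the set $A$ one has $1_A Y = 1_A X$, so two successive applications of (A4) — first to $Y$, then to $X$ — give the chain
\[
1_A \mathbb{E}_{s,t}[Y] = 1_A \mathbb{E}_{s,t}[1_A Y] = 1_A \mathbb{E}_{s,t}[1_A X] = 1_A \mathbb{E}_{s,t}[X].
\]
A symmetric computation with $1_{A^C}$ (using $1_{A^C} Y = 1_{A^C} X'$) gives $1_{A^C}\mathbb{E}_{s,t}[Y] = 1_{A^C}\mathbb{E}_{s,t}[X']$. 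Adding these two pieces and using $1_A + 1_{A^C} = 1$ produces (\ref{eA4}).

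For the reverse direction, the trick is to plug the specific substitution $X \mapsto 1_A X$, $X' \mapsto 1_{A^C} X$ into (\ref{eA4}). Since $1_A(1_A X) + 1_{A^C}(1_{A^C} X) = X$, the left side collapses to $\mathbb{E}_{s,t}[X]$ and one obtains
\[
\mathbb{E}_{s,t}[X] = 1_A \mathbb{E}_{s,t}[1_A X] + 1_{A^C}\mathbb{E}_{s,t}[1_{A^C} X].
\]
Multiplying this equality by $1_A$ annihilates the second term on the right and yields exactly (A4): $1_A \mathbb{E}_{s,t}[X] = 1_A \mathbb{E}_{s,t}[1_A X]$.

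There is no serious obstacle here; both directions are purely algebraic manipulations of indicator functions. The only mildly nonobvious step is choosing the substitution $(1_A X,\, 1_{A^C} X)$ in the reverse direction, which is natural once one recognizes that (\ref{eA4}) is the only available tool to split an arbitrary $X$ according to the $\mathcal{F}_s$-event $A$ without invoking $\mathbb{E}_{s,t}[0] = 0$.
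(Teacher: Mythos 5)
Your proof is correct and follows essentially the same route as the paper: the forward direction is the identical two-fold application of (A4) to $Y=1_AX+1_{A^C}X'$ on $A$ and on $A^C$, and the reverse direction is the same indicator-substitution trick (the paper plugs in the pair $(X,\,1_AX)$ rather than your $(1_AX,\,1_{A^C}X)$, but the mechanism — collapsing the left side of (\ref{eA4}) via idempotence of indicators and then multiplying by $1_A$ — is the same). You are also right that neither direction needs (A4$_0$).
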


\begin{proof}
(A4) $\Rightarrow$ (\ref{eA4}): We let $Y=1_{A}X+1_{A^{C}}X^{\prime}$. Then,
by (A4)
\[
1_{A}\mathbb{E}_{s,t}[Y]=1_{A}\mathbb{E}_{s,t}[1_{A}Y]=1_{A}\mathbb{E}%
_{s,t}[1_{A}X]=1_{A}\mathbb{E}_{s,t}[X].
\]
Similarly
\[
1_{A^{C}}\mathbb{E}_{s,t}[Y]=1_{A^{C}}\mathbb{E}_{s,t}[1_{A^{C}}Y]=1_{A^{C}%
}\mathbb{E}_{s,t}[1_{A^{C}}X^{\prime}]=1_{A^{C}}\mathbb{E}_{s,t}[X^{\prime}].
\]
Thus (\ref{eA4}) from $1_{A}\mathbb{E}_{s,t}[Y]+1_{A^{C}}\mathbb{E}%
_{s,t}[Y]=1_{A}\mathbb{E}_{s,t}[X]+1_{A^{C}}\mathbb{E}_{s,t}[X^{\prime}]$.
\end{proof}

(\ref{eA4}) $\Rightarrow$ (A4): It is simply because of
\begin{align*}
1_{A}\mathbb{E}_{s,t}[1_{A}X]  &  =1_{A}\mathbb{E}_{s,t}[1_{A}X+1_{A^{C}%
}(1_{A}X)]\\
&  =1_{A}(1_{A}\mathbb{E}_{s,t}[X]+1_{A^{C}}\mathbb{E}_{s,t}[1_{A^{C}}X])\\
&  =1_{A}\mathbb{E}_{s,t}[X].
\end{align*}

\begin{remark}
At time $t$, the agent knows the value of $1_{A}$. (A4) means that, if,
$\omega\in A$, i.e.., $1_{A}(\omega)=1$ then the value $\mathbb{E}_{s,t}%
[1_{A}X]$ should be the same as $\mathbb{E}_{s,t}[X]$ since the two outcomes
$X(\omega)$ and $(1_{A}X)(\omega)$ are exactly the same. (A4) is applied to
the pricing mechanism of a final outcome $X$ plus some \textquotedblleft
dividend\textquotedblright\ $(D_{s})_{s\geq0}$.
\end{remark}

An immediate property of this dynamical pricing mechanism is that they can be
pasted together, one after the other to form a new dynamical pricing mechanism:

\begin{proposition}
\label{m2.3}Let $T_{0}<T_{1}<T_{2}<\cdots<T_{N}$ be given and, for
$i=0,1,2,\cdots,N-1$, let $\mathbb{E}_{s,t}^{i}[\cdot]$, $T_{i}\leq s\leq
t\leq T_{i+1}$ be an $\mathcal{F}_{t}$--consistent pricing mechanism on
$[T_{i},T_{i+1}]$ in the sense of Definition \ref{d2.1}. Then there exists a
unique $\mathcal{F}_{t}$--consistent pricing mechanism $\mathbb{E}[\cdot]$
defined on $[T_{0},T_{N}]$
\[
\mathbb{E}_{s,t}[X]:X\in L^{2}(\mathcal{F}_{t})\rightarrow L^{2}%
(\mathcal{F}_{s}),\;T_{0}\leq s\leq t\leq T_{N}%
\]
such that, for each $i=0,1,\cdots,N-1$, and for each $T_{i}\leq s\leq t\leq
T_{i+1}$,
\begin{equation}
\mathbb{E}_{s,t}[X]=\mathbb{E}_{s,t}^{i}[X],\; \forall X\in L^{2}%
(\mathcal{F}_{t}). \label{e2.01}%
\end{equation}

\end{proposition}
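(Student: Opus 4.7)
The plan is to construct $\mathbb{E}_{s,t}$ by iterated composition along the grid $T_0<T_1<\cdots<T_N$, and then verify properties (A1)--(A4) and uniqueness.

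First, I would define the pasted operator. Given $T_0\le s\le t\le T_N$, let $i,j\in\{0,1,\ldots,N-1\}$ be (uniquely chosen) indices with $s\in[T_i,T_{i+1}]$ and $t\in[T_j,T_{j+1}]$ (if $s$ or $t$ equals some grid point $T_k$, the choice of adjacent interval is irrelevant, as I explain below). If $i=j$, set $\mathbb{E}_{s,t}[X]:=\mathbb{E}^{i}_{s,t}[X]$. If $i<j$, set
\[
\mathbb{E}_{s,t}[X]:=\mathbb{E}^{i}_{s,T_{i+1}}\bigl[\mathbb{E}^{i+1}_{T_{i+1},T_{i+2}}[\,\cdots\,\mathbb{E}^{j-1}_{T_{j-1},T_{j}}[\mathbb{E}^{j}_{T_{j},t}[X]]\,\cdots\,]\bigr].
\]
Each inner operator maps $L^{2}(\mathcal{F}_{T_{k+1}})$ into $L^{2}(\mathcal{F}_{T_{k}})$, so the composition is well-defined as a map $L^{2}(\mathcal{F}_{t})\to L^{2}(\mathcal{F}_{s})$. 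The ambiguity at boundary points is resolved by (A2): on a degenerate sub-interval $[T_k,T_k]$, any $\mathbb{E}^{k}_{T_k,T_k}$ is the identity, so inserting or deleting it does not change the composition. This also shows that the formula restricted to $[T_i,T_{i+1}]$ reduces to $\mathbb{E}^{i}_{s,t}$, establishing \eqref{e2.01}.

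Next I would verify (A1)--(A4). Monotonicity (A1) is preserved under composition, since each $\mathbb{E}^i_{s,t}$ is monotone. (A2) at $s=t$ reduces to $\mathbb{E}^{i}_{t,t}[X]=X$. For (A4), if $A\in\mathcal{F}_{s}\subset\mathcal{F}_{T_{i+1}}$, then applying the (A4) property of $\mathbb{E}^{i}_{s,T_{i+1}}$ gives $1_A\mathbb{E}_{s,t}[X]=1_A\mathbb{E}^{i}_{s,T_{i+1}}[1_A Y]$ where $Y$ is the inner composed value; since the (A4) property propagates inward ($A\in\mathcal{F}_{s}\subset\mathcal{F}_{T_{i+1}}\subset\cdots\subset\mathcal{F}_{T_j}$, so we may successively pull $1_A$ through each $\mathbb{E}^k_{T_k,T_{k+1}}$), one obtains $1_A\mathbb{E}_{s,t}[X]=1_A\mathbb{E}_{s,t}[1_A X]$. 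The main work is (A3): given $r\le s\le t$, choose the composition representations of $\mathbb{E}_{r,s}$ and $\mathbb{E}_{s,t}$; the product $\mathbb{E}_{r,s}[\mathbb{E}_{s,t}[X]]$ is, after inserting the identity $\mathbb{E}^{i}_{s,s}$ if necessary, exactly the composition defining $\mathbb{E}_{r,t}[X]$, provided we can also insert and remove intermediate $\mathbb{E}^{k}_{T_k,T_k}$ via (A2) and, at the single point $s$, merge the adjacent factors $\mathbb{E}^{k}_{s,T_{k+1}}\circ\mathbb{E}^{k}_{T_k,s}$ into $\mathbb{E}^{k}_{T_k,T_{k+1}}$ using the (A3) property of $\mathbb{E}^{k}$ itself on $[T_k,T_{k+1}]$ (and analogously when $s$ is interior to one of the sub-intervals).

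Finally, uniqueness: if $\widetilde{\mathbb{E}}$ is another $\mathcal{F}_t$-consistent pricing mechanism on $[T_0,T_N]$ satisfying \eqref{e2.01}, then (A3) applied repeatedly along the grid points $T_{i+1}<\cdots<T_{j}$ lying in $(s,t)$ forces $\widetilde{\mathbb{E}}_{s,t}[X]$ to equal the same iterated composition that defines $\mathbb{E}_{s,t}[X]$. The main obstacle is bookkeeping the boundary cases in (A3) across the joining points $T_k$, but once one is careful to treat degenerate factors with (A2) and to apply (A3) of each $\mathbb{E}^k$ inside its own sub-interval, the argument is essentially algebraic and no additional analytic input is required.
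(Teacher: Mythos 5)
Your proposal is correct and follows essentially the same route as the paper: paste the mechanisms by composing through the grid points $T_{i+1},\dots,T_j$, use (A2) to handle degenerate factors, merge adjacent factors at $s$ via (A3) of the individual $\mathbb{E}^k$ to get (A3) of the pasted mechanism, propagate $1_A$ inward for (A4), and derive uniqueness from (A3). The only organizational difference is that you write the full $N$-fold composition directly, whereas the paper reduces to the two-interval case and iterates; the verifications are the same.
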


\begin{proof}
\textbf{ }It suffices to prove the case $N=2$, since we then can apply this
result to the cases $[T_{0},T_{3}]=[T_{0},T_{2}]\cup\lbrack T_{2},T_{3}]$,
$\cdots$ and finally $[T_{0},T_{N}]=[T_{0},T_{N-1}]\cup\lbrack T_{N-1},T_{N}%
]$. \newline We define
\begin{equation}
\mathbb{E}_{s,t}[X]=\left\{
\begin{array}
[c]{cll}
& \mathbb{E}_{s,t}^{1}[X]\ \ \  & T_{0}\leq s\leq t\leq T_{1};\\
& \mathbb{E}_{s,t}^{2}[X], & T_{1}\leq s\leq t\leq T_{2};\\
& \mathbb{E}_{s,T_{1}}^{1}[\mathbb{E}_{T_{1},t}^{2}[X]] & T_{1}\leq
s<T_{1}<t\leq T_{2}.
\end{array}
\right.  \label{e2.02}%
\end{equation}
It is clear that, on $[T_{0},T_{2}]$, $\mathbb{E}_{s,t}[\cdot]$ satisfies (A1)
and (A2). To prove (A3) it suffices to check the relation
\[
\mathbb{E}_{r,s}[\mathbb{E}_{s,t}[X]]=\mathbb{E}_{r,t}[X],\;T_{0}\leq r\leq
s\leq t\leq T_{1}%
\]
for two cases: $T_{0}\leq r\leq s\leq T_{1}\leq t\leq T_{2}$ and $T_{0}\leq
r\leq T_{1}\leq s\leq t\leq T_{2}$. For the first case
\begin{align*}
\mathbb{E}_{r,s}[\mathbb{E}_{s,t}[X]]  &  =\mathbb{E}_{r,s}^{1}[\mathbb{E}%
_{s,T_{1}}^{1}[\mathbb{E}_{T_{1},t}^{2}[X]]]\\
&  =\mathbb{E}_{r,T_{1}}^{1}[\mathbb{E}_{T_{1},t}^{2}[X]]\\
&  =\mathbb{E}_{r,t}[X].
\end{align*}
For the second case
\begin{align*}
\mathbb{E}_{r,s}[\mathbb{E}_{s,t}[X]]  &  =\mathbb{E}_{r,T_{1}}^{1}%
[\mathbb{E}_{T_{1},s}^{2}[\mathbb{E}_{s,t}^{2}[X]]]\\
\  &  =\mathbb{E}_{r,T_{1}}^{1}[\mathbb{E}_{T_{1},t}^{2}[X]]\\
\  &  =\mathbb{E}_{r,t}[X].
\end{align*}
We now prove (A4). Again it suffices to check the case $T_{0}\leq s\leq
T_{1}\leq t\leq T_{2}$. In this case, for each $A\in\mathcal{F}_{s}%
\subset\mathcal{F}_{T_{1}}$, (A4) is derived from
\begin{align*}
1_{A}\mathbb{E}_{s,t}[X]  &  =1_{A}\mathbb{E}_{s,T_{1}}^{1}[\mathbb{E}%
_{T_{1},t}^{2}[X]]\\
&  =1_{A}\mathbb{E}_{s,T_{1}}^{1}[1_{A}\mathbb{E}_{T_{1},t}^{2}[X]]\\
&  =1_{A}\mathbb{E}_{s,T_{1}}^{1}[\mathbb{E}_{T_{1},t}^{2}[1_{A}X]]\\
&  =1_{A}\mathbb{E}_{s,t}[1_{A}X].
\end{align*}
It remains to prove the uniqueness of $\mathbb{E}[\cdot]$. Let $\mathbb{E}%
^{a}[\cdot]$ be an $\mathcal{F}_{t}$--consistent pricing mechanism such
that,\
\[
\mathbb{E}_{s,t}^{a}[X]=\mathbb{E}_{s,t}^{i}[X],\; \forall X\in L^{2}%
(\mathcal{F}_{t}),\;i=1,2.
\]
We then have, when $T_{0}\leq s\leq t\leq T_{1}$ and $T_{1}\leq s\leq t\leq
T_{2}$, $\mathbb{E}_{s,t}^{a}[X]\equiv\mathbb{E}_{s,t}[X],\; \forall X\in
L^{2}(\mathcal{F}_{t})$. For the remaining case, i.e., $T_{0}\leq
s<T_{1}<t\leq T_{1}$, since $\mathbb{E}^{a}$ satisfies (A3),
\begin{align*}
\mathbb{E}_{s,t}^{a}[X]  &  =\mathbb{E}_{s,T_{1}}^{a}[\mathbb{E}_{T_{1},t}%
^{a}[X]]\\
&  =\mathbb{E}_{s,T_{1}}^{1}[\mathbb{E}_{T_{1},t}^{2}[X]]\\
&  =\mathbb{E}_{s,t}[X],\; \forall X\in L^{2}(\mathcal{F}_{t}).
\end{align*}
Thus $\mathbb{E}_{s,t}^{a}[\cdot]=\mathbb{E}_{s,t}[\cdot]$. This completes the proof.\
\end{proof}

\subsection{Dynamic pricing mechanism generated by BSDE\label{ss2.2}}

We need the following notations. Let $p\geq1$ and $t\in\lbrack0,\infty)$ be given.

\medskip

\begin{itemize}
\item $L^{p}(\mathcal{F}_{t};R^{m}):=$\{the space of all $R^{m}$--valued
$\mathcal{F}_{t}$--measurable random variables such that $E[|\xi|^{p}]<\infty$\};

\item $L_{\mathcal{F}}^{p}(0,t;R^{m}):=$\{$R^{m}$--valued and predictable
stochastic processes such that $E\int_{0}^{t}|\phi_{s}|^{p}ds<\infty$\};

\item $D_{\mathcal{F}}^{p}(0,t;R^{m}):=$\{all RCLL processes in
$L_{\mathcal{F}}^{p}(0,t;R^{m})$ such that $E[\sup_{0\leq s\leq t}|\phi
_{s}|^{p}]<\infty$\};

\item $S_{\mathcal{F}}^{p}(0,T;R^{m}):=$\{all continuous processes in
$D_{\mathcal{F}}^{p}(0,T;R^{m})$ \};
\end{itemize}

In the case $m=1$, we denote them by $L^{p}(\mathcal{F}_{t})$, $L_{\mathcal{F}%
}^{p}(0,t)$, $D_{\mathcal{F}}^{p}(0,t)$ and $S_{\mathcal{F}}^{p}(0,t)$. We
recall that all elements in $D_{\mathcal{F}}^{2}(0,T)$ are predictable.

\medskip

For each given $X\in L^{2}(\mathcal{F}_{t})$, we solve the following BSDE on
$[0,t]$:%

\begin{equation}
Y_{s}=X+\int_{s}^{t}g(r,Y_{r},Z_{r})dr-\int_{s}^{t}Z_{r}dB_{r}, \label{tsBSDE}%
\end{equation}
where the unknown is the pair of the adapted processes $(Y,Z)$. Here the
function $g$ is given
\[
g:(\omega,t,y,z)\in\Omega\times\lbrack0,\infty)\times R\times R^{d}\rightarrow
R.
\]
It satisfies the following basic assumptions for each $\forall y,y^{\prime}\in
R,\;z,z^{\prime}\in R^{d}$
\begin{equation}
\left\{
\begin{array}
[c]{rrl}
&  & g(\cdot,y,z)\in L_{\mathcal{F}}^{2}(0,T),\ \ \forall T\in(0,\infty),\\
&  & |g(t,y,z)-g(t,y^{\prime},z^{\prime})|\leq\mu(|y-y^{\prime}|+|z-z^{\prime
}|)\;.
\end{array}
\right.  \label{h2.1}%
\end{equation}
In some cases it is interesting to consider the following situation:
\begin{equation}
\left\{
\begin{array}
[c]{rrl}%
\text{(a)\ \ \ } & g(\cdot,0,0) & \equiv0,\\
\text{(b)\ \ \ } & g(\cdot,y,0) & \equiv0,\; \forall y\in R.
\end{array}
\ \right.  \label{h2.2}%
\end{equation}
Obviously (b) implies (a). This BSDE (\ref{tsBSDE}) was intrduced by Bismut
\cite{Bismut73}, \cite{Bismut78} for the case where $g$ is a linear function
of $(y,z)$. Pardoux and Peng \cite{Pardoux-Peng1990} obtained the following
result (see Theorem \ref{th2.1} for a more general situation): for each $X\in
L^{2}(\mathcal{F}_{t})$, there exists a unique solution $(Y,Z)\in
S_{\mathcal{F}}^{2}(0,t)\times L_{\mathcal{F}}^{2}(0,t;R^{d})$ of the BSDE
(\ref{tsBSDE}).

\begin{definition}
\label{EgstX}We denote by $\mathbb{E}_{s,t}^{g}[X_{t}]:=Y_{s}$, $0\leq s\leq
t$. We thus define a system of operators
\begin{equation}
\mathbb{E}_{s,t}^{g}[\cdot]:L^{2}(\mathcal{F}_{t})\rightarrow L^{2}%
(\mathcal{F}_{s}),\ \ \ 0\leq s\leq t<\infty. \label{Def2.2aa}%
\end{equation}
$(\mathbb{E}_{s,t}^{g}[\cdot])_{0\leq s\leq t<\infty}$ is called $g$--expectation.
\end{definition}

\begin{proposition}
\label{p2.1aa}Let the generating function $g$ satisfies (\ref{h2.1}). Then
\[
\mathbb{E}_{s,t}^{g}[X]:X\in L^{2}(\mathcal{F}_{t})\rightarrow L^{2}%
(\mathcal{F}_{s}),\;0\leq s\leq t<\infty
\]
defined in (\ref{Def2.2aa}) is an $\mathcal{F}_{t}$--consistent pricing
mechanism, called $g$--pricing mechanism, i.e., it satisfies (A1)--(A4) of
Definition \ref{d2.1}.
\end{proposition}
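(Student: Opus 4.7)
The plan is to verify (A1)--(A4) of Definition \ref{d2.1} by invoking the three standard tools for BSDEs under assumption (\ref{h2.1}): the existence--uniqueness result cited just before Definition \ref{EgstX}, the comparison theorem for BSDEs, and an $\omega$-wise splitting argument. Nothing from (\ref{h2.2}) will be needed.

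I will begin with the two easy axioms. For (A2), setting $s=t$ in (\ref{tsBSDE}) forces $Y_t = X$, so $\mathbb{E}^{g}_{t,t}[X]=X$ a.s. For (A1), if $X\ge X'$ a.s., let $(Y,Z)$ and $(Y',Z')$ be the solutions of (\ref{tsBSDE}) with terminals $X$ and $X'$; the standard comparison theorem for BSDEs with Lipschitz generator delivers $Y_s\ge Y'_s$ a.s., which is exactly (A1).

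For (A3), the natural approach is the flow property of BSDE solutions. Fix $r\le s\le t$ and let $(Y,Z)\in S_{\mathcal{F}}^{2}(0,t)\times L_{\mathcal{F}}^{2}(0,t;R^{d})$ solve (\ref{tsBSDE}) on $[0,t]$ with terminal $X$. By definition $Y_s=\mathbb{E}^{g}_{s,t}[X]$, and the restriction of $(Y,Z)$ to $[0,s]$ solves (\ref{tsBSDE}) on $[0,s]$ with terminal condition $Y_s$. Uniqueness of BSDE solutions then forces $Y_r=\mathbb{E}^{g}_{r,s}[Y_s]=\mathbb{E}^{g}_{r,s}[\mathbb{E}^{g}_{s,t}[X]]$; comparing with the direct reading $Y_r=\mathbb{E}^{g}_{r,t}[X]$ yields (A3).

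The most delicate step, and what I view as the main obstacle, is (A4). Fix $A\in\mathcal{F}_s$ and let $(Y,Z)$, $(Y',Z')$ be the solutions of (\ref{tsBSDE}) on $[s,t]$ with terminals $X$ and $1_A X$ respectively. I would paste these together by defining $\bar Y_u := 1_A Y_u + 1_{A^{C}} Y'_u$ and $\bar Z_u := 1_A Z_u + 1_{A^{C}} Z'_u$ on $[s,t]$; both processes are adapted because $A\in\mathcal{F}_s\subset\mathcal{F}_u$. The key $\omega$-wise identity, valid because $1_A$ is $\{0,1\}$-valued, is
$$g\bigl(u,\, 1_A y + 1_{A^{C}} y',\, 1_A z + 1_{A^{C}} z'\bigr) = 1_A\, g(u,y,z) + 1_{A^{C}}\, g(u,y',z').$$
Using this identity one checks that $(\bar Y,\bar Z)$ itself solves (\ref{tsBSDE}) on $[s,t]$ with terminal value $1_A X + 1_{A^{C}}\cdot 1_A X = 1_A X$; uniqueness then forces $\bar Y \equiv Y'$, and evaluating at $u=s$ after multiplying by $1_A$ gives $1_A Y_s = 1_A Y'_s$, which is precisely (A4). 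This pointwise identity is the crux of the whole argument: once one notices that $g$ commutes with $\mathcal{F}_s$-measurable $\{0,1\}$-valued multiplications, all four axioms drop out of the existence--uniqueness--comparison package.
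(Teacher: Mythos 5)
Your proof is correct and follows essentially the same route as the paper, which obtains this proposition as the special case $K\equiv 0$ of Proposition \ref{p2.1}: comparison theorem for (A1), the definition for (A2), the flow/uniqueness property of the BSDE for (A3), and a localization-plus-uniqueness argument for (A4). The only (cosmetic) difference is in (A4): the paper multiplies both BSDEs by $1_{A}$ and invokes uniqueness for the generator $1_{A}g$ via the identity $1_{A}g(r,Y_{r},Z_{r})=1_{A}g(r,1_{A}Y_{r},1_{A}Z_{r})$, whereas you paste the two solutions across $A$ and $A^{C}$ and invoke uniqueness for $g$ itself; both arguments rest on the same pointwise locality of $g$ under multiplication by $\mathcal{F}_{s}$-measurable indicators together with the local property of the It\^{o} integral.
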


This pricing mechanism is entirely generated by function $g$. We then call $g$
a (contingent claim) price generating function.

\begin{proof}
This result is a special case of Proposition \ref{p2.1}.
\end{proof}

Since $g$ satisfies Lipschitz condition with Lipschitz constant $\mu$, it is
then dominated by the following function
\begin{equation}
g_{\mu}(y,z):=\mu|y|+\mu|z|,\;(y,z)\in R\times R^{d} \label{e2.9}%
\end{equation}
in the following since%
\[
g(t,y,z)-g(t,y^{\prime},z^{\prime})\leq g_{\mu}(y-y^{\prime},z-z^{\prime}).
\]
We will see that the above notion of domination is useful. Briefly speaking, a
price generating function $g$ is dominated by another one if and only if the
corresponding pricing mechanism $\mathbb{E}^{g}$ is dominated by the other one.

\section{Main result: $\mathbb{E}_{s,t}[\cdot]$ is governed by a BSDE
\label{ss3}}

From now on the system $\mathbb{E}_{s,t}[\cdot]_{0\leq s\leq t<\infty}$ is
always a fixed $\mathcal{F}_{t}$--consistent pricing mechanism, i.e.,
satisfying (A1)--(A4), with additional assumptions (A4$_{0}$) and the
following ${\mathbb{E}}^{g_{\mu}}$--domination assumption:

\smallskip\

\textbf{(A5)} there exists a sufficiently large number $\mu>0$ such that, for
each $0\leq s\leq t\leq T$,
\begin{equation}
\mathbb{E}_{s,t}[X]-\mathbb{E}_{s,t}[X^{\prime}]\leq\mathbb{E}_{s,t}^{g_{\mu}%
}[X-X^{\prime}],\; \; \forall X,X^{\prime}\in L^{2}(\mathcal{F}_{t}),
\label{e3.1}%
\end{equation}
where the function $g_{\mu}(y,z)=\mu|y|+\mu|z|$ is given in (\ref{e2.9}%
).\medskip

The \textbf{main theorem} of this paper is:

\begin{theorem}
\label{m7.1}We assume that the function $g$ satisfies (\ref{h2.1}) with
$g(\cdot,0,0)=0$. Then the $g$--expectation $\mathbb{E}_{s,t}^{g}%
[\cdot]_{0\leq s\leq t<\infty}$ is an $\mathcal{F}_{t}$--consistent pricing
mechanism satisfying (A1)--(A4), (A4$_{0}$) and the domination condition (A5).
$\mathbb{E}^{g}$ is then called $g$--(contingent claim) pricing mechanism, and
the function $g$ is called a (contingent claim) price generating
function.\newline Conversely, let $\mathbb{E}_{s,t}[\cdot]_{0\leq s\leq
t<\infty}$ be an $\mathcal{F}_{t}$--consistent pricing mechanism satisfying
(A1)--(A4), (A4$_{0}$) and the domination condition (A5), then there exists a
unique price generating function $g(\omega,t,y,z)$ satisfying (\ref{h2.1})
with $g(\cdot,0,0)\equiv0$, such that
\begin{equation}
\mathbb{E}_{s,t}[X]=\mathbb{E}_{s,t}^{g}[X],\; \; \forall s\leq t,\ \forall
X\in L^{2}(\mathcal{F}_{t}). \label{e7.1}%
\end{equation}

\end{theorem}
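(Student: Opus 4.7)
The plan is to prove the two directions of the equivalence separately. For the forward direction, that $\mathbb{E}^g$ satisfies (A1)--(A5), everything is a compact consequence of standard BSDE theory under (\ref{h2.1}): (A1) is the comparison theorem, (A2) is the terminal condition, and (A3) follows from uniqueness of BSDE solutions. For (A4), I would note that under $g(\cdot,0,0)=0$ and the Lipschitz property of $g$, the pair $(1_A Y, 1_A Z)$ solves the BSDE on $[s,t]$ with terminal $1_A X$ for any $A \in \mathcal{F}_s$, and then invoke uniqueness. (A4$_0$) is immediate because $(0,0)$ solves the BSDE with terminal $0$. Finally, (A5) follows from the pointwise domination $g(t,y,z) - g(t,y',z') \leq g_\mu(y-y', z-z')$ implied by (\ref{h2.1}), plus a second application of the comparison theorem. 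This is essentially the content of Proposition \ref{p2.1aa}.

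For the converse, the first technical input is $L^2$-Lipschitz continuity of $X \mapsto \mathbb{E}_{s,t}[X]$: applying (A5) to both $X - X'$ and $X' - X$ and combining with standard a-priori bounds for $\mathbb{E}^{g_\mu}$ produces a constant $C = C(\mu, T)$ with
\begin{equation*}
\|\mathbb{E}_{s,t}[X] - \mathbb{E}_{s,t}[X']\|_{L^2(\mathcal{F}_s)} \leq C \|X - X'\|_{L^2(\mathcal{F}_t)}.
\end{equation*}
Next I would introduce the notion of an $\mathbb{E}$-martingale: for $X \in L^2(\mathcal{F}_T)$, the process $Y_s := \mathbb{E}_{s,T}[X]$ is an $\mathbb{E}$-martingale by (A3), and has a continuous modification by $L^2$-Lipschitz continuity. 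The central ingredient is then the nonlinear Doob--Meyer decomposition of Theorem \ref{m6.1}, which yields $Z \in L^2_\mathcal{F}(0,T;R^d)$ and a continuous, finite-variation, predictable process $A$ with
\begin{equation*}
Y_s = Y_0 - A_s + \int_0^s Z_r\, dB_r.
\end{equation*}
The remaining task is to exhibit a Lipschitz function $g(\omega, t, y, z)$ with $g(\cdot,0,0)=0$ such that $dA_r = g(r, Y_r, Z_r)\, dr$ for every $X$.

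The function $g$ is then constructed by testing $\mathbb{E}$ on infinitesimal Brownian perturbations of a constant:
\begin{equation*}
g(\omega, t, y, z) := L^2\text{-}\lim_{h \downarrow 0} \frac{1}{h}\bigl(\mathbb{E}_{t,t+h}[y + z \cdot (B_{t+h} - B_t)] - y\bigr),
\end{equation*}
existence of the limit being a consequence of the Doob--Meyer decomposition applied to the $\mathbb{E}$-martingale generated by $y + z \cdot (B_T - B_t)$. Applying (A5) to two such test data and letting $h \downarrow 0$ yields the Lipschitz estimate of (\ref{h2.1}), and (A4$_0$) forces $g(\cdot,0,0) \equiv 0$. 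One then proves, by a stopping-time localization and a density argument on piecewise simple terminal conditions, that the drift $A_s$ above equals $\int_0^s g(r, Y_r, Z_r)\, dr$; this turns $(Y, Z)$ into a solution of (\ref{tsBSDE}) with generator $g$ and terminal $X$, and uniqueness of BSDE solutions then identifies $\mathbb{E}_{s,T}[X]$ with $\mathbb{E}^g_{s,T}[X]$. Uniqueness of $g$ is immediate from the construction.

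The hard step is the identification $dA_r = g(r, Y_r, Z_r)\, dr$: the definition of $g$ uses only infinitesimal perturbations of \emph{constants} along $B$, whereas a generic $\mathbb{E}$-martingale has a richly varying drift and diffusion. Bridging this requires approximating $Y_r$ near a stopping time $\tau$ by $Y_\tau + Z_\tau \cdot (B_r - B_\tau)$, using $L^2$-Lipschitz continuity of $\mathbb{E}$ to control the error, and carefully passing to the limit. A further subtle point is showing that $g$ admits a jointly measurable version in $(\omega, t, y, z)$, rather than merely $(\omega,t)$-versions depending on each fixed $(y,z)$; this is handled via a separability and continuity-in-$(y,z)$ argument resting on the Lipschitz estimate just obtained.
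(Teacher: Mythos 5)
Your forward direction is fine and coincides with the paper's (it is delegated there to Proposition \ref{p2.1aa} and Corollary \ref{c2.1}). The converse is where the two arguments diverge, and your route has a genuine gap at its central step: the construction of $g$ as the $L^{2}$--limit
\[
g(\omega,t,y,z):=L^{2}\hbox{--}\lim_{h\downarrow0}\frac{1}{h}\bigl(\mathbb{E}_{t,t+h}[y+z\cdot(B_{t+h}-B_{t})]-y\bigr).
\]
Nothing in (A1)--(A5) guarantees that this limit exists for a \emph{fixed} $t$. The only regularity available is that the drift of an $\mathbb{E}$--price process is an element of $L_{\mathcal{F}}^{2}(0,T)$ in the time variable; the generator produced by the theorem is itself only an $L^{2}(ds\times dP)$ object with no pointwise--in--$t$ meaning, so a pointwise infinitesimal limit can at best exist for a.e.\ $t$ (Lebesgue differentiation), and even that presupposes the representation you are trying to establish. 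Moreover the test datum $y+z\cdot(B_{t+h}-B_{t})$ changes with $h$ and is not the time--$(t+h)$ value of any single $\mathbb{E}$--martingale started at $y$; the discrepancy between it and such a value is of order $h$ in $L^{2}$, which after division by $h$ does not vanish, so (A5) does not let you transfer the limit from one to the other. This is exactly the obstruction that confines the infinitesimal--generator method of \cite{BCHMP00} and \cite{CHMP2001} to generators continuous (or right--continuous) in $t$, a restriction the paper explicitly sets out to remove.

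The paper's construction avoids pointwise limits in $t$ altogether. For each $(t,y,z)$ it takes the \emph{forward} It\^{o} process $dY_{s}^{t,y,z}=-g_{\mu}(Y_{s}^{t,y,z},z)ds+zdB_{s}$, $Y_{t}^{t,y,z}=y$, which is an $\mathbb{E}^{g_{\mu}}$--martingale, hence by (A5) an $\mathbb{E}$--supermartingale; Theorem \ref{m6.1} plus Proposition \ref{m4.3} (whose two--sided sandwich between the $\mathbb{E}^{g_{\mu}}$-- and $\mathbb{E}^{-g_{\mu}}$--decompositions of Proposition \ref{p2.3} is what yields \emph{absolute continuity} of the drift together with the mutual Lipschitz bounds (\ref{e4.5})--(\ref{e4.6}) --- a point your appeal to Theorem \ref{m6.1} alone glosses over) produces a drift density $g_{s}^{t,y,z}$ with $Z_{s}^{t,y,z}\equiv z$. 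The generator is then defined as the $L_{\mathcal{F}}^{2}(0,T)$--limit of the dyadic piecings $g^{n}(s,y,z)=\sum_{i}g_{s}^{t_{i}^{n},y,z}1_{[t_{i}^{n},t_{i+1}^{n})}(s)$, the Cauchy property coming from (\ref{e7.12}) and the elementary SDE estimate (\ref{ee7.15}). Finally, the identification $\mathbb{E}_{s,t}[X]=\mathbb{E}_{s,t}^{g}[X]$ is not done by your localization--and--density scheme but by the single inequality (iv) of Lemma \ref{m7.4}, $|g(s,y,z)-\bar{g}_{s}^{t,X}|\leq\mu|y-\bar{Y}_{s}^{t,X}|+\mu|z-\bar{Z}_{s}^{t,X}|$, fed into the standard It\^{o}--Gronwall uniqueness computation for BSDEs. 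If you want to salvage your outline, you should replace the infinitesimal limit by a construction of this type, and supply the quantitative link between the drift of a general $\mathbb{E}$--price process and the drifts of the test processes, which is the real content of Proposition \ref{m4.3} and Corollary \ref{m4.4}.
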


\begin{remark}
The case where $\mathbb{E}_{s,t}[\cdot]$ satisfy (A1)--(A5), without (A4$_{0}%
$), can be obtained as corollaries of the this main theorem. In this more
general situation the condition $g(s,0,0)\equiv0$ is not imposed. The main
result of \cite{CHMP2002}
\end{remark}

We consider some special situations of our theorem.

\begin{example}
If moreover, $g(s,y,0)\equiv0$. Then, by \cite{Peng1997}, (A2') holds. Thus,
according to Proposition \ref{m2a4}, $\mathbb{E}_{s,t}^{g}[\cdot]$ becomes an
$\mathcal{F}_{t}$--consistent nonlinear expectation:
\[
\mathbb{E}[X|\mathcal{F}_{t}]=\mathbb{E}_{g}[X|\mathcal{F}_{t}]:=\mathbb{E}%
_{s,t}^{g}[X]=\mathbb{E}_{s,T}^{g}[X].
\]
This is so called $g$--expectation introduced in \cite{Peng1997}.
\end{example}

This extends non trivially the result obtained in \cite{CHMP2002}, (see also
\cite{Peng2003b} for a more systematical presentation and explanations in
finance), where we needed a more strict domination condition plus the
following assumption
\[
\mathbb{E}[X+\eta|\mathcal{F}_{t}]=\mathbb{E}[X|\mathcal{F}_{t}]+\eta,\;
\forall\eta\in\mathcal{F}_{t}\hbox{.}
\]
Under these assumptions we have proved in \cite{CHMP2002} that there exists a
unique function $g=g(s,z)$, with $g(s,0)\equiv0$, such that $\mathbb{E}%
_{g}[X]\equiv\mathbb{E}[X]=\mathbb{E}[X|\mathcal{F}_{0}]$.

\begin{example}
Consider a financial market consisting of $d+1$ assets: one bond and $d$
stocks. We denote by $P_{0}(t)$ the price of the bond and by $P_{i}(t)$ the
price of the $i$-th stock at time $t$. We assume that $P_{0}$ is the solution
of the ordinary differential equation: $dP_{0}(t)=r(t)P_{0}(t)dt,$ and
$\{P_{i}\}_{i=1}^{d}$ is the solution of the following SDE
\begin{align*}
dP_{i}(t)  &  =P_{i}(t)[b_{i}(t)dt+{\sum}_{j=1}^{d}\sigma_{ij}(t)dB_{t}%
^{j}],\\
P_{i}(0)  &  =p_{i},\quad i=1,\cdots,d.
\end{align*}
Here $r$ is the interest rate of the bond; $\{b_{i}\}_{i=1}^{d}$ is the rate
of the expected return, $\{ \sigma_{ij}\}_{i,j=1}^{d}$ the volatility of the
stocks. We assume that $r$, $b$, $\sigma$ and $\sigma^{-1}$ are all
$\mathcal{F}_{t}$--adapted and uniformly bounded processes on $[0,\infty)$.
Black and Scholes have solved the problem of the market pricing mechanism of
an European type of derivative $X\in L^{2}(\mathcal{F}_{T})$ with maturity
$T$. In the point of view of BSDE, the problem can be treated as follows:
consider an investor who has, at a time $t\leq T$, $n_{0}(t)$ bonds and
$n_{i}(t)$ $i$-stocks, $i=1,\cdots,d$, i.e., he invests $n_{0}(t)P_{0}(t)$ in
bond and $\pi_{i}(t)=n_{i}(t)P_{i}(t)$ in the $i$-th stock. $\pi(t)=(\pi
_{1}(t),\cdots,\pi_{d}(t))$, $0\leq t\leq T$ is an $R^{d}$ valued,
square-integrable and adapted process. We define by $y(t)$ the investor's
wealth invested in the market at time $t$:
\[
y(t)=n_{0}(t)P_{0}(t)+{\sum}_{i=1}^{d}\pi_{i}(t).
\]
We make the so called self--financing assumption: in the period $[0,T]$, the
investor does not withdraw his money from, or put his money in his account
$y_{t}$. Under this condition, his wealth $y(t)$ evolves according to
\[
dy(t)=n_{0}(t)dP_{0}(t)+{\sum}_{i=1}^{d}n_{i}(t)dP_{i}(t).
\]
or
\[
dy(t)=[r(t)y(t)+{\sum}_{i=1}^{d}(b_{i}(t)-r(t))\pi_{i}(t)]dt+{\sum}%
_{i,j=1}^{d}\sigma_{ij}(t)\pi_{i}(t)dB_{t}^{j}.
\]
We denote $g(t,y,z):=-r(t)y-{\sum}_{i,j=1}^{d}(b_{i}(t)-r(t))\sigma_{ij}%
^{-1}(t)z_{j}$. Then, by the variable change $z_{j}(t)={\sum}_{i=1}^{d}%
\sigma_{ij}(t)\pi_{i}(t)$, the above equation is
\[
-dy(t)=g(t,y(t),z(t))dt-z(t)dB_{t}.
\]
We observe that function $g$ satisfies (\ref{h2.1}). It follows from the
existence and uniqueness theorem of BSDE (Theorem \ref{th2.1}) that for each
derivative $X\in L^{2}(\mathcal{F}_{T})$, there exists a unique solution
$(y(\cdot),z(\cdot))\in L_{\mathcal{F}}^{2}(0,T;R^{1+d})$ with the terminal
condition $y_{T}=X$. This meaning is significant: in order to replicate the
derivative $X$, the investor needs and only needs to invest $y(t)$ at the
present time $t$ and then, during the time interval $[t,T]$ and then to
perform the portfolio strategy $\pi_{i}(s)=\sigma_{ij}^{-1}(s)z_{j}(s)$.
Furthermore, by Comparison Theorem of BSDE, if he wants to replicate a
$X^{\prime}$ which is bigger than $X$, (i.e., $X^{\prime}\geq X$, a.s.,
$P(X^{\prime}\geq X)>0$), then he must pay more, i.e., there this no arbitrage
opportunity. This $y(t)$ is called the Black--Scholes price, or Black--Scholes
pricing mechanism, of $X$ at the time $t$. We define, as in (\ref{e2.4}),
$\mathbb{E}_{t,T}^{g}[X]=y_{t}$. We observe that the function $g$ satisfies
(b) of condition (\ref{h2.2}). It follows from Proposition \ref{p2.1} that
$\mathbb{E}_{t,T}^{g}[\cdot]$ satisfies properties (A1)--(A4) for
$\mathcal{F}_{t}$--consistent pricing mechanism.
\end{example}

\begin{example}
An very important problem is: if we know that the pricing mechanism of an
investigated agent is a $g$--pricing mechanism $\mathbb{E}^{g}$, how to find
this price generating function $g$. We now consider a case where $g$ depends
only on $z$, i.e., $g=g(z):\mathbf{R}^{d}\rightarrow\mathbf{R}$. In this case
we can find such $g$ by the following testing method. Let $\bar{z}%
\in\mathbf{R}^{d}$ be given. We denote $Y_{s}:=\mathbb{E}_{s,T}^{g}[\bar
{z}(B_{T}-B_{t})]$, $s\in\lbrack t,T]$, where $t$ is the present time. It is
the solution of the following BSDE
\[
Y_{s}=\bar{z}(B_{T}-B_{t})+\int_{s}^{T}g(Z_{u})du-\int_{s}^{T}Z_{u}%
dB_{u},\;s\in\lbrack t,T].
\]
It is seen that the solution is $Y_{s}=\bar{z}(B_{s}-B_{t})+\int_{s}^{T}%
g(\bar{z})ds$, $Z_{s}\equiv\bar{z}$. Thus
\[
\mathbb{E}_{t,T}^{g}[\bar{z}(B_{T}-B_{t})]=Y_{t}=g(\bar{z})(T-t),
\]
or
\begin{equation}
g(\bar{z})=(T-t)^{-1}\mathbb{E}_{t,T}^{g}[\bar{z}(B_{T}-B_{t})].
\label{e2.exm1}%
\end{equation}
Thus the function $g$ can be tested as follows: at the present time $t$, we
ask the investigated agent to evaluate $\bar{z}(B_{T}-B_{t})$. We thus get
$\mathbb{E}_{t,T}^{g}[\bar{z}(B_{T}-B_{t})]$. Then $g(\bar{z})$ is obtained by
(\ref{e2.exm1}).
\end{example}

\begin{remark}
The above test works also for the case $g:[0,\infty)\times\mathbf{R}%
^{d}\rightarrow\mathbf{R}$, or for a more general situation $g=\gamma
y+g_{0}(t,z)$.
\end{remark}

An interesting problem is, in general, how to find the price generating
function $g$ by a testing of the input--output behavior of $\mathbb{E}%
^{g}[\cdot]$? Let $b:R^{n}\longmapsto R^{n}$, $\bar{\sigma}:R^{n}\longmapsto
R^{n\times d}$ be two Lipschitz functions.
\[
X_{s}^{t,x}=x+\int_{t}^{s}b(X_{r}^{t,x})dr+\int_{t}^{s}\sigma(X_{r}%
^{t,x})dB_{r},\; \;s\geq t.
\]
The following result was obtained in Proposition 2.3 of \cite{BCHMP00}.

\begin{proposition}
We assume that the price generating function $g$ satisfies (\ref{h2.1}%
)\textbf{. }We also assume that, for each fixed $(y,z)$, $g(\cdot,y,z)\in
D_{\mathcal{F}}^{2}(0,T)$. Then for each $(t,x,p,y)\in\lbrack0,\infty)\times
R^{n}\times R^{n}\times R$, we have
\[
L^{2}\hbox{--}\lim_{\epsilon\rightarrow0}\frac{1}{\epsilon}[\mathbb{E}%
_{t,t+\epsilon}^{g}[y+p\cdot(X_{t+\epsilon}^{t,x}-x)]-y]=g(t,y,\sigma
^{T}(x)p)+p\cdot b(x).\label{limE}%
\]

\end{proposition}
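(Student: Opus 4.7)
Applying the SDE for $X^{t,x}$ to $\xi_{\epsilon}:=y+p\cdot(X_{t+\epsilon}^{t,x}-x)$ gives
\[
\xi_{\epsilon}=y+\int_{t}^{t+\epsilon}p\cdot b(X_{r}^{t,x})\,dr+\int_{t}^{t+\epsilon}p\cdot\sigma(X_{r}^{t,x})\,dB_{r},
\]
which identifies the natural candidate $\tilde{Y}_{s}:=y+\int_{t}^{s}p\cdot b(X_{r}^{t,x})\,dr+\int_{t}^{s}p\cdot\sigma(X_{r}^{t,x})\,dB_{r}$ with control $\tilde{Z}_{s}:=p\cdot\sigma(X_{s}^{t,x})$, satisfying $\tilde{Y}_{t+\epsilon}=\xi_{\epsilon}$ and $\tilde{Y}_{t}=y$. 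Let $(Y^{\epsilon},Z^{\epsilon})$ solve the BSDE on $[t,t+\epsilon]$ with generator $g$ and terminal $\xi_{\epsilon}$, so that $\mathbb{E}_{t,t+\epsilon}^{g}[\xi_{\epsilon}]-y=Y^{\epsilon}_{t}-y=\bar{Y}_{t}$, where $(\bar{Y},\bar{Z}):=(Y^{\epsilon}-\tilde{Y},\,Z^{\epsilon}-\tilde{Z})$ solves the BSDE
\[
\bar{Y}_{s}=\int_{s}^{t+\epsilon}\bigl[g(r,\tilde{Y}_{r}+\bar{Y}_{r},\tilde{Z}_{r}+\bar{Z}_{r})+p\cdot b(X_{r}^{t,x})\bigr]\,dr-\int_{s}^{t+\epsilon}\bar{Z}_{r}\,dB_{r}
\]
with zero terminal value. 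Its driver evaluated at $(\bar{Y},\bar{Z})=(0,0)$ is $L^{2}$-bounded uniformly in $r\in[t,t+\epsilon]$, so the standard Lipschitz-BSDE a priori estimate yields $E[\sup_{s\in[t,t+\epsilon]}|\bar{Y}_{s}|^{2}]+E[\int_{t}^{t+\epsilon}|\bar{Z}_{r}|^{2}\,dr]=O(\epsilon^{2})$.

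Taking $\mathcal{F}_{t}$-conditional expectation of the integral form and dividing by $\epsilon$,
\[
\frac{Y^{\epsilon}_{t}-y}{\epsilon}=\frac{1}{\epsilon}E\!\left[\int_{t}^{t+\epsilon}\bigl(g(r,\tilde{Y}_{r}+\bar{Y}_{r},\tilde{Z}_{r}+\bar{Z}_{r})+p\cdot b(X_{r}^{t,x})\bigr)\,dr\,\Big|\,\mathcal{F}_{t}\right].
\]
I would then use Lipschitz continuity of $g$ and of $b$ to replace the arguments by their values at $(t,y,\sigma^{T}(x)p)$ and at $x$, respectively. The resulting residual errors involve $|\tilde{Y}_{r}-y|$, $|\tilde{Z}_{r}-\sigma^{T}(x)p|=|p\cdot(\sigma(X_{r}^{t,x})-\sigma(x))|$, and $|b(X_{r}^{t,x})-b(x)|$, each $L^{2}$-bounded by $O(\sqrt{r-t})$ through standard SDE estimates, together with $|\bar{Y}_{r}|$ and $|\bar{Z}_{r}|$ of order $O(\epsilon)$ from the BSDE bound above. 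A short Cauchy--Schwarz and Fubini computation then shows that the total contribution of these error terms to $\frac{1}{\epsilon}\bar{Y}_{t}$ vanishes in $L^{2}(P)$ as $\epsilon\downarrow 0$.

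What remains is the conditional Lebesgue differentiation
\[
\frac{1}{\epsilon}E\!\left[\int_{t}^{t+\epsilon}g(r,y,\sigma^{T}(x)p)\,dr\,\Big|\,\mathcal{F}_{t}\right]\longrightarrow g(t,y,\sigma^{T}(x)p)\quad\text{in }L^{2}(P),
\]
the $p\cdot b(x)$ contribution being deterministic and hence trivial. This is precisely where the hypothesis $g(\cdot,y,z)\in D_{\mathcal{F}}^{2}(0,T)$ enters: right-continuity of $r\mapsto g(r,y,\sigma^{T}(x)p)$ at $t$ combined with the $L^{2}$-majorant $\sup_{r\leq T}|g(r,y,\sigma^{T}(x)p)|$ yields the limit via the conditional Jensen inequality and dominated convergence. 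I expect this last step to be the main obstacle; the other error terms reduce to routine SDE and Lipschitz-BSDE bookkeeping.
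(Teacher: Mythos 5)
Your argument is correct. Note that the paper itself gives no proof of this proposition -- it simply cites Proposition 2.3 of \cite{BCHMP00} -- so there is nothing internal to compare against; your proof is, however, exactly the standard argument used for such representation results: subtract the explicit forward semimartingale $\tilde{Y}$ with control $\tilde{Z}_{s}=\sigma^{T}(X_{s}^{t,x})p$ from the BSDE solution, bound the difference by the Lipschitz a priori estimate, and reduce everything to the conditional Lebesgue differentiation of $r\mapsto g(r,y,\sigma^{T}(x)p)$ at $t$. You also correctly identified the one place where the hypothesis $g(\cdot,y,z)\in D_{\mathcal{F}}^{2}(0,T)$ (right-continuity plus the $L^{2}$ majorant $\sup_{r\leq T}|g(r,y,z)|$) is genuinely needed, and your error bookkeeping via conditional Jensen and Cauchy--Schwarz closes all the remaining terms.
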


\section{Pricing an accumulated contingent claim with $\mathbb{E}^{g}%
$--pricing mechanisms\label{ass3}}

\begin{definition}
An accumulated contingent claim $(X,K)\in L^{2}(\mathcal{F}_{T})\times
D_{\mathcal{F}}^{2}(0,T)$ with maturity $T$ is a contract, according which the
writer have to pay the buyer $X$ at $T$ and. in each time interval
$[s,t]\subset\lbrack0,T]$, $K_{t}-K_{s}$.
\end{definition}

\begin{remark}
We understand that, in a real life, $X$ should be non negative and $K$ non
decreasing. But we will see that we can treat the general situation $(X,K)\in
L^{2}(\mathcal{F}_{T})\times D_{\mathcal{F}}^{2}(0,T)$, without any
mathematical obstacle.
\end{remark}

We consider the following BSDE on $[0,t]$ with given terminal condition $X\in
L^{2}(\mathcal{F}_{t})$ and an RCLL process $K\in D_{\mathcal{F}}^{2}%
(0,\infty)$:
\begin{equation}
Y_{s}=X+K_{t}-K_{s}+\int_{s}^{t}g(r,Y_{r},Z_{r})dr-\int_{s}^{t}Z_{r}%
dB_{r},\;s\in\lbrack0,t]. \label{tBSDE}%
\end{equation}
When $K$ is an increasing (resp. decreasing) process, the solution $(Y,Z)$ is
called a $g$--supersolution (resp. $g$--subsolution). These type of solutuions
appear very often in superhedging problem in the pricing of contingent claims
in an incomplete markets, where one need to find the smallest $g$%
--supersolutuion (resp. the largest $g$--subsolution) to replicate $X$. We
first recall the following basic results of BSDE.

\begin{theorem}
\label{th2.1}(\cite{Pardoux-Peng1990}, \cite{Peng1997a}) We assume
(\ref{h2.1}). Then there exists a unique solution $(Y,Z)\in L_{\mathcal{F}%
}^{2}(0,t;R\times R^{d})$ of BSDE (\ref{tBSDE}). We denote it by
\begin{equation}
(Y_{s}^{t,X,K},Z_{s}^{t,X,K})=(Y_{s},Z_{s}),\;s\in\lbrack0,t]. \label{e2.1a}%
\end{equation}
We have
\[
Y^{t,X,K}+K\in S_{\mathcal{F}}^{2}(0,t).
\]

\end{theorem}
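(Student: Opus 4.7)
The plan is to reduce the BSDE with the RCLL driving term $K$ to a standard Pardoux--Peng BSDE by a pointwise shift, and then apply the existence-uniqueness result cited in \cite{Pardoux-Peng1990,Peng1997a}.

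More precisely, I would set $\tilde{Y}_s := Y_s + K_s$ and observe that $(Y,Z)$ solves \eqref{tBSDE} on $[0,t]$ if and only if $(\tilde Y, Z)$ solves the classical (continuous) BSDE
\begin{equation*}
\tilde Y_s = (X+K_t) + \int_s^t \tilde g(r,\tilde Y_r, Z_r)\,dr - \int_s^t Z_r\,dB_r, \qquad s\in[0,t],
\end{equation*}
with shifted driver $\tilde g(r,y,z):=g(r,y-K_r,z)$ and terminal value $X+K_t\in L^2(\mathcal F_t)$. The equivalence is a direct algebraic manipulation once one subtracts $K_s$ from both sides of \eqref{tBSDE}.

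Next I would verify that $\tilde g$ satisfies hypothesis \eqref{h2.1}. The Lipschitz condition is immediate since shifting $y$ by the $\mathcal F_r$-measurable quantity $K_r$ preserves the Lipschitz constant $\mu$. For the integrability of $\tilde g(\cdot,0,0)=g(\cdot,-K_\cdot,0)$, I would use the Lipschitz estimate $|\tilde g(\cdot,0,0)|\le |g(\cdot,0,0)|+\mu|K_\cdot|$, and conclude from $g(\cdot,0,0)\in L^2_{\mathcal F}(0,T)$ together with $K\in D^2_{\mathcal F}(0,\infty)$ (which provides $E[\sup_{s\le t}|K_s|^2]<\infty$, hence $K\in L^2_{\mathcal F}(0,t)$) that $\tilde g(\cdot,0,0)\in L^2_{\mathcal F}(0,t)$. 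Having checked \eqref{h2.1} for $\tilde g$, the classical theorem yields a unique solution $(\tilde Y, Z)\in S^2_{\mathcal F}(0,t)\times L^2_{\mathcal F}(0,t;\mathbb R^d)$.

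Finally I would transfer back: define $Y_s := \tilde Y_s - K_s$. Then $(Y,Z)$ lies in $L^2_{\mathcal F}(0,t;\mathbb R\times\mathbb R^d)$ because both $\tilde Y$ and $K$ do, and by construction $Y+K=\tilde Y\in S^2_{\mathcal F}(0,t)$, which gives the continuity assertion. Uniqueness follows by running the reduction in the other direction: any two solutions of \eqref{tBSDE} yield two solutions of the standard BSDE for $\tilde Y$ with identical data, forcing them to agree by Pardoux--Peng uniqueness. The only mildly delicate point I anticipate is ensuring the integrability of $\tilde g(\cdot,0,0)$ -- this is precisely where the hypothesis $K\in D^2_{\mathcal F}$ (with the $L^2$ bound on the supremum, not merely $L^2$ in time) is used.
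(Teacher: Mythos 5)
Your reduction is exactly the paper's own argument: the paper sets $\bar{Y}_{s}:=Y_{s}+K_{s}$ and $\bar{g}(s,y,z):=g(s,y-K_{s},z)1_{[0,t]}(s)$ in (\ref{e-gbar}) and invokes the classical Pardoux--Peng result for the shifted equation (\ref{TBSDE}), just as you do (the only cosmetic difference being that the paper extends the equation to $[0,T]$ via the indicator rather than working on $[0,t]$). Your explicit verification that $\bar{g}(\cdot,0,0)=g(\cdot,-K_{\cdot},0)\in L_{\mathcal{F}}^{2}$ via the Lipschitz bound and $K\in D_{\mathcal{F}}^{2}$ is a correct filling-in of a step the paper merely asserts.
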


\begin{proof}
In \cite{Pardoux-Peng1990} (see also \cite{EPQ1997}), the result of BSDE is
for $t=T$and $K_{t}=\int_{0}^{t}\phi_{s}ds$ for some $\phi\in L_{\mathcal{F}%
}^{2}(0,T)$. The present situation can be treated by setting (see
\cite{Peng1997})
\begin{equation}%
\begin{array}
[c]{rl}%
\bar{Y}_{s} & :=Y_{s}+K_{s},\\
\bar{g}(s,y,z) & :=g(s,y-K_{s},z)1_{[0,t]}(s)
\end{array}
\label{e-gbar}%
\end{equation}
and considering the following equivalent BSDE
\begin{equation}
\bar{Y}_{s}=X+K_{t}+\int_{s}^{T}\bar{g}(r,\bar{Y}_{r},Z_{r})dr-\int_{s}%
^{T}Z_{r}dB_{r},\;s\in\lbrack0,T]. \label{TBSDE}%
\end{equation}
It is clear that $\bar{Y}_{s}\equiv X+K_{s}$, $Z_{s}\equiv0$on $[t,T]$. Since
$\bar{g}$ is a Lipschitz function with the same Lipschitz constant $\mu$and
\[
\bar{g}(\cdot,0,0)=g(\cdot,-K_{\cdot},0)1_{[0,t]}(\cdot)\in L_{\mathcal{F}%
}^{2}(0,T),
\]
thus, by \cite{Pardoux-Peng1990}, \cite{Peng1997a}, the BSDE (\ref{TBSDE}) has
a unique solution $(\bar{Y},Z){\normalsize .}$
\end{proof}

We introduce a new notation.

\begin{definition}
\label{d2.2}We denote, for $s\leq t$,
\begin{align}
\mathbb{E}_{s,t}^{g}[X;K_{\cdot}]  &  :=Y_{s}^{t,X,K}\label{e2.3}\\
\mathbb{E}_{s,t}^{g}[X]  &  :=\mathbb{E}_{s,t}^{g}[X;0]. \label{e2.4}%
\end{align}
This notion generalizes that of $\mathbb{E}_{s,t}^{g}[\cdot]$ in Definition
\ref{EgstX}. Clearly when (\ref{h2.2})--(a) is satisfied, we have
$\mathbb{E}_{s,t}^{g}[0]=\mathbb{E}_{s,t}^{g}[0;0]=0$, $0\leq s\leq t\leq T$.
\end{definition}

\begin{remark}
In fact, for each maturity $T\geq0$ the price process of the accumulated
contingent claim $(X,K)\in L^{2}(\mathcal{F}_{T})\times D_{\mathcal{F}}%
^{2}(0,T)$ produced by $\mathbb{E}^{g}[\cdot]$ is $\mathbb{E}_{s,T}^{g}[X;K]$,
$s\leq T$. We will prove it for a more general price mechanism $\mathbb{E}%
[\cdot]$, see the next subsection.
\end{remark}

\begin{remark}
About the notations $\mathbb{E}^{g}[\cdot]$. This notation was firstly
introduced in \cite{Peng1997} in the case where $g$ satisfies (\ref{h2.2}%
)--(b). In this situation it is easy to check that
\[
\mathbb{E}_{s,t}^{g}[X]\equiv\mathbb{E}_{s,T}^{g}[X],\; \forall0\leq s\leq
t\leq T.
\]
In other words, $\mathbb{E}^{g}$--is a nonlinear expectation, called
$g$--expectation. The general situation, i.e., without (\ref{h2.2}) was
introduced in \cite{Peng1997a} and \cite{Chen-Peng2001}.
\end{remark}

By the above existence and uniqueness theorem, we have for each $0\leq r\leq
s\leq t$ and for each $X\in L^{2}(\mathcal{F}_{t})$ and $K\in D_{\mathcal{F}%
}^{2}(0,T)$,
\begin{equation}
\mathbb{E}_{r,s}^{g}[\mathbb{E}_{s,t}^{g}[X;K_{\cdot}];K_{\cdot}%
]=\mathbb{E}_{r,t}^{g}[X;K_{\cdot}],\; \hbox{a.s.} \label{e2.5}%
\end{equation}
It is also easy to check that, with the notation $g_{-}(t,y,z):=-g(t,-y,-z)$
\begin{equation}
-\mathbb{E}_{s,t}^{g}[X;K_{\cdot}]=\mathbb{E}_{s,t}^{g_{-}}[-X;-K_{\cdot}].
\label{e2.5.1}%
\end{equation}

We will see that $\{ \mathbb{E}_{t,T}^{g}\left[  X\right]  \}_{0\leq t\leq T}%
$, $X\in L^{2}(\mathcal{F}_{T})$ form an $\mathcal{F}_{t}$--consistent
nonlinear pricing mechanism. The following monotonicity property is the
comparison theorem of BSDE.

\begin{theorem}
\label{th2.2}We assume (\ref{h2.1}). For each fixed maturity let for let
$(X,K)$ and $(X^{\prime},K^{\prime})$ $\in L^{2}(\mathcal{F}_{t})\times
D_{\mathcal{F}}^{2}(0,T)$ be two accumulated contingent claims satisfying
$X\geq X^{\prime}$ and that $K-K^{\prime}$ is an increasing process. Then we
have
\begin{equation}
\mathbb{E}_{s,t}^{g}[X;K_{\cdot}]\geq\mathbb{E}_{s,t}^{g}[X^{\prime};K_{\cdot
}^{\prime}],\ \ \forall s\leq t. \label{e2.6}%
\end{equation}
In particular,
\begin{equation}
\mathbb{E}_{s,t}^{g}[X]\geq\mathbb{E}_{s,t}^{g}[X^{\prime}]. \label{e2.7}%
\end{equation}
If $A\in D_{\mathcal{F}}^{2}(0,T)$ is an increasing process, then
\begin{equation}
\mathbb{E}_{s,t}^{g}[X;A_{\cdot}]\geq\mathbb{E}_{s,t}^{g}[X]. \label{e2.8}%
\end{equation}

\end{theorem}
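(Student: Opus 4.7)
The plan is to reduce the comparison to the standard linear-BSDE sign analysis. Let $(Y,Z)=(Y^{t,X,K},Z^{t,X,K})$ and $(Y',Z')=(Y^{t,X',K'},Z^{t,X',K'})$ be the two solutions given by Theorem \ref{th2.1}, and set $\tilde Y=Y-Y'$, $\tilde Z=Z-Z'$, $\tilde K=K-K'$ and $\tilde X=X-X'$. Subtracting the two BSDEs from (\ref{tBSDE}), the triple $(\tilde Y,\tilde Z)$ satisfies
\[
\tilde Y_s=\tilde X+(\tilde K_t-\tilde K_s)+\int_s^t\bigl[g(r,Y_r,Z_r)-g(r,Y'_r,Z'_r)\bigr]\,dr-\int_s^t\tilde Z_r\,dB_r,
\]
where by hypothesis $\tilde X\geq 0$ and $\tilde K$ is non-decreasing.

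Next I would linearize the driver by the standard recipe. Define the $\mathcal{F}_r$-adapted processes
\[
a_r=\frac{g(r,Y_r,Z_r)-g(r,Y'_r,Z_r)}{\tilde Y_r}\mathbf 1_{\{\tilde Y_r\neq 0\}},\qquad b_r^j=\frac{g(r,Y'_r,Z_r^{(j)})-g(r,Y'_r,Z_r^{(j-1)})}{\tilde Z_r^j}\mathbf 1_{\{\tilde Z_r^j\neq 0\}},
\]
where $Z_r^{(j)}$ interpolates between $Z_r$ and $Z'_r$ one coordinate at a time. By (\ref{h2.1}), $|a_r|\leq\mu$ and $|b_r|\leq\mu$, and the bracketed integrand equals $a_r\tilde Y_r+b_r\cdot\tilde Z_r$. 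So $\tilde Y$ solves a linear BSDE with non-negative terminal value and a non-decreasing cumulative term $\tilde K$.

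The key step is then the integrating-factor argument. Introduce
\[
\Gamma_s=\exp\!\Bigl(\int_0^s b_r\,dB_r-\tfrac12\int_0^s|b_r|^2\,dr+\int_0^s a_r\,dr\Bigr),
\]
which is strictly positive and, by the boundedness of $a,b$, belongs to every $L^p$. Applying It\^o's formula to $\Gamma_r\tilde Y_r$ cancels exactly the linear drift $a_r\tilde Y_r+b_r\cdot\tilde Z_r$, leaving
\[
\Gamma_s\tilde Y_s=\Gamma_t\tilde X+\int_s^t\Gamma_r\,d\tilde K_r-\int_s^t\bigl(\Gamma_r\tilde Z_r+\tilde Y_r\Gamma_r b_r\bigr)\,dB_r.
\]
Taking $\mathcal F_s$-conditional expectations kills the stochastic integral (the integrand is in $L^2_{\mathcal F}(0,t)$ by the $L^2$-estimates for BSDEs combined with the integrability of $\Gamma$), so
\[
\Gamma_s\tilde Y_s=E\Bigl[\Gamma_t\tilde X+\int_s^t\Gamma_r\,d\tilde K_r\;\Big|\;\mathcal F_s\Bigr]\geq 0,
\]
and dividing by $\Gamma_s>0$ yields (\ref{e2.6}). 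Taking $K=K'=0$ gives (\ref{e2.7}); taking $X'=X$, $K'=0$ and $K=A$ (non-decreasing) gives (\ref{e2.8}).

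The main technical obstacle is the justification that the stochastic integral above is a genuine martingale, since $\tilde Y,\tilde Z\in L^2_{\mathcal F}$ while $\Gamma$ only has $L^p$-bounds; this is handled by a Cauchy–Schwarz/BDG estimate using the exponential integrability of $\Gamma$ and the $L^2$-bounds on $(\tilde Y,\tilde Z)$ that follow from Theorem \ref{th2.1}. An alternative route is to absorb $\Gamma$ via a Girsanov change of measure under which $dB_r-b_r\,dr$ is a Brownian motion, reducing to a purely deterministic integrating factor $\exp(\int_0^\cdot a_r\,dr)$; either way the sign of the right-hand side is manifest from $\tilde X\geq 0$ and $d\tilde K\geq 0$.
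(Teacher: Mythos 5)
Your proof is correct and is exactly the classical comparison-theorem argument (linearize the generator difference with bounded adapted coefficients, apply the exponential integrating factor, take conditional expectations) that the paper itself invokes: its ``proof'' of Theorem \ref{th2.2} is just a citation of the classical comparison theorem for the case $K\equiv K'\equiv 0$ and of \cite{Peng1997a}, \cite{Peng2003b} for the general case. Your writeup simply executes that standard argument in full, correctly incorporating the Stieltjes term $\int_s^t\Gamma_r\,d\tilde K_r\geq 0$ and the martingale-justification via BDG/Girsanov, so there is nothing to object to.
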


\begin{proof}
\textbf{ }The case $K_{t}\equiv K_{t}^{\prime}\equiv0$ is the classical
comparison theorem of BSDE. The present general situation, see
\cite{Peng1997a} or \cite{Peng2003b}.
\end{proof}

We recall the special price generating function $g_{\mu}(y,z)$ defined in
(\ref{e2.9}). It is a very strong generating function. In fact we have

\begin{corollary}
\label{c2.1}The $g$--pricing mechanism $\mathbb{E}^{g}$ is dominated by
$\mathbb{E}^{g_{\mu}}$ in the following sense: for each $t\geq0$, let $(X,K)$
and $(X^{\prime},K^{\prime})$ $\in L^{2}(\mathcal{F}_{t})\times D_{\mathcal{F}%
}^{2}(0,T)$ be two accumulated contingent claims with maturity $t$, then we
have
\begin{equation}
\mathbb{E}_{s,t}^{g}[X;K_{\cdot}]-\mathbb{E}_{s,t}^{g}[X^{\prime};K_{\cdot
}^{\prime}]\leq\mathbb{E}_{s,t}^{g_{\mu}}[X-X^{\prime};K_{\cdot}-K_{\cdot
}^{\prime}] \label{e2.10}%
\end{equation}
where $\mu$ is the Lipschitz constant of $g$ given in (\ref{h2.1}). In
particular, since $g_{\mu}$ the generating function $g_{\mu}$ itself has
Lipschitz constant $\mu$, we have
\begin{equation}
\mathbb{E}_{s,t}^{g_{\mu}}[X;K_{\cdot}]-\mathbb{E}_{s,t}^{g_{\mu}}[X^{\prime
};K_{\cdot}^{\prime}]\leq\mathbb{E}_{s,t}^{g_{\mu}}[X-X^{\prime};K_{\cdot
}-K_{\cdot}^{\prime}] \label{e2.10aa}%
\end{equation}

\end{corollary}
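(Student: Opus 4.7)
The plan is to prove the first inequality by comparing the BSDE satisfied by the difference $Y^{t,X,K}-Y^{t,X',K'}$ with the $g_\mu$--BSDE whose terminal/driver data are $(X-X', K-K')$, and then to specialize to $g=g_\mu$ for (\ref{e2.10aa}).

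First, let $(Y,Z)$ and $(Y',Z')$ be the solutions of (\ref{tBSDE}) corresponding to $(X,K)$ and $(X',K')$ respectively, and set $\tilde Y:=Y-Y'$, $\tilde Z:=Z-Z'$. Subtracting the two BSDEs, $\tilde Y$ satisfies
\begin{equation*}
\tilde Y_s=(X-X')+(K_t-K'_t)-(K_s-K'_s)+\int_s^t\Delta_r\,dr-\int_s^t\tilde Z_r\,dB_r,
\end{equation*}
where $\Delta_r:=g(r,Y_r,Z_r)-g(r,Y'_r,Z'_r)$. The Lipschitz hypothesis (\ref{h2.1}) on $g$ gives the pointwise bound
\begin{equation*}
\Delta_r\le \mu|\tilde Y_r|+\mu|\tilde Z_r|=g_\mu(\tilde Y_r,\tilde Z_r).
\end{equation*}
Next, let $(\bar Y,\bar Z)$ be the solution of the BSDE (\ref{tBSDE}) with generator $g_\mu$, terminal condition $X-X'$, and accumulated process $K-K'$, so that by Definition \ref{d2.2}, $\bar Y_s=\mathbb{E}_{s,t}^{g_\mu}[X-X';K_\cdot-K'_\cdot]$. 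I would then invoke the comparison theorem for BSDEs with accumulated (RCLL) processes: since $\tilde Y$ and $\bar Y$ have identical terminal data and identical driving process $K-K'$, while the driver of $\bar Y$ dominates that of $\tilde Y$ pointwise in $(y,z)$, we obtain $\tilde Y_s\le \bar Y_s$ for all $s\le t$, which is exactly (\ref{e2.10}).

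To make the application of comparison fully rigorous, I would use the reduction of Theorem \ref{th2.1} (the substitution $\bar Y_s=Y_s+K_s$ and the modified generator in (\ref{e-gbar})) to absorb the accumulated processes into the drivers; this turns both BSDEs into standard Pardoux--Peng BSDEs on $[0,T]$ with common terminal condition $X-X'+(K_t-K'_t)$, after which the classical comparison theorem (e.g.\ \cite{Peng1997a}) applies directly, provided the pointwise domination of the drivers is preserved --- which it is, since the shifts involving $K-K'$ on both sides cancel in the inequality $\Delta_r\le g_\mu(\tilde Y_r,\tilde Z_r)$.

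The second inequality (\ref{e2.10aa}) is then an immediate specialization: the generator $g_\mu(y,z)=\mu|y|+\mu|z|$ is itself Lipschitz with constant $\mu$, so (\ref{h2.1}) holds for $g_\mu$ in place of $g$ with the same $\mu$; applying the first part with $g:=g_\mu$ gives (\ref{e2.10aa}). The main technical point is the use of a comparison theorem for two BSDEs with \emph{different} generators and \emph{different} accumulated processes; since Theorem \ref{th2.2} only compares solutions with a single generator $g$, the key step is carrying out the $(\bar Y,\bar g)$ reduction so that the difference BSDE can be set against the $g_\mu$--BSDE on equal footing, and this is where I expect the bulk of the technical verification to lie.
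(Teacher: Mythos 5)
Your proposal is correct and follows essentially the same route as the paper: subtract the two $g$--BSDEs, bound the driver difference by $g_\mu(\tilde Y_r,\tilde Z_r)$ via the Lipschitz condition, and compare against the $g_\mu$--BSDE with data $(X-X',K-K')$. The paper's only packaging difference is that it absorbs the nonnegative gap $g_\mu(\tilde Y_s,\tilde Z_s)-\Delta_s$ into a modified accumulated process $\hat K$ so that $K-K'-\hat K$ is increasing and Theorem \ref{th2.2} applies verbatim with the single generator $g_\mu$ --- which is precisely the rigorous form of the comparison step you anticipated needing.
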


\begin{proof}
By the definition, The pricing processes produces by $Y_{s}=\mathbb{E}%
_{s,t}^{g}[X;K_{\cdot}]$ and $Y_{s}^{\prime}=\mathbb{E}_{s,t}^{g}[X^{\prime
};K_{\cdot}^{\prime}]$ solve respectively the following BSDEs on $[0,t]$:
\begin{align*}
Y_{s}  &  =X+K_{t}-K_{s}+\int_{s}^{t}g(r,Y_{r},Z_{r})dr-\int_{s}^{t}%
Z_{r}dB_{r},\\
Y_{s}^{\prime}  &  =X^{\prime}+K_{t}^{\prime}-K_{s}^{\prime}+\int_{s}%
^{t}g(r,Y_{r}^{\prime},Z_{r}^{\prime})dr-\int_{s}^{t}Z_{r}^{\prime}dB_{r}.
\end{align*}
We denote $\hat{Y}=Y-Y^{\prime}$, $\hat{Z}=Z-Z^{\prime}$ and
\[
\hat{K}_{t}=K_{t}-K_{t}^{\prime}+\int_{0}^{t}[-g_{\mu}(\hat{Y}_{s},\hat{Z}%
_{s})+g(s,Y_{s},Z_{s})-g(s,Y_{s},Z_{s})]ds.
\]
Then $(\hat{Y},\hat{Z})$ solves a new BSDE
\[
\hat{Y}_{s}=X-X^{\prime}+\hat{K}_{t}-\hat{K}_{s}+\int_{s}^{t}g_{\mu}(r,\hat
{Y}_{r},\hat{Z}_{r})dr-\int_{s}^{t}\hat{Z}_{r}dB_{r}.
\]
We compare it to the BSDE
\[
\bar{Y}_{s}=X-X^{\prime}+(K-K^{\prime})_{t}-(K-K^{\prime})_{s}+\int_{s}%
^{t}g_{\mu}(r,\bar{Y}_{r},\bar{Z}_{r})dr-\int_{s}^{t}\bar{Z}_{r}dB_{r}.
\]
Since $d(K-K^{\prime}-\hat{K})_{s}\geq0$, thus, by comparison theorem, i.e.,
Theorem \ref{th2.2}, $\bar{Y}_{s}\geq\hat{Y}_{s}=Y_{s}-Y_{s}^{\prime}$. We
thus have (\ref{e2.10}).\
\end{proof}

It is very interesting to observe that, given a price generating function $g$,
$\mathbb{E}_{s,t}^{g}[\cdot;K_{\cdot}]$ is again an $\mathcal{F}_{t}%
$--consistent pricing mechanism:

\begin{proposition}
\label{p2.1}Let the generating function $g$ satisfies (\ref{h2.1}) and for a
fixed $K\in D_{\mathcal{F}}^{2}(0,\infty)$,
\begin{equation}
\mathbb{E}_{s,t}^{g}[X;K_{\cdot}]:X\in L^{2}(\mathcal{F}_{t})\rightarrow
L^{2}(\mathcal{F}_{s}),\;0\leq s\leq t<\infty\label{e2.11}%
\end{equation}
defined in (\ref{e2.3}) is an $\mathcal{F}_{t}$--consistent pricing mechanism,
i.e., it satisfies (A1)--(A4) of Definition \ref{d2.1}.
\end{proposition}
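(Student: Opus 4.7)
The plan is to verify axioms (A1)--(A4) of Definition \ref{d2.1} directly from the BSDE (\ref{tBSDE}) that defines $\mathbb{E}^g[\cdot;K_{\cdot}]$, keeping the cumulative process $K$ fixed throughout.

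First, (A1) is immediate from Theorem \ref{th2.2}: applied with $K' = K$, the cumulative term cancels on both sides and one reads off $\mathbb{E}_{s,t}^g[X;K_{\cdot}] \geq \mathbb{E}_{s,t}^g[X';K_{\cdot}]$ whenever $X \geq X'$ a.s. For (A2), setting $s=t$ in (\ref{tBSDE}) makes all integrals vanish, leaving $Y_t = X$, i.e., $\mathbb{E}_{t,t}^g[X;K_{\cdot}] = X$. For (A3) I would invoke uniqueness of the BSDE. If $(Y,Z)$ solves (\ref{tBSDE}) on $[0,t]$ with terminal $X$, its restriction to any subinterval $[r,s]\subset[0,t]$ satisfies
\[
Y_r = Y_s + K_s - K_r + \int_r^s g(u,Y_u,Z_u)\,du - \int_r^s Z_u\,dB_u,
\]
which is exactly the defining equation for $\mathbb{E}_{r,s}^g[Y_s;K_{\cdot}]$. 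Uniqueness in Theorem \ref{th2.1} then gives $\mathbb{E}_{r,s}^g[\mathbb{E}_{s,t}^g[X;K_{\cdot}];K_{\cdot}] = Y_r = \mathbb{E}_{r,t}^g[X;K_{\cdot}]$.

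The main obstacle I anticipate is (A4), because $g(\cdot,0,0)$ is not assumed to vanish here. I would reduce it via Proposition \ref{mA4} to the splicing identity
\[
\mathbb{E}_{s,t}^g[1_A X + 1_{A^C} X';K_{\cdot}] \;=\; 1_A\,\mathbb{E}_{s,t}^g[X;K_{\cdot}] + 1_{A^C}\,\mathbb{E}_{s,t}^g[X';K_{\cdot}],\qquad A\in\mathcal{F}_s.
\]
Let $(Y,Z)$ and $(Y',Z')$ be the BSDE solutions on $[s,t]$ for the terminals $X$ and $X'$ with the same $K$, and set
\[
\bar{Y}_r := 1_A Y_r + 1_{A^C} Y'_r,\qquad \bar{Z}_r := 1_A Z_r + 1_{A^C} Z'_r,\qquad r \in [s,t].
\]
Because $A \in \mathcal{F}_s \subset \mathcal{F}_r$ for every $r \geq s$, the indicators $1_A, 1_{A^C}$ can be pulled inside both the Lebesgue and the It\^o integrals, and the pointwise identity
\[
1_A\, g(r,Y_r,Z_r) + 1_{A^C}\, g(r,Y'_r,Z'_r) \;=\; g(r,\bar Y_r,\bar Z_r)
\]
holds trivially (on $A$ both sides equal $g(r,Y_r,Z_r)$, on $A^C$ both equal $g(r,Y'_r,Z'_r)$). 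Combining the two BSDEs and using these facts shows that $(\bar Y,\bar Z)$ is itself a solution on $[s,t]$ of the BSDE with terminal $1_A X + 1_{A^C} X'$ and cumulative payoff $K$. Theorem \ref{th2.1} then identifies $\bar Y_s$ with $\mathbb{E}_{s,t}^g[1_A X + 1_{A^C} X';K_{\cdot}]$, which is the desired splitting, hence (A4).

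The crux is thus the splicing step for (A4); it is precisely the $\mathcal{F}_s$-measurability of $A$ that lets the indicator be absorbed into the stochastic integral and into the driver $g$, so that the spliced pair $(\bar Y,\bar Z)$ is a \emph{bona fide} BSDE solution and uniqueness delivers the identity. The remaining axioms then follow mechanically from comparison and uniqueness.
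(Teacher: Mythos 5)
Your proposal is correct and follows essentially the same route as the paper: (A1) from the comparison theorem, (A2) by definition, (A3) from BSDE uniqueness via restriction to a subinterval, and (A4) by an indicator-plus-uniqueness argument on the BSDE. The only cosmetic difference is in (A4), where you verify the equivalent splicing identity of Proposition \ref{mA4} by pasting the two solutions together with $1_{A}$ and $1_{A^{C}}$, whereas the paper multiplies the two BSDEs for $X$ and $1_{A}X$ by $1_{A}$ and identifies them as solutions of the same equation with driver $1_{A}g$; both hinge on the same pointwise absorption of the $\mathcal{F}_{s}$--measurable indicator into the driver and into the stochastic integral, and neither needs $g(\cdot,0,0)=0$.
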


\begin{proof}
(A1) is given by (\ref{e2.6}). (A2) is clearly true by the definition. (A3) is
proved by (\ref{e2.5}). We now consider (A4), i.e., for each $t$ and $X\in
L^{2}(\mathcal{F}_{t})$, we have
\begin{equation}
1_{A}\mathbb{E}_{s,t}^{g}[X;K_{\cdot}]=1_{A}\mathbb{E}_{s,t}^{g}%
[1_{A}X;K_{\cdot}],\; \; \forall s\leq t,\ A\in\mathcal{F}_{t}. \label{e2.12}%
\end{equation}
as well as
\begin{equation}
1_{A}\mathbb{E}_{s,t}^{g}[X;K_{\cdot}]=\mathbb{E}_{s,t}^{g_{s,A}}%
[1_{A}X;K_{\cdot}^{s,A}],\; \; \forall A\in\mathcal{F}_{t}, \label{e2.12a}%
\end{equation}
where we set
\begin{align}
g_{s,A}(t,y,z)  &  :=1_{[0,s)}(t)g(t,y,z)+1_{[s,T]}(t)1_{A}%
g(t,y,z),\label{e2.12b}\\
K_{t}^{s,A}  &  :=1_{[0,s)}(t)K_{t}+1_{[s,T]}(t)1_{A}(K_{t}-K_{s}).
\label{e2.12c}%
\end{align}
We will give the proof of (\ref{e2.12}). The proof of (\ref{e2.12a}) is
similar. According to BSDE (\ref{tBSDE}) for each time $r\in\lbrack s,t]$,
$Y_{r}:=\mathbb{E}_{r,s}^{g}[X;K_{\cdot}]$ and $\bar{Y}_{r}:=\mathbb{E}%
_{s,t}^{g}[1_{A}X;K_{\cdot}]$ solve respectively
\[
Y_{r}=X+K_{t}-K_{r}+\int_{r}^{t}g(r,Y_{u},Z_{u})du-\int_{r}^{t}Z_{u}dB_{u},
\]
and
\[
\bar{Y}_{r}=1_{A}X+K_{t}-K_{r}+\int_{r}^{t}g(u,\bar{Y}_{u},\bar{Z}_{u}%
)du-\int_{r}^{t}\bar{Z}_{u}dB_{u}%
\]
We multiply $1_{A}$, $A\in\mathcal{F}_{s}$ on both sides of the above two
BSDEs. Since $1_{A}g(r,Y_{r},Z_{r})=1_{A}g(r,Y_{r}1_{A},Z_{r}1_{A})$, we have
\[
1_{A}Y_{r}=1_{A}X+1_{A}K_{t}-1_{A}K_{r}+\int_{r}^{t}1_{A}g(u1_{A}Y_{u}%
,1_{A}Z_{u})du-\int_{r}^{t}1_{A}Z_{u}dB_{u},
\]
and
\[
1_{A}\bar{Y}_{r}=1_{A}X+1_{A}K_{t}-1_{A}K_{r}+\int_{r}^{t}1_{A}g(u1_{A}%
Y_{u},1_{A}Z_{u})du-\int_{r}^{t}1_{A}\bar{Z}_{u}dB_{u}.
\]
It is clear that $1_{A}Y_{r}$ and $1_{A}\bar{Y}_{r}$ satisfy exactly the same
BSDE with the same terminal condition on $[s,t]$. By uniqueness of BSDE,
$1_{A}Y_{r}\equiv1_{A}\bar{Y}_{r}$ on $[s,t]$, i.e., $1_{A}\mathbb{E}%
_{s,t}^{g}[X;K_{\cdot}]\equiv1_{A}\mathbb{E}_{s,t}^{g}[1_{A}X;K_{\cdot}]$. The
proof is complete.
\end{proof}

If $Y$ is the data of a price process produced by some contingent claim
pricing mechanism, in many situations it is practically meaningful and
financially interesting to compare this data by using a given $g$--pricing
mechanism. One typical situation is that the price produced by $\mathbb{E}%
^{g}$ is weaker (resp. stronger). In this situation $Y$ is called a
$g$--supermartingale (resp. $g$--submartingale). Here the term
\textquotedblleft$g$--martingale\textquotedblright\ is a nonlinear, and
nontrivial generalization of the classical one, due to the similarity between
the classical conditional expectation $\mathbf{E}[\cdot|\mathcal{F}_{s}]$ and
$\mathbb{E}_{s,t}^{g}[\cdot]$: .

\begin{definition}
\label{d2.3}Let $K\in D_{\mathcal{F}}^{2}(0,\infty)$ be given. A process $Y\in
D_{\mathcal{F}}^{2}(0,\infty)$ is said to be an $\mathbb{E}^{g}[\cdot
;K]$--martingale (resp. $\mathbb{E}^{g}[\cdot;K]$--supermartingale,
$\mathbb{E}^{g}[\cdot;K]$--submartingale) if for each $0\leq s\leq t$%
\begin{equation}
\mathbb{E}_{s,t}^{g}[Y_{t};K_{\cdot}]=Y_{s}\hbox{, (resp. }\leq Y_{s}%
\hbox{, }\geq Y_{s}\hbox{)}. \label{e2.13}%
\end{equation}

\end{definition}

Clearly a $\mathbb{E}^{g}$--martingale $Y$ is a price process produced by this
pricing mechanism: $Y_{s}=\mathbb{E}_{s,T}^{g}[Y_{T}]$, $s\leq T$.

\begin{remark}
\label{r2.1}If $(y,z)\in L_{\mathcal{F}}^{2}(0,T;R\times R^{d})$ solves the
BSDE
\[
y_{s}=y_{t}+K_{t}-K_{s}+\int_{s}^{t}g(r,y_{r},z_{r})dr-\int_{s}^{t}z_{r}%
dB_{r},\hbox{
}s\leq t.
\]
It is clear that $(-y,-z)$ solves
\begin{align*}
-y_{s}  &  =-y_{t}+(-K_{t})-(-K_{s})\\
&  \ \ \ \ \ \ \ \ +\int_{s}^{t}[-g(r,-(-y_{r}),-(-z_{r}))dr-\int_{s}%
^{t}(-z_{r})dB_{r}.
\end{align*}
Thus, if $y$ is an $\mathbb{E}^{g}[\cdot;K_{\cdot}]$--martingale (resp.
$\mathbb{E}^{g}[\cdot;K]$--supermartingale, $\mathbb{E}^{g}[\cdot
;K]$--submartingale), then $-y$ is an $\mathbb{E}_{s,t}^{g_{\ast}}%
[\cdot;-K_{\cdot}]$--martingale (resp. $\mathbb{E}^{g_{\ast}}[\cdot
;K]$--submartingale, $\mathbb{E}^{g_{\ast}}[\cdot;K]$--supermartingale), where
we denote
\[
g_{\ast}(t,y,z):=-g(t,-y,-z).
\]
Therefor many results concerning $\mathbb{E}^{g}[\cdot;K]$--supermartingales
can be also applied to situations of submartingales.
\end{remark}

\begin{example}
Let $X\in L^{2}(\mathcal{F}_{T})$ and $A\in D_{\mathcal{F}}^{2}(0,T)$ be given
such that $A$ is an increasing process. By the monotonicity of $\mathbb{E}%
^{g}$, i.e., Theorem \ref{th2.2}, we have, for $t\in\lbrack0,T]$,
\[%
\begin{array}
[c]{l}%
Y_{t}:=\mathbb{E}_{t,T}^{g}[X]=\mathbb{E}_{t,T}^{g}%
[X;0]\hbox{\ is a }\mathbb{E}^{g}\hbox{--martingale,}\\
Y_{t}^{+}:=\mathbb{E}_{t,T}^{g}[X;A]\hbox{\ is a }\mathbb{E}^{g}%
\hbox{--supermartingale,}\\
Y_{t}^{-}:=\mathbb{E}_{t,T}^{g}[X;-A]\hbox{ is a }\mathbb{E}^{g}%
\hbox{--submartingale.}
\end{array}
\]

\end{example}

As in classical situations, an interesting and hard problem is the inverse
one: if $Y$ is an $\mathbb{E}^{g}$--supermartingale, can we find an increasing
and predictable process $A$ such that $Y_{t}\equiv\mathbb{E}_{t,T}^{g}[X;A]$?
This nonlinear version of Doob--Meyer's decomposition theorem will be stated
as follows. It plays a crucially important role in this paper.

We have the following $\mathbb{E}^{g}$--supermartingale decomposition theorem
of Doob--Meyer's type. This nonlinear decomposition theorem was obtained in
\cite{Peng1999}. But the formulation using the notation $\mathbb{E}_{t,T}%
^{g}[\cdot;A]$ is new. In fact we think this is the intrinsic formulation
since it becomes necessary in the more abstract situation of the $\mathbb{E}%
$--supermartingale decomposition theorem, i.e., Theorem \ref{m6.1} which can
considered as a generalization of the following result.

\begin{proposition}
\label{p2.3} We assume (\ref{h2.1})--(i) and (ii). Let $Y\in D_{\mathcal{F}%
}^{2}(0,T)$ be an $\mathbb{E}^{g}$--supermartingale. Then there exists a
unique increasing process $A\in D_{\mathcal{F}}^{2}(0,T)$ (thus predictable)
with $A_{0}=0$, such that
\begin{equation}
Y_{t}=\mathbb{E}_{t,T}^{g}[Y_{T};A],\; \forall0\leq t\leq T. \label{e2.15}%
\end{equation}

\end{proposition}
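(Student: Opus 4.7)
The plan is to follow Peng's penalization together with a monotone-limit argument: approximate $Y$ from below by classical BSDE solutions $y^{n}$, and extract the compensator $A$ as the limit of the penalty processes. For each $n\ge 1$ I would solve the classical BSDE
\[
y^{n}_{t}=Y_{T}+\int_{t}^{T}\bigl[g(s,y^{n}_{s},z^{n}_{s})+n(y^{n}_{s}-Y_{s})^{-}\bigr]\,ds-\int_{t}^{T}z^{n}_{s}\,dB_{s},
\]
whose driver is $(\mu+n)$-Lipschitz, so Theorem \ref{th2.1} gives a unique $(y^{n},z^{n})\in S_{\mathcal{F}}^{2}(0,T)\times L_{\mathcal{F}}^{2}(0,T;R^{d})$. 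Setting $A^{n}_{t}:=n\int_{0}^{t}(y^{n}_{s}-Y_{s})^{-}\,ds$ exhibits $y^{n}$ as $y^{n}_{t}=\mathbb{E}^{g}_{t,T}[Y_{T};A^{n}]$ with $A^{n}$ continuous, nondecreasing, $A^{n}_{0}=0$. Since the penalty $n(\,\cdot\,-Y)^{-}$ is monotone in $n$, Theorem \ref{th2.2} gives $y^{n}\le y^{n+1}$.

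The second step is the sandwich $\mathbb{E}^{g}_{t,T}[Y_{T}]\le y^{n}_{t}\le Y_{t}$. The lower bound is (\ref{e2.8}); the upper bound exploits the $\mathbb{E}^{g}$-supermartingale hypothesis. Localize by $\tau_{n}:=\inf\{s\ge t:y^{n}_{s}\le Y_{s}\}\wedge T$: on $\{y^{n}>Y\}$ the penalty vanishes, so $y^{n}$ solves a plain $g$-BSDE on $[t,\tau_{n}]$, hence $y^{n}_{t}=\mathbb{E}^{g}_{t,\tau_{n}}[y^{n}_{\tau_{n}}]\le\mathbb{E}^{g}_{t,\tau_{n}}[Y_{\tau_{n}}]\le Y_{t}$, the last inequality coming from the $\mathbb{E}^{g}$-supermartingale property at the stopping time $\tau_{n}$ (established from Definition \ref{d2.3} by dyadic approximation and (A3)). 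Granted the sandwich, reading $A^{n}_{T}$ off the BSDE at $t=0$,
\[
A^{n}_{T}=y^{n}_{0}-Y_{T}-\int_{0}^{T}g(s,y^{n}_{s},z^{n}_{s})\,ds+\int_{0}^{T}z^{n}_{s}\,dB_{s},
\]
together with standard BSDE a priori estimates (the sandwich already controls $\sup_{t}|y^{n}_{t}|$ in $L^{2}$ uniformly in $n$), yields $\sup_{n}\{E\int_{0}^{T}|z^{n}_{s}|^{2}\,ds+E(A^{n}_{T})^{2}\}<\infty$.

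Since $y^{n}\uparrow y\le Y$ and the penalty satisfies $E\int_{0}^{T}(Y_{s}-y^{n}_{s})^{+}\,ds=E[A^{n}_{T}]/n\to 0$, the limit $y\equiv Y$ $P\otimes dt$-a.e. Extracting a weakly convergent subsequence $z^{n_{k}}\rightharpoonup Z$ in $L_{\mathcal{F}}^{2}(0,T;R^{d})$ and identifying $A^{n_{k}}\to A$ pathwise via $A^{n}_{\cdot}=y^{n}_{0}-y^{n}_{\cdot}-\int_{0}^{\cdot}g(s,y^{n},z^{n})\,ds+\int_{0}^{\cdot}z^{n}\,dB$, the BSDE passes to the limit to yield
\[
Y_{t}=Y_{T}+(A_{T}-A_{t})+\int_{t}^{T}g(s,Y_{s},Z_{s})\,ds-\int_{t}^{T}Z_{s}\,dB_{s},
\]
i.e. $Y_{t}=\mathbb{E}^{g}_{t,T}[Y_{T};A]$, with $A\in D_{\mathcal{F}}^{2}(0,T)$ nondecreasing and $A_{0}=0$. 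Uniqueness follows by subtracting two such decompositions: the resulting identity displays a predictable finite-variation process as a continuous stochastic integral, forcing $Z=Z'$ (via quadratic variation) and then $A=A'$.

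The \emph{principal obstacle} is the passage to the limit inside the nonlinear driver $g(\cdot,y^{n},z^{n})$: weak convergence of $z^{n}$ is not enough when $g$ is Lipschitz but nonlinear. This is precisely the content of Peng's \emph{monotone limit theorem} for BSDEs, which upgrades the weak convergence of $(z^{n})$ to strong convergence in $L^{2}_{\mathcal{F}}$ by exploiting the monotonicity $y^{n}\uparrow Y$ together with the uniform bound on $(A^{n})$, in an It\^o computation on $|y^{n}-y^{m}|^{2}$ where the sign of $d(A^{n}-A^{m})$ is decisive.
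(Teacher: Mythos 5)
The paper offers no proof of Proposition \ref{p2.3}, instead citing \cite{Peng1999}, and your penalization scheme with $A^{n}_{t}=n\int_{0}^{t}(y^{n}_{s}-Y_{s})^{-}ds$, the sandwich $\mathbb{E}^{g}_{t,T}[Y_{T}]\leq y^{n}_{t}\leq Y_{t}$, the uniform bounds on $(z^{n},A^{n})$, and the monotonic limit theorem to pass to the limit in the nonlinear driver is exactly the argument of that reference; the uniqueness argument via quadratic variation is also correct. The only substantive sub-step you compress is the optional sampling inequality $\mathbb{E}^{g}_{t,\tau_{n}}[Y_{\tau_{n}}]\leq Y_{t}$ for the RCLL supermartingale $Y$ at the stopping time $\tau_{n}$, which you correctly flag as requiring the dyadic approximation together with (A3) and the $L^{2}$-continuity of the BSDE solution map.
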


\begin{corollary}
\label{c2.2}Let $K\in D_{\mathcal{F}}^{2}(0,T)$ be given and let $Y\in
D_{\mathcal{F}}^{2}(0,T)$ be an $\mathbb{E}^{g}[\cdot;K]$--supermartingale in
the following sense
\begin{equation}
\mathbb{E}_{s,t}^{g}[Y_{t};K]\leq Y_{s},\; \forall0\leq s\leq t\leq T.
\label{e2.16}%
\end{equation}
Then there exists a unique increasing process $A\in D_{\mathcal{F}}^{2}(0,T)$
with $A_{0}=0$, such that
\begin{equation}
Y_{t}=\mathbb{E}_{t,T}^{g}[Y_{T};K+A],\; \forall0\leq t\leq T. \label{e2.17}%
\end{equation}

\end{corollary}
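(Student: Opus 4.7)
The plan is to reduce the statement to the $K=0$ case, i.e.\ to Proposition \ref{p2.3}, via the change of variable $\bar{Y}_s:=Y_s+K_s$ that was used in the proof of Theorem \ref{th2.1}. Define the shifted generator
$$\bar{g}(r,y,z):=g(r,y-K_r,z).$$
Since $K\in D_{\mathcal{F}}^2(0,T)$ and $g$ is Lipschitz, $\bar{g}$ satisfies (\ref{h2.1}) with the same Lipschitz constant $\mu$ and with $\bar{g}(\cdot,0,0)=g(\cdot,-K_\cdot,0)\in L_{\mathcal{F}}^2(0,T)$.

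The key identity I would establish first is
$$\mathbb{E}_{s,t}^{g}[X;K_\cdot]=\mathbb{E}_{s,t}^{\bar{g}}[X+K_t]-K_s,\qquad X\in L^2(\mathcal{F}_t),\ s\leq t,$$
which comes directly from (\ref{tBSDE}): if $(y,z)$ is the solution of the $g$--BSDE with terminal $X$ and dividend $K$, then $\bar{y}_r:=y_r+K_r$ satisfies the dividend-free BSDE with terminal $X+K_t$ and generator $\bar{g}$, so $\bar{y}_s=\mathbb{E}_{s,t}^{\bar{g}}[X+K_t]$. Applied to $X=Y_t$, the hypothesis $\mathbb{E}_{s,t}^{g}[Y_t;K]\leq Y_s$ becomes $\mathbb{E}_{s,t}^{\bar{g}}[\bar{Y}_t]\leq\bar{Y}_s$, so $\bar{Y}\in D_{\mathcal{F}}^2(0,T)$ is an $\mathbb{E}^{\bar{g}}$--supermartingale.

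Next I would invoke Proposition \ref{p2.3} for the generator $\bar{g}$ on $\bar{Y}$: there exists a unique increasing process $A\in D_{\mathcal{F}}^2(0,T)$ with $A_0=0$ such that $\bar{Y}_t=\mathbb{E}_{t,T}^{\bar{g}}[\bar{Y}_T;A]$. Translating back via the same change of variable (now with dividend $A$ on the right-hand side, which combines additively with $K$ after we reintroduce $-K_r$ inside $g$) yields
$$\mathbb{E}_{t,T}^{\bar{g}}[\bar{Y}_T;A]=\mathbb{E}_{t,T}^{g}[Y_T;K+A]+K_t,$$
hence $Y_t=\bar{Y}_t-K_t=\mathbb{E}_{t,T}^{g}[Y_T;K+A]$, which is (\ref{e2.17}).

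Uniqueness of $A$ is inherited from Proposition \ref{p2.3}: any other increasing process $A'\in D_{\mathcal{F}}^2(0,T)$ with $A'_0=0$ giving (\ref{e2.17}) would, on applying $\bar{\cdot}$, produce another decomposition $\bar{Y}_t=\mathbb{E}_{t,T}^{\bar{g}}[\bar{Y}_T;A']$, contradicting the uniqueness in the dividend-free case. The only non-routine point is the clean book-keeping in the identity $\mathbb{E}_{s,t}^{g}[X;K]=\mathbb{E}_{s,t}^{\bar{g}}[X+K_t]-K_s$; once this is in place, the corollary is a mechanical transfer from Proposition \ref{p2.3}, so I do not expect a genuine obstacle beyond making sure the change of variable commutes with the addition of the new increasing dividend $A$.
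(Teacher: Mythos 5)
Your proposal is correct and follows essentially the same route as the paper: the paper's proof also uses the change of variable $\bar{Y}_s=Y_s+K_s$, $\bar{g}(s,y,z)=g(s,y-K_s,z)$ from (\ref{e-gbar}), the identity $\mathbb{E}_{s,t}^{g}[Y_t;K]+K_s=\mathbb{E}_{s,t}^{\bar{g}}[Y_t+K_t]$, and then Proposition \ref{p2.3} applied to $Y+K$. Your book-keeping of the back-translation, including how the new dividend $A$ combines additively with $K$, matches the paper's equivalence between (\ref{e2.20}) and (\ref{e2.17}).
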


\begin{proof}
By the notations of (\ref{e-gbar}) with $t=T$, we have
\begin{equation}
\mathbb{E}_{s,t}^{g}[Y_{t};K]+K_{s}=\mathbb{E}_{s,t}^{\bar{g}}[Y_{t}+K_{t}].
\label{e2.18}%
\end{equation}
It follows that (\ref{e2.16}) is equivalent to
\begin{equation}
\mathbb{E}_{s,t}^{\bar{g}}[Y_{t}+K_{t}]\leq Y_{s}+K_{s},\; \forall0\leq s\leq
t\leq T. \label{e2.19}%
\end{equation}
In other words, $Y+K$ is an $\mathbb{E}^{\bar{g}}$--supermartingale in the
sense of (\ref{e2.13}). By the above supermartingale decomposition theorem,
Proposition \ref{p2.3}, there exists an increasing process $A\in
D_{\mathcal{F}}^{2}(0,T)$ with $A_{0}=0$, such that
\begin{equation}
Y_{t}+K_{t}=\mathbb{E}_{t,T}^{\bar{g}}[Y_{T}+K_{T};A],\; \forall0\leq t\leq T,
\label{e2.20}%
\end{equation}
or, equivalently (\ref{e2.17}).
\end{proof}

\section{Characterization of $g$-pricing mechanism by its generating function
$g$}

For a price mechanism $\mathbb{E}^{g}[\cdot]$, it is important to distinct her
selling price and buying price. We now fix that $\mathbb{E}^{g}[\cdot]$ is the
selling price. A rational price mechanism must be%
\[
\mathbb{E}^{g}[X]\geq-\mathbb{E}^{g}[-X].
\]
It also possesses some other properties, such as convexity, or moreover,
sub-additivity. See \cite{ADEH1999}, \cite{El-Bar}, \cite{El-Bar2005},
\cite{Chen-Epstein2002}, \cite{EQ1995}, \cite{EPQ1997}, \cite{Fo-Sc},
\cite{Fritteli}, \cite{Peng-Yang}, \cite{rosazza}, etc. for the ecomomic
meanings. An interesting question is: what the corresponding generating
function $g$ will behaves if the the $\mathbb{E}^{g}$ satisfies the above
properties. We will see that $g$ perfectly reflects the behavior of
$\mathbb{E}^{g}$. This will be very important for using data of the pricing
processes to statistically find $g$. We begin with introducing some technique lemmas.

Let a functions $f:(\omega,t,y,z)\in\Omega\times\lbrack0,T]\times R\times
R^{d}\rightarrow R$ satisfy the same Lipschitz condition (\ref{h2.1}) as for
$g$. For each $n=1,2,3,\cdots$, we set
\begin{align}
f^{n}(s,y,z)  &  :=\sum_{i=0}^{2^{n}-1}f(s,Y_{s}^{t_{i}^{n},y},z)1_{[t_{i}%
^{n},t_{i+1}^{n})}(s),\;s\in\lbrack0,T]\label{efnyz}\\
t_{i}^{n}  &  =i2^{-n}T,i=0,1,2,\cdots,2^{n}.
\end{align}
It is clear that $f^{n}$ is an $\mathcal{F}_{t}$--adapted process.

For each fixed $(t,y,z)\in\lbrack0,T]\times R\times R^{d}$, we consider the
following SDE of It\^{o}'s type defined on $[t,T]$:
\begin{equation}
Y_{s}^{t,y,z}=y-\int_{t}^{s}f(r,Y_{r}^{t,y,z},z)dr+z(B_{s}-B_{t}) \label{Ytyz}%
\end{equation}
We have the following classical result of It\^{o}'s SDE.

\begin{lemma}
\label{em7.2a}We assume that $f$ satisfies the same Lipschitz condition
(\ref{h2.1}) as for $g$. (\ref{h2.1}). Then there exists a constant $C$,
depending only on $\mu$, $T$ and $E\int_{0}^{T}|f(\cdot,0,0)|^{2}ds$, such
that, for each $(t,y,z)\in\lbrack0,T]\times R\times R^{d}$, we have
\begin{equation}
E[|Y_{s}^{t,y,z}-y|^{2}]\leq C(|y|^{2}+|z|^{2}+1)(s-t),\; \forall s\in\lbrack
t,T]. \label{ee7.15}%
\end{equation}

\end{lemma}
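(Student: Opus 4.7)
The plan is to apply the standard \emph{a priori} $L^2$--estimates for It\^{o} SDEs with Lipschitz coefficients. First I would note that, since $f$ satisfies the Lipschitz condition in (\ref{h2.1}) and the diffusion coefficient is the constant $z$, classical SDE theory (e.g.\ Picard iteration plus Gronwall's lemma) yields a unique solution $Y^{t,y,z}\in S_{\mathcal{F}}^{2}(t,T)$ together with a global bound of the form
$$
E\sup_{s\in[t,T]}|Y_{s}^{t,y,z}|^{2}\leq C_{0}\bigl(|y|^{2}+|z|^{2}+E\textstyle\int_{0}^{T}|f(r,0,0)|^{2}\,dr+1\bigr),
$$
where $C_{0}$ depends only on $\mu$ and $T$. (One writes $|f(r,Y_r,z)|^{2}\le 3|f(r,0,0)|^{2}+3\mu^{2}|Y_r|^{2}+3\mu^{2}|z|^{2}$, takes $E\sup_{r\le s}|\cdot|^{2}$ of the SDE using Doob plus Cauchy--Schwarz, and iterates via Gronwall.) I would then absorb $E\int_{0}^{T}|f(r,0,0)|^{2}dr$ into the constant, since by hypothesis $C$ is allowed to depend on it.

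Next, I would use the SDE itself to express the increment:
$$
Y_{s}^{t,y,z}-y=-\int_{t}^{s}f(r,Y_{r}^{t,y,z},z)\,dr+z(B_{s}-B_{t}),\qquad s\in[t,T].
$$
Squaring and using $(a+b)^{2}\le 2a^{2}+2b^{2}$, Cauchy--Schwarz on the Lebesgue integral, and the It\^{o} isometry on the constant-diffusion term gives
$$
E|Y_{s}^{t,y,z}-y|^{2}\leq 2(s-t)\int_{t}^{s}E|f(r,Y_{r}^{t,y,z},z)|^{2}\,dr+2|z|^{2}(s-t).
$$
The Lipschitz bound on $f$ together with the a priori sup-estimate from the first step bounds $E|f(r,Y_{r}^{t,y,z},z)|^{2}$ by a constant multiple of $|y|^{2}+|z|^{2}+1$, uniformly in $r\in[t,T]$. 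Plugging this in and using $(s-t)^{2}\le T(s-t)$ yields the desired inequality (\ref{ee7.15}).

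There is no real obstacle here: the lemma is a routine linear-in-$(s-t)$ SDE moment estimate. The only point requiring care is to keep track of how $|y|$ and $|z|$ enter the constants, so that the final bound is linear in $|y|^{2}+|z|^{2}+1$ times $(s-t)$ rather than, say, exponential in $T$ multiplied awkwardly onto $|y|^{2}$. This is ensured by first absorbing the Gronwall exponential into $C_{0}$ (which depends only on $\mu, T$) and only afterwards exploiting the extra factor $(s-t)$ coming from the short-interval estimate of the increment.
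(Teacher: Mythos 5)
Your proposal is correct and follows essentially the same route as the paper: the paper likewise first invokes the classical a priori bound $E\int_{t}^{T}|f(r,Y_{r}^{t,y,z},z)|^{2}dr\leq C_{0}(|y|^{2}+|z|^{2}+1)$ (with $C_{0}$ depending on $\mu$, $T$ and $E\int_{0}^{T}|f(\cdot,0,0)|^{2}ds$), and then estimates the increment via $(a+b)^{2}\leq2a^{2}+2b^{2}$, Cauchy--Schwarz on the drift integral, and the It\^{o} isometry for the constant-diffusion term. Your extra care in deriving the a priori sup-estimate by Gronwall, which the paper simply cites as classical, is a harmless elaboration of the same argument.
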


\begin{proof}
\textbf{ }It is classic that $E\int_{0}^{T}|f(r,Y_{r}^{t,y,z},z)|dr^{2}\leq
C_{0}(|y|^{2}+|z|^{2}+1)$, where $C_{0}$ depends only on $\mu$, $T$ and
$E\int_{0}^{T}|f(\cdot,0,0)|^{2}ds$. We then have
\begin{align*}
E[|Y_{s}^{t,y,z}-y|^{2}]  &  \leq2E[|\int_{t}^{s}f(r,Y_{r}^{t,y,z}%
,z)dr|^{2}]+2|z|^{2}(s-t)\\
\  &  \leq2E[|\int_{t}^{s}f(r,Y_{r}^{t,y,z},z)dr|^{2}]+2|z|^{2}(s-t)\\
\  &  \leq2E[\int_{t}^{s}|f(r,Y_{r}^{t,y,z},z)|^{2}dr](t-s)+2|z|^{2}(s-t)\\
\  &  \leq C(|y|^{2}+|z|^{2}+1)(s-t).
\end{align*}

\end{proof}

\begin{lemma}
\label{FntoF}For each fixed $(y,z)\in R\times R^{d}$, $\{f^{n}(\cdot
,y,z)\}_{n=1}^{\infty}$ converges to $f(\cdot,y,z)$ in $L_{\mathcal{F}}%
^{2}(0,T)$, i.e.,
\begin{equation}
\lim_{n\rightarrow\infty}E\int_{0}^{T}|f^{n}(s,y,z)-f(s,y,z)|^{2}ds=0.
\label{fntof}%
\end{equation}

\end{lemma}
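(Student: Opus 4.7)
The plan is to use the Lipschitz property of $f$ in the $y$-variable to reduce the convergence to a statement about how close $Y_s^{t_i^n, y}$ stays to its initial value $y$ over the short interval $[t_i^n, t_{i+1}^n)$, which is exactly controlled by Lemma~\ref{em7.2a}.

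First I would rewrite the integrand pointwise on each subinterval. For $s \in [t_i^n, t_{i+1}^n)$, the Lipschitz hypothesis (\ref{h2.1}) gives
\[
|f^n(s,y,z) - f(s,y,z)| = |f(s, Y_s^{t_i^n, y}, z) - f(s, y, z)| \leq \mu \, |Y_s^{t_i^n, y} - y|.
\]
Hence
\[
E\int_0^T |f^n(s,y,z) - f(s,y,z)|^2 \, ds \leq \mu^2 \sum_{i=0}^{2^n-1} \int_{t_i^n}^{t_{i+1}^n} E\bigl[|Y_s^{t_i^n, y} - y|^2\bigr] \, ds.
\]

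Next I would apply the estimate (\ref{ee7.15}) from Lemma~\ref{em7.2a}, which yields $E[|Y_s^{t_i^n, y} - y|^2] \leq C(|y|^2 + |z|^2 + 1)(s - t_i^n)$ on each $[t_i^n, t_{i+1}^n)$. Substituting, the right-hand side is bounded by
\[
\mu^2 C(|y|^2 + |z|^2 + 1) \sum_{i=0}^{2^n-1} \int_{t_i^n}^{t_{i+1}^n} (s - t_i^n) \, ds = \frac{\mu^2 C(|y|^2 + |z|^2 + 1)}{2} \cdot 2^n \cdot (2^{-n}T)^2,
\]
which equals $\tfrac{1}{2} \mu^2 C(|y|^2 + |z|^2 + 1) T^2 \cdot 2^{-n}$ and therefore tends to zero as $n \to \infty$. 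This gives (\ref{fntof}).

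There is no real obstacle: once one sees that the piecewise-constant approximation only shifts the first argument of $f$ from $y$ to $Y_s^{t_i^n, y}$, the whole statement collapses to the uniform-in-$i$ SDE estimate of Lemma~\ref{em7.2a} together with the Lipschitz bound. The only point requiring a little care is to write $f^n(s,y,z) - f(s,y,z)$ as a sum over the dyadic partition and verify that the constant $C$ in (\ref{ee7.15}) does not depend on $i$ or $n$, which is built into the statement of the lemma since it depends only on $\mu$, $T$ and $E\int_0^T |f(\cdot,0,0)|^2 ds$.
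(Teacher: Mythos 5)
Your proof is correct and follows essentially the same route as the paper: the pointwise Lipschitz bound $|f^{n}(s,y,z)-f(s,y,z)|\leq\mu|Y_{s}^{t_{i}^{n},y,z}-y|$ on each dyadic subinterval combined with the estimate (\ref{ee7.15}) of Lemma \ref{em7.2a}. The only cosmetic difference is that the paper bounds $(s-t_{i}^{n})$ by $2^{-n}T$ before integrating, whereas you integrate $(s-t_{i}^{n})$ exactly; both yield an $O(2^{-n})$ bound and hence the convergence (\ref{fntof}).
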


\begin{proof}
For each $s\in\lbrack0,T)$, there are some integers $i\leq2^{n}-1$ such that
$s\in\lbrack t_{i}^{n},t_{i+1}^{n})$. We have, by (\ref{ee7.15})
\begin{align*}
E[|f^{n}(s,y,z)-f(s,y,z)|^{2}]  &  =E[|f(s,Y_{s}^{t_{i}^{n},y}%
,z)-f(s,y,z)|^{2}]\\
\  &  \leq\mu^{2}E[|Y_{s}^{t_{i}^{n},y,z}-y|^{2}]\\
\  &  \leq\mu^{2}C(|y|^{2}+|z|^{2}+1)2^{-n}T.
\end{align*}
Thus $\{f^{n}(\cdot,y,z)\}_{n=1}^{\infty}$ converges to $f(\cdot,y,z)$ in
$L_{\mathcal{F}}^{2}(0,T)$.
\end{proof}

\begin{lemma}
\label{Lem-f-f-}Let $f:(\omega,t,y,z)\in\Omega\times\lbrack0,T]\times R\times
R^{d}\rightarrow R$ satisfies the same Lipschitz condition (\ref{h2.1}) as for
$g$. If for each $(t,y,z)\in\lbrack0,T]\times R\times R^{d}$, we have
\[
f(\omega,r,Y_{r}^{t,y,z},z)\geq0\ \ \text{(resp. }=0\text{)},\;(\omega
,r)\in\lbrack t,T]\times\Omega\text{, .}dr\times dP\text{-a.s..}%
\]
Then We then, for each $(y,z)\in R\times R^{d}$
\begin{equation}
f(\omega,t,y,z)\geq0,\; \text{(resp. }=0\text{)},\ (\omega,t)\in
\lbrack0,T]\times\Omega\text{, .}dt\times dP\text{-a.s..} \label{f-f-}%
\end{equation}

\end{lemma}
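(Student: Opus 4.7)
The natural strategy is to exploit the piecewise-constant-in-initial-time approximations $f^{n}$ introduced in (\ref{efnyz}), since these are built precisely from the quantities on which the hypothesis provides information.

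Fix $(y,z)\in R\times R^{d}$. For each integer $n\geq 1$ and each index $i\in\{0,\ldots,2^{n}-1\}$, apply the hypothesis at the starting time $t=t_{i}^{n}$: on $[t_{i}^{n},T]\times\Omega$ we have $f(r,Y_{r}^{t_{i}^{n},y,z},z)\geq 0$, $dr\times dP$--a.s. Restricting each such inequality to the sub-interval $[t_{i}^{n},t_{i+1}^{n})$ and summing over the $i$'s (whose intervals are pairwise disjoint), the definition (\ref{efnyz}) yields
\begin{equation*}
f^{n}(s,\omega,y,z)\geq 0,\qquad ds\times dP\text{--a.s.\ on }[0,T]\times\Omega.
\end{equation*}
Now Lemma~\ref{FntoF} gives $f^{n}(\cdot,y,z)\to f(\cdot,y,z)$ in $L_{\mathcal{F}}^{2}(0,T)$; extracting a subsequence that converges $ds\times dP$--a.s., the a.s.\ limit of a sequence of nonnegative functions is nonnegative, which is the desired inequality $f(s,\omega,y,z)\geq 0$.

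The equality case then follows by applying the just-proved inequality version to both $f$ and $-f$: the function $-f$ satisfies the same Lipschitz bound (\ref{h2.1}), and the assumed identity $f(\cdot,Y_{\cdot}^{t,y,z},z)=0$ a.s.\ is equivalent to the two inequalities $\pm f(\cdot,Y_{\cdot}^{t,y,z},z)\geq 0$ a.s., so one obtains $f(\cdot,y,z)\geq 0$ and $-f(\cdot,y,z)\geq 0$ jointly. No essential obstacle is anticipated: the whole argument rests on the $L^{2}$-approximation supplied by Lemma~\ref{FntoF}, which is itself a soft consequence of the continuity-of-flow estimate in Lemma~\ref{em7.2a}. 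The only book-keeping point worth noting is that the exceptional $dr\times dP$--null set arising at each pair $(n,i)$ must be discarded jointly across the countably many $(n,i)$, which is automatic by countable subadditivity.
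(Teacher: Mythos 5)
Your proposal is correct and follows essentially the same route as the paper's own proof: fix $(y,z)$, observe that the hypothesis applied at each dyadic starting time $t_{i}^{n}$ forces $f^{n}(\cdot,y,z)\geq 0$ a.e., and then pass to the limit via the $L_{\mathcal{F}}^{2}(0,T)$ convergence of Lemma~\ref{FntoF}. Your added bookkeeping (a.s.\ convergent subsequence, countable union of null sets, and reducing the equality case to the two inequalities for $\pm f$) merely makes explicit what the paper leaves implicit.
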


\begin{proof}
Let us fix $y$ and $z$. We define $f^{n}(s,y,z)$ as in (\ref{efnyz}). It is
clear that,
\[
f^{n}(r,y,z)\geq0,\ \text{(resp. }=0\text{),}\;(\omega,r)\in\lbrack
0,T]\times\Omega\text{, .}dr\times dP\text{ a.s.}%
\]
But from Lemma \ref{FntoF} we have $f^{n}(\cdot,y,z)\rightarrow f(\cdot,y,z)$,
in $L_{\mathcal{F}}^{2}(0,T)$ as $n\rightarrow\infty$. We thus have
\textbf{\ref{f-f-}}.
\end{proof}

We need the following inverse comparison theorem which generalizes the results
of \cite{BCHMP00} and \cite{CHMP2001} in the sense that $g$ does not need to
be continuous, or right continuous, in time. We thus finally obtain an
equivalent conditions under the standard condition (\ref{h2.1}) of BSDE. We
notice that this result was obtained by already by \cite{JiangL2004a} and
\cite{JiangL2004b}. Here we will use a very different method that will be
applied in the proof of our main theorem.

\begin{proposition}
\label{invcompar}Let $g$, $\bar{g}:$ $(\omega,t,y,z)\in\Omega\times
\lbrack0,T]\times R\times R^{d}\rightarrow R$ be two price generating
functions satisfying (\ref{h2.1}). Then the following two conditions are
equivalent: \newline\textbf{(i) }$g(\omega,t,y,z)\geq\bar{g}(\omega,t,y,z)$,
$\forall(y,z)\in R\times R^{d}$, $dP\times dt\ $a.s. \newline\textbf{(ii) }The
corresponding pricing mechanisms $\mathbb{E}^{g}[\cdot]$, $\mathbb{E}^{\bar
{g}}[\cdot]$ satisfy
\[
\mathbb{E}_{s,t}^{g}[\xi]\geq\mathbb{E}_{s,t}^{\bar{g}}[\xi],\ \forall0\leq
s\leq t,\ \forall\xi\in L^{2}(\mathcal{F}_{t})\
\]

\end{proposition}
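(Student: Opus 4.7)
The plan is to handle the two implications separately, with the real work being (ii)$\Rightarrow$(i). The direction (i)$\Rightarrow$(ii) is immediate from the classical comparison theorem for BSDEs already recorded as Theorem~\ref{th2.2}: a pointwise ordering of the generators transports to an ordering of BSDE solutions, hence of the associated pricing mechanisms.

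For the converse (ii)$\Rightarrow$(i), fix $(y,z)\in R\times R^{d}$. The strategy is to test (ii) against terminal payoffs whose $g$-pricing is a \emph{known} constant, reducing (ii) to a one-sided bound that sees only $\bar g$. To that end introduce, for each $t\in[0,T)$, the forward SDE $Y^{t,y,z;g}$ defined by (\ref{Ytyz}) with coefficient $g$,
\[
Y_{s}^{t,y,z;g}=y-\int_{t}^{s}g(r,Y_{r}^{t,y,z;g},z)\,dr+z\cdot(B_{s}-B_{t}),\qquad s\in[t,T].
\]
Reading this equation backwards shows that on any subinterval $[t,t+\epsilon]$ the pair $(Y^{t,y,z;g},z)$ is the unique BSDE solution with generator $g$ and terminal $Y_{t+\epsilon}^{t,y,z;g}$; therefore $\mathbb{E}_{t,t+\epsilon}^{g}[Y_{t+\epsilon}^{t,y,z;g}]=y$ identically, and (ii) forces $\mathbb{E}_{t,t+\epsilon}^{\bar g}[Y_{t+\epsilon}^{t,y,z;g}]\leq y$.

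Writing $(\tilde Y,\tilde Z)$ for the $\bar g$-BSDE solution with the same terminal, the standard Lipschitz linearization $\bar g(r,Y^{t,y,z;g}_r,z)-\bar g(r,\tilde Y_r,\tilde Z_r)=\alpha_r(Y^{t,y,z;g}_r-\tilde Y_r)+\beta_r(z-\tilde Z_r)$ with $|\alpha_r|,|\beta_r|\leq\mu$ shows that $\Delta Y:=Y^{t,y,z;g}-\tilde Y$ and $\Delta Z:=z-\tilde Z$ solve a linear BSDE on $[t,t+\epsilon]$ whose forcing term is $(g-\bar g)(r,Y^{t,y,z;g}_r,z)$. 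The usual measure-change and discounting representation of linear BSDEs then gives
\[
0\leq y-\tilde Y_{t}=E^{Q}\!\left[\int_{t}^{t+\epsilon}\Gamma_{r}\,(g-\bar g)(r,Y^{t,y,z;g}_{r},z)\,dr\,\Big|\,\mathcal{F}_{t}\right],
\]
where $\Gamma_{r}:=\exp(\int_{t}^{r}\alpha_{u}\,du)>0$ is bounded and $dQ/dP$ is the bounded martingale density built from $\beta$.

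It remains to convert this integrated one-sided inequality into the pointwise statement (i) without any time regularity on $g$ beyond (\ref{h2.1}). This is exactly the purpose of the machinery in Lemmas~\ref{em7.2a}--\ref{Lem-f-f-}: on the dyadic grid $t_{i}^{n}=i\,2^{-n}T$, set
\[
h^{n}(r,y,z):=\sum_{i=0}^{2^{n}-1}(g-\bar g)\bigl(r,Y^{t_{i}^{n},y,z;g}_{r},z\bigr)\,1_{[t_{i}^{n},t_{i+1}^{n})}(r).
\]
Summing the inequality above over $i$ against arbitrary bounded nonnegative $\mathcal{F}_{t_{i}^{n}}$-measurable weights $\eta_{i}$ (which approximate any bounded nonnegative predictable integrand as $n\to\infty$), and using the $L^{2}$-continuity estimate of Lemma~\ref{em7.2a} (so that $Y^{t_{i}^{n},y,z;g}_{r}\to y$ on each piece) together with $\Gamma_{r}\to 1$, shows that $h^{n}$ is nonnegative in the weighted $L^{2}$-sense up to an error that vanishes with $n$. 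The argument of Lemma~\ref{FntoF}, applied with SDE coefficient $g$, gives $h^{n}\to(g-\bar g)(\cdot,y,z)$ in $L^{2}_{\mathcal{F}}(0,T)$; combined with the asymptotic nonnegativity of $h^{n}$ this yields $(g-\bar g)(\cdot,y,z)\geq 0$, $dP\times dt$-a.s. Since $(y,z)$ was arbitrary, (i) follows. The main obstacle is precisely this last passage: without RCLL regularity of $g$ in time, classical Lebesgue differentiation would only yield the conclusion at $\omega$-dependent Lebesgue points, and the $L^{2}_{\mathcal{F}}$-approximation by $h^{n}$ is the substitute that removes this obstruction and delivers the $dP\times dt$-a.s. inequality under the bare Lipschitz hypothesis (\ref{h2.1}).
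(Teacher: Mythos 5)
Your proof is correct, and for the hard direction (ii)$\Rightarrow$(i) it takes a genuinely different route from the paper. The paper's argument is shorter at the crucial step: it observes that $Y^{t,y,z}$ (the forward solution driven by $g$, hence an $\mathbb{E}^{g}$--martingale) becomes, under (ii), an $\mathbb{E}^{\bar g}$--supermartingale, and then invokes the nonlinear Doob--Meyer decomposition (Proposition \ref{p2.3}) to write $Y^{t,y,z}_s=\mathbb{E}^{\bar g}_{s,T}[Y^{t,y,z}_T;A]$ for an increasing $A$; identifying the martingale and finite--variation parts of the two semimartingale representations of $Y^{t,y,z}$ forces $Z\equiv z$ and $dA_r=(g-\bar g)(r,Y^{t,y,z}_r,z)\,dr\ge 0$, which is already the \emph{pathwise} inequality along the trajectory, so that Lemma \ref{Lem-f-f-} finishes directly. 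You avoid the decomposition theorem entirely, replacing it by the elementary linearization of the difference of the two BSDEs on short intervals and the positivity of the resulting discounted, measure--changed representation; the price is that you only obtain an \emph{integrated} one--sided inequality on each dyadic piece, so you cannot quote Lemma \ref{Lem-f-f-} as stated and must rerun its approximation argument in a weak form (testing against grid--adapted nonnegative step weights, controlling the deviation of $\Gamma$ and of the Girsanov densities from $1$ on intervals of length $2^{-n}T$, and passing to the limit in $L^{2}_{\mathcal F}(0,T)$). That extra duality step is sound --- the error terms are $O(2^{-n/2})$ by Lemma \ref{em7.2a} and Cauchy--Schwarz --- but you should state explicitly that the change--of--measure densities, and not only the discount factors $\Gamma$, must be shown to tend to $1$ uniformly over the grid, since the measures $Q_i$ differ from piece to piece. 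In exchange, your argument is self--contained at the level of classical linear BSDE theory and does not require the $\bar g$--supermartingale decomposition, which is the deepest ingredient of the paper's proof.
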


\begin{proof}
The method of the proof is significantly different from \cite{BCHMP00} and
\cite{CHMP2001}. \newline The part \textbf{(i)}$\Rightarrow$\textbf{(ii) }is
simply from the standard comparison theorem of BSDE. We now prove the part
\textbf{(ii)}$\Rightarrow$\textbf{(i). }For each fixed $(t,y,z)\in
\lbrack0,T]\times R\times R^{d}$, the solution $(Y_{s}^{t,y,z})_{s\in\lbrack
t,T]}$ of SDE with $f=g$ is a $\bar{g}$--supermartingale. By the decomposition
theorem, there exists an increasing process $A\in D_{\mathcal{F}}^{2}(0,T)$
such that
\[
Y_{s}^{t,y,z}=y-\int_{t}^{s}\bar{g}(r,Y_{r}^{t,y,z},z)dr-(A_{s}-A_{t}%
)+z(B_{s}-B_{t}),\;
\]
Comparing this with $Y_{s}^{t,y,z}=y-\int_{t}^{s}g(r,Y_{r}^{t,y,z}%
,z)dr+z(B_{s}-B_{t})$, we have
\[
g(r,Y_{r}^{t,y,z},z)\geq\bar{g}(r,Y_{r}^{t,y,z},z),\; \text{a.e, in
}[t,T]\text{, a.s..}%
\]
We then can apply Lemma \ref{Lem-f-f-} to obtain (i).
\end{proof}

\begin{corollary}
The following two conditions are equivalent: \newline\textbf{(i)} The price
generating function $g$ satisfies. for each $(y,z)\in R\times R^{d}$,
\[
g(t,y,z)\geq-g(t,-y,-z),\; \text{a.e., a,s,,}%
\]
\textbf{(ii)} $\mathbb{E}_{s,t}^{g}[\cdot]:L^{2}(\mathcal{F}_{t})\longmapsto
L^{2}(\mathcal{F}_{s})$ is a seller's pricing mechanism, i.e., for each $0\leq
s\leq t$ , $\mathbb{E}_{s,t}^{g}[\xi]\geq-\mathbb{E}_{s,t}^{g}[-\xi]$, for
each $\xi\in L^{2}(\mathcal{F}_{t})$.
\end{corollary}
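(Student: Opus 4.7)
The plan is to reduce the statement to the inverse comparison theorem (Proposition \ref{invcompar}) by introducing the dual generating function
\[
g_{\ast}(t,y,z) := -g(t,-y,-z).
\]
First I would check that $g_{\ast}$ satisfies the standing Lipschitz hypothesis (\ref{h2.1}) with the same constant $\mu$ as $g$, which is immediate from the definition since $|g_{\ast}(t,y,z)-g_{\ast}(t,y',z')| = |g(t,-y,-z)-g(t,-y',-z')| \leq \mu(|y-y'|+|z-z'|)$. Hence the $g_{\ast}$--pricing mechanism $\mathbb{E}^{g_{\ast}}$ is well defined and Proposition \ref{invcompar} applies to the pair $(g, g_{\ast})$.

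Next I would establish the key identity
\[
-\mathbb{E}_{s,t}^{g}[-\xi] = \mathbb{E}_{s,t}^{g_{\ast}}[\xi], \qquad \forall\, 0\leq s\leq t,\ \forall\, \xi\in L^{2}(\mathcal{F}_{t}).
\]
This is exactly the content of (\ref{e2.5.1}) applied with $K\equiv 0$ and with $-\xi$ replacing $X$: if $(Y,Z)$ solves the BSDE (\ref{tsBSDE}) with generator $g$ and terminal condition $-\xi$, then $(-Y,-Z)$ solves the BSDE with generator $g_{\ast}$ and terminal condition $\xi$, by a direct change of sign in every term of (\ref{tsBSDE}) (as spelled out in Remark \ref{r2.1}). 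Uniqueness of BSDE solutions then gives the identity.

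Using this identity, condition (ii) can be rewritten as
\[
\mathbb{E}_{s,t}^{g}[\xi] \geq \mathbb{E}_{s,t}^{g_{\ast}}[\xi], \qquad \forall\, 0\leq s\leq t,\ \forall\, \xi\in L^{2}(\mathcal{F}_{t}),
\]
which is precisely hypothesis (ii) of Proposition \ref{invcompar} applied to the pair $(g, g_{\ast})$. By that proposition, this is equivalent to the pointwise inequality $g(\omega,t,y,z) \geq g_{\ast}(\omega,t,y,z) = -g(\omega,t,-y,-z)$ for every $(y,z)\in R\times R^{d}$, $dP\times dt$-a.s., which is exactly condition (i). Both directions of the equivalence are obtained simultaneously from Proposition \ref{invcompar}, so no further work is needed.

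I do not anticipate a significant obstacle here, since the hard analytic content, the inverse comparison theorem, has already been established. The only small points to be careful about are verifying that $g_{\ast}$ inherits (\ref{h2.1}) and that the sign-flip in the BSDE is correctly carried out, both of which are routine.
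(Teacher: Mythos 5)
Your proposal is correct and follows essentially the same route as the paper: the paper likewise introduces $\bar{g}(t,y,z):=-g(t,-y,-z)$, invokes the identity $\mathbb{E}_{s,t}^{\bar{g}}[\xi]=-\mathbb{E}_{s,t}^{g}[-\xi]$, and applies Proposition \ref{invcompar} to the pair $(g,\bar{g})$. Your additional checks (that $\bar g$ inherits (\ref{h2.1}) and the explicit sign-flip of the BSDE) are details the paper leaves implicit, so nothing further is needed.
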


\begin{proof}
We denote $\bar{g}(t,y,z):=-g(t,-y,-z)$ and compare the following two BSDE:%
\[
Y_{s}=\xi+\int_{s}^{t}g(r,Y_{r},Z_{r}dr-\int_{s}^{t}Z_{r}dB_{r},\ s\in
\lbrack0,t],
\]
and
\[
\bar{Y}_{s}=\xi+\int_{s}^{t}\bar{g}(r,\bar{Y},\bar{Z}_{r})dr-\int_{s}^{t}%
\bar{Z}_{r}dB_{r},\ s\in\lbrack0,t].
\]
By the above Proposition, one has $\mathbb{E}_{s,t}^{g}[\cdot]\geq
\mathbb{E}_{s,t}^{\bar{g}}[\cdot]$, iff $g\geq\bar{g}$. This with
$\mathbb{E}_{s,t}^{\bar{g}}[\xi]=-\mathbb{E}_{s,t}^{g}[-\xi]$ yields (i)
$\Leftrightarrow$ (ii).
\end{proof}

\begin{proposition}
The following two conditions are equivalent: \newline\textbf{(i)} The price
generating function $g=g(t,y,z)$ is convex (resp. concave) in $(y,z)$, i.e.,
for each $(y,z)$ and $(y^{\prime},z^{\prime})$ in $R\times R^{d}$ and for a.e.
$t\in\lbrack0,T]$
\begin{align*}
g(s,\alpha y+(1-\alpha)y^{\prime},\alpha z+(1-\alpha)z^{\prime})  &
\leq\alpha g(s,y,z)+(1-\alpha)g(s,y^{\prime},z^{\prime}),\; \text{a.s.}\\
\; \; \text{(resp.}  &  \geq\alpha g(s,y,z)+(1-\alpha)g(s,y^{\prime}%
,z^{\prime}),\; \text{a.s.).}%
\end{align*}
\textbf{(ii)} The corresponding pricing mechanism $\mathbb{E}_{s,t}^{g}%
[\cdot]$ is a convex (resp. concave), i.e., for each fixed $\alpha\in
\lbrack0,1]$, we have
\begin{align}
\mathbb{E}_{s,t}^{g}[\alpha\xi+(1-\alpha)\zeta]  &  \leq\alpha\mathbb{E}%
_{s,t}^{g}[\xi]+(1-\alpha)\mathbb{E}_{s,t}^{g}[\zeta],\; \text{a.s.}%
\label{Econvex}\\
\text{(resp. }  &  \geq\alpha\mathbb{E}_{s,t}^{g}[\xi]+(1-\alpha
)\mathbb{E}_{s,t}^{g}[\zeta],\; \text{a.s.)}\nonumber\\
\forall s  &  \leq t,\ \forall\xi,\zeta\in L^{2}(\mathcal{F}_{t}).\nonumber
\end{align}

\end{proposition}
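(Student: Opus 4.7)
I argue the two implications separately.

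\emph{(i)$\Rightarrow$(ii).} Let $(Y,Z)$ and $(Y',Z')$ solve on $[0,t]$ the BSDEs defining $\mathbb{E}^{g}_{\cdot,t}[\xi]$ and $\mathbb{E}^{g}_{\cdot,t}[\zeta]$. Then $\alpha Y+(1-\alpha)Y'$ solves the accumulated BSDE with terminal $\alpha\xi+(1-\alpha)\zeta$, control $\alpha Z+(1-\alpha)Z'$, driver $g$, and additional cumulative term
\[
\hat K_{u}:=\int_{0}^{u}\bigl[\alpha g(r,Y_r,Z_r)+(1-\alpha)g(r,Y'_r,Z'_r)-g(r,\alpha Y_r+(1-\alpha)Y'_r,\alpha Z_r+(1-\alpha)Z'_r)\bigr]\,dr,
\]
which is increasing by the convexity of $g$. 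Hence $\alpha Y_s+(1-\alpha)Y'_s=\mathbb{E}^{g}_{s,t}[\alpha\xi+(1-\alpha)\zeta;\hat K]$, and the comparison principle \eqref{e2.8} of Theorem~\ref{th2.2} gives $\alpha Y_s+(1-\alpha)Y'_s\geq\mathbb{E}^{g}_{s,t}[\alpha\xi+(1-\alpha)\zeta]$, which is exactly \eqref{Econvex}. The concave case is identical with all inequalities reversed.

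\emph{(ii)$\Rightarrow$(i).} I follow the strategy of Proposition~\ref{invcompar}. Fix $(t,y,z),(t,y',z')\in[0,T]\times R\times R^{d}$, set $\tilde y:=\alpha y+(1-\alpha)y'$, $\tilde z:=\alpha z+(1-\alpha)z'$, and let $Y^{t,y,z}$, $Y^{t,y',z'}$ denote the It\^o solutions of \eqref{Ytyz} with $f=g$. Rewriting \eqref{Ytyz} as a BSDE between any $t\leq u\leq s\leq T$ shows that each is an $\mathbb{E}^{g}$-martingale, $Y^{t,y,z}_u=\mathbb{E}^{g}_{u,s}[Y^{t,y,z}_s]$. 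Applying the convexity hypothesis (ii) on $\mathbb{E}^{g}$ to $\bar Y_s:=\alpha Y^{t,y,z}_s+(1-\alpha)Y^{t,y',z'}_s$ gives
\[
\mathbb{E}^{g}_{u,s}[\bar Y_s]\leq\alpha\mathbb{E}^{g}_{u,s}[Y^{t,y,z}_s]+(1-\alpha)\mathbb{E}^{g}_{u,s}[Y^{t,y',z'}_s]=\bar Y_u,\qquad t\leq u\leq s\leq T,
\]
so $\bar Y$ is an $\mathbb{E}^{g}$-supermartingale on $[t,T]$.

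By the nonlinear Doob--Meyer decomposition (Proposition~\ref{p2.3}) there exist a predictable (hence continuous, in the Brownian filtration) increasing process $A$ with $A_t=0$ and a process $\bar Z$ such that
\[
\bar Y_u=\bar Y_T+(A_T-A_u)+\int_u^{T}g(r,\bar Y_r,\bar Z_r)\,dr-\int_u^{T}\bar Z_r\,dB_r.
\]
On the other hand, adding the two equations \eqref{Ytyz} for $Y^{t,y,z}$ and $Y^{t,y',z'}$ gives another semimartingale representation of $\bar Y$ whose martingale part is $\tilde z(B_\cdot-B_t)$. Uniqueness of the continuous semimartingale decomposition in the Brownian filtration forces $\bar Z\equiv\tilde z$ and
\[
A_T-A_u=\int_u^{T}\bigl\{\alpha g(r,Y^{t,y,z}_r,z)+(1-\alpha)g(r,Y^{t,y',z'}_r,z')-g(r,\bar Y_r,\tilde z)\bigr\}\,dr,
\]
so monotonicity of $A$ forces the integrand to be $\geq 0$ for a.e.\ $r\in[t,T]$, a.s.

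Converting this trajectory-based inequality into the pointwise convexity of $g$ is where I expect the main work: Lemma~\ref{Lem-f-f-} applies to a single SDE trajectory, whereas our integrand involves two, $Y^{t,y,z}$ and $Y^{t,y',z'}$. I adapt the proof of Lemmas~\ref{FntoF}--\ref{Lem-f-f-} in an obvious two-parameter direction. With $t_i^{n}:=i2^{-n}T$, set
\[
h^{n}(s):=\sum_{i=0}^{2^{n}-1}\bigl[\alpha g(s,Y^{t_i^{n},y,z}_s,z)+(1-\alpha)g(s,Y^{t_i^{n},y',z'}_s,z')-g(s,\alpha Y^{t_i^{n},y,z}_s+(1-\alpha)Y^{t_i^{n},y',z'}_s,\tilde z)\bigr]1_{[t_i^{n},t_{i+1}^{n})}(s),
\]
which is $\geq 0$ by the previous step applied with initial time $t_i^n$. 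Lemma~\ref{em7.2a} applied to each of the two SDEs together with the Lipschitz bound on $g$ yields $E\int_0^{T}|h^{n}(s)-h(s)|^{2}\,ds\to 0$, where $h(s):=\alpha g(s,y,z)+(1-\alpha)g(s,y',z')-g(s,\tilde y,\tilde z)$. Hence $h\geq 0$ a.e., which is (i). The concave case follows by applying the convex case to $g_{-}(r,y,z):=-g(r,-y,-z)$, using $\mathbb{E}^{g_{-}}_{s,t}[\xi]=-\mathbb{E}^{g}_{s,t}[-\xi]$ from \eqref{e2.5.1}.
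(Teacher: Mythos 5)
Your proof is correct and follows essentially the same route as the paper's: comparison of the convex combination's BSDE for (i)$\Rightarrow$(ii), and for (ii)$\Rightarrow$(i) the supermartingale property of $\alpha Y^{t,y,z}+(1-\alpha)Y^{t,y',z'}$, the Doob--Meyer decomposition of Proposition \ref{p2.3}, identification of the martingale parts, and the approximation argument of Lemma \ref{Lem-f-f-}. Your explicit two-trajectory adaptation of Lemmas \ref{FntoF}--\ref{Lem-f-f-} is in fact more careful than the paper's bare citation of Lemma \ref{Lem-f-f-} (whose statement covers only one trajectory); the only blemish is the harmless parenthetical claim that predictability of $A$ implies continuity, which is false in general but irrelevant here since $A$ is identified as an absolutely continuous integral anyway.
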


\begin{proof}
\textbf{ }We only prove the convex case. \newline\textbf{(i)}$\Rightarrow
$\textbf{(ii): }For a given $t>0$, we set $Y_{s}^{\xi}:=\mathbb{E}_{s,t}%
^{g}[\xi]$, $Y_{s}^{\zeta}:=\mathbb{E}_{s,t}^{g}[\zeta]$, $s\in\lbrack0,t]$.
These two pricing processes solve respectively the following two BSDEs on
$[0,t]$:
\begin{align*}
Y_{s}^{\xi}  &  =\xi+\int_{s}^{t}g(r,Y_{r}^{\xi},Z_{r}^{\xi})dr-\int_{s}%
^{t}Z_{r}^{\xi}dB_{r},\\
Y_{s}^{\zeta}  &  =\zeta+\int_{s}^{t}g(r,Y_{r}^{\zeta},Z_{r}^{\zeta}%
)dr-\int_{s}^{t}Z_{r}^{\zeta}dB_{r}.
\end{align*}
Their convex combination: $(Y_{s},Z_{s}):=(\alpha Y_{s}^{\xi}+(1-\alpha
)Y_{s}^{\zeta},\alpha Z_{s}^{\xi}+(1-\alpha)Z_{s}^{\zeta})$, satisfies
\begin{align*}
Y_{s}  &  =\alpha\xi+(1-\alpha)\zeta+\int_{s}^{t}[g(r,Y_{r},Z_{r})+\psi
_{r}]dr-\int_{s}^{t}Z_{r}dB_{r},\\
\text{where we set }\psi_{r}  &  =\alpha g(r,Y_{r}^{\xi},Z_{r}^{\xi
})+(1-\alpha)g(r,Y_{r}^{\zeta},Z_{r}^{\zeta})-g(r,Y_{r},Z_{r})\text{.}%
\end{align*}
But since the price generating function $g$ is convex in $(y,z)$, we have
$\psi\geq0$. It then follows from the comparison theorem that $Y_{s}%
\geq\mathbb{E}_{s,t}^{g}[\alpha\xi+(1-\alpha)\zeta]$, for each. We thus have
\textbf{(ii)}. \newline\textbf{(ii)}$\Rightarrow$\textbf{(i): }Let $Y^{t,y,z}$
be the solution of SDE (\ref{Ytyz}). For fixed $t\in\lbrack0,T)$ and $(y,z)$,
$(y^{\prime},z^{\prime})$ in $R\times R^{d}$, we have
\[
Y_{s}^{t,y,z}=\mathbb{E}_{s,t}^{g}[Y_{t}^{t,y,z}],\;Y_{s}^{t,y^{\prime
},z^{\prime}}=\mathbb{E}_{s,t}^{g}[Y_{t}^{t,y^{\prime},z^{\prime}}].
\]
We set $Y_{s}:=\alpha Y_{s}^{t,y,z}+(1-\alpha)Y_{s}^{t,y^{\prime},z^{\prime}}%
$, $s\in\lbrack t_{0},T]$. By (\ref{Econvex}),
\begin{align*}
\mathbb{E}_{s,t}^{g}[Y_{t}]  &  \leq\alpha\mathbb{E}_{s,t}^{g}[Y_{s}%
^{t,y,z}]+(1-\alpha)\mathbb{E}_{s,t}^{g}[Y_{s}^{t,y^{\prime},z^{\prime}}]\\
&  =\alpha Y_{s}^{t,y,z}+(1-\alpha)Y_{s}^{t,y^{\prime},z^{\prime}}\\
&  =Y_{s}.
\end{align*}
Thus the process $Y$ is a $g$--supermartingale defined on $[t,T]$. It follows
from the decomposition theorem, i.e., Theorem \ref{p2.3}, that, there exists
an increasing process $A\in D_{\mathcal{F}}^{2}(t,T)$ such that
\[
Y_{s}=Y_{t}-\int_{t}^{s}g(r,Y_{r},Z_{r})dr-(A_{s}-A_{t})+\int_{t}^{s}%
Z_{r}dB_{s}.
\]
We compare this with
\begin{align*}
Y_{s}  &  =\alpha Y_{s}^{t,y,z}+(1-\alpha)Y_{s}^{t,y^{\prime},z^{\prime}}\\
&  =\alpha y+(1-\alpha)y^{\prime}-\int_{t}^{s}[\alpha g(r,Y_{r}^{t,y,z}%
,z)+(1-\alpha)g(r,Y_{r}^{t,y^{\prime},z^{\prime}},z^{\prime})]dr\\
&  +(\alpha z+(1-\alpha)z^{\prime})(B_{s}-B_{t}),
\end{align*}
It follows that
\begin{align*}
Y_{t}  &  =\alpha y+(1-\alpha)y^{\prime},\;Z_{r}\equiv\alpha z+(1-\alpha
)z^{\prime},\;\\
g(r,Y_{r},Z_{r})  &  \equiv g(r,\alpha Y_{r}^{t,y,z}+(1-\alpha)Y_{r}%
^{t,y^{\prime},z^{\prime}},\alpha z+(1-\alpha)z^{\prime}).
\end{align*}
Thus we have
\[
g(s,\alpha Y_{s}^{t,y,z}+(1-\alpha)Y_{s}^{t,y^{\prime},z^{\prime}},\alpha
z+(1-\alpha)z^{\prime})\leq\alpha g(s,Y_{s}^{t,y,z},z)+(1-\alpha
)g(s,Y_{s}^{t,y^{\prime},z^{\prime}},z^{\prime}).
\]
We then can apply Lemma \ref{Lem-f-f-} to obtain (i).
\end{proof}

\begin{proposition}
The following two conditions are equivalent: \newline\textbf{(i)} The price
generating function $g$ is positively homogenous in $(y,z)\in R\times R^{d}$,
i.e.,
\[
g(t,\lambda y,\lambda z)=\lambda g(t,y,z),\; \text{a.e., a,s,,}%
\]
\textbf{(ii)} The corresponding pricing mechanism $\mathbb{E}_{s,t}^{g}%
[\cdot]:L^{2}(\mathcal{F}_{t})\longmapsto L^{2}(\mathcal{F}_{s})$ is
positively homogenous: for each $0\leq s\leq t$ , i.e., $\mathbb{E}_{s,t}%
^{g}[\lambda\xi]=\lambda\mathbb{E}_{s,t}^{g}[\xi]$, for each $\lambda\geq0$
and $\xi\in L^{2}(\mathcal{F}_{t})$.
\end{proposition}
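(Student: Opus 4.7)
The plan is to mirror the convex proposition immediately above, exploiting the fact that positive homogeneity is an \emph{equality}, so the Doob--Meyer decomposition step can be replaced by a direct appeal to BSDE uniqueness (equivalently, uniqueness of the semimartingale decomposition).

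For the easy direction \textbf{(i)$\Rightarrow$(ii)}, I would fix $\xi\in L^{2}(\mathcal{F}_{t})$ and $\lambda\geq 0$, let $(Y^{\xi},Z^{\xi})$ solve the BSDE with terminal $\xi$ and generator $g$, and multiply the equation by $\lambda$. Hypothesis (i) yields $\lambda g(r,Y_{r}^{\xi},Z_{r}^{\xi})=g(r,\lambda Y_{r}^{\xi},\lambda Z_{r}^{\xi})$, so $(\lambda Y^{\xi},\lambda Z^{\xi})$ solves the BSDE with terminal $\lambda\xi$. Uniqueness of BSDE solutions gives $\mathbb{E}_{s,t}^{g}[\lambda\xi]=\lambda\mathbb{E}_{s,t}^{g}[\xi]$.

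For the harder direction \textbf{(ii)$\Rightarrow$(i)}, I would fix $t\in[0,T)$, $(y,z)\in R\times R^{d}$, and $\lambda>0$; the case $\lambda=0$ is subsumed in $\mathbb{E}_{s,t}^{g}[0]=0$, which forces $g(\cdot,0,0)\equiv 0$. Let $Y^{t,y,z}$ solve the forward SDE (\ref{Ytyz}) with $f=g$; as in the convex proof, $Y^{t,y,z}$ is a $g$-martingale on $[t,T]$, so $Y_{s}^{t,y,z}=\mathbb{E}_{s,T}^{g}[Y_{T}^{t,y,z}]$. Setting $\tilde{Y}_{s}:=\lambda Y_{s}^{t,y,z}$, hypothesis (ii) yields $\tilde{Y}_{s}=\mathbb{E}_{s,T}^{g}[\tilde{Y}_{T}]$, so $\tilde{Y}$ is itself a $g$-martingale. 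On the other hand, reading off the forward SDE,
\[
\tilde{Y}_{s}=\tilde{Y}_{T}+\int_{s}^{T}\lambda g(r,Y_{r}^{t,y,z},z)\,dr-\int_{s}^{T}(\lambda z)\,dB_{r},
\]
which identifies $(\tilde{Y},\lambda z)$ as a BSDE solution with predictable driver $\lambda g(r,Y_{r}^{t,y,z},z)$. Uniqueness of the semimartingale decomposition of $\tilde{Y}$ (equivalently, BSDE uniqueness against the $g$-driver representation) then forces both $\tilde{Z}_{r}\equiv\lambda z$ and
\[
g(r,\lambda Y_{r}^{t,y,z},\lambda z)=\lambda g(r,Y_{r}^{t,y,z},z),\qquad dr\times dP\text{-a.s.\ on }[t,T].
\]

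To pass from this trajectory identity to the pointwise equality in $(y,z)$, I would apply Lemma \ref{Lem-f-f-} (as is done in Proposition \ref{invcompar}) to the auxiliary function $f_{\lambda}(\omega,r,y',z'):=g(\omega,r,\lambda y',\lambda z')-\lambda g(\omega,r,y',z')$, which is Lipschitz with constant at most $2\lambda\mu$; the proof of Lemma \ref{FntoF} adapts verbatim with this constant in place of $\mu$. The preceding display is exactly $f_{\lambda}(r,Y_{r}^{t,y,z},z)=0$ for every $(t,y,z)$, so Lemma \ref{Lem-f-f-} concludes $f_{\lambda}\equiv 0$, which is positive homogeneity of $g$. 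The main obstacle I anticipate is the clean BSDE-uniqueness step that simultaneously pins down $\tilde{Z}\equiv\lambda z$ and the driver identity along the trajectory; once this is secured, the passage via Lemma \ref{Lem-f-f-} is routine, and the overall argument is somewhat simpler than the convex case precisely because positive homogeneity is an equality, obviating the Doob--Meyer decomposition.
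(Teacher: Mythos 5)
Your proposal is correct and follows essentially the same route as the paper: for (ii)$\Rightarrow$(i) the paper likewise observes that $\lambda Y^{t,y,z}$ is an $\mathbb{E}^{g}$--martingale, identifies $Z^{t,y,z,\lambda}\equiv\lambda z$ and the driver identity $\lambda g(r,Y_{r}^{t,y,z},z)=g(r,\lambda Y_{r}^{t,y,z},\lambda z)$ by comparing the two It\^{o} decompositions, and then invokes Lemma \ref{Lem-f-f-}; no Doob--Meyer step is used there either. Your explicit treatment of $\lambda=0$ and of the Lipschitz constant of $f_{\lambda}$ are minor refinements of details the paper leaves implicit.
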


\begin{proof}
(i)$\Rightarrow$(ii) is easy.\newline(ii)$\Rightarrow$(i):\textbf{ }Let
$Y^{t,y,z}$ be the solution of SDE (\ref{Ytyz}). For fixed $t\in\lbrack0,T)$
and $(y,z)$ in $R\times R^{d}$, we have $\lambda Y_{s}^{t,y,z}=\mathbb{E}%
_{s,t}^{g}[\lambda Y_{t}^{t,y,z}]$, $s\in\lbrack t,T]$. This implies that,
there exists $Z_{\cdot}^{t,y,z,\lambda}\in L_{\mathcal{F}}^{2}(t,T;R^{d})$,
such that
\[
\lambda Y_{s}^{t,y,z}=\lambda y-\int_{t}^{s}g(r,\lambda Y_{r}^{t,y,z}%
,Z_{r}^{t,y,z,\lambda})dr+\int_{t}^{s}Z_{r}^{t,y,z,\lambda}dB_{r}%
,\ s\in\lbrack t,T].
\]
Compare this with $\lambda Y_{s}^{t,y,z}=\lambda y-\int_{t}^{s}\lambda
g(r,Y_{r}^{t,y,z},z)dr+\int_{t}^{s}\lambda zdr$, it follows that $Z_{\cdot
}^{t,y,z,\lambda}\equiv\lambda z$ and $\lambda g(r,Y_{r}^{t,y,z},z)\equiv
g(r,\lambda Y_{r}^{t,y,z},Z_{r}^{t,y,z,\lambda})$, $r\in\lbrack t,T]$. We then
can apply Lemma \ref{Lem-f-f-} to obtain (i).
\end{proof}

From the above two propositions we immediatly have

\begin{corollary}
The following two conditions are equivalent: \newline\textbf{(i)} The price
generating function $g$ is subadditive: for each $(y,z),\ (y^{\prime
},z^{\prime})\in R\times R^{d}$,
\[
g(\omega,t,y+y^{\prime},z+z^{\prime})\leq g(\omega,t,y,z)+g(\omega
,t,y^{\prime},z^{\prime}),\;dt\times dP\text{, a.s.,}%
\]
\textbf{(ii)} The corresponding pricing mechanism $\mathbb{E}_{s,t}^{g}%
[\cdot]:L^{2}(\mathcal{F}_{t})\longmapsto L^{2}(\mathcal{F}_{s})$ is is
subadditive: for each $0\leq s\leq t$ ,%
\[
\mathbb{E}_{s,t}^{g}[\xi+\xi^{\prime}]\leq\mathbb{E}_{s,t}^{g}[\xi
]+\mathbb{E}_{s,t}^{g}[\xi^{\prime}],\ \ \forall\xi,\xi^{\prime}\in
L^{2}(\mathcal{F}_{t}).
\]

\end{corollary}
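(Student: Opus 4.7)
The plan is to mimic the methodology of the two preceding propositions: a BSDE comparison for one direction, and a Doob--Meyer decomposition combined with Lemma \ref{Lem-f-f-} for the other. A direct one-line reduction from convexity and positive homogeneity does not quite work (subadditivity alone neither entails nor is entailed by their conjunction), so I would write a self-contained argument reusing the same ingredients.

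For (i)$\Rightarrow$(ii), I would fix $t$ and $\xi,\xi'\in L^{2}(\mathcal{F}_{t})$, let $(Y^{\xi},Z^{\xi})$ and $(Y^{\xi'},Z^{\xi'})$ be the associated BSDE solutions, and set $Y:=Y^{\xi}+Y^{\xi'}$, $Z:=Z^{\xi}+Z^{\xi'}$. Adding the two BSDEs shows that $(Y,Z)$ solves a BSDE on $[0,t]$ with terminal value $\xi+\xi'$ and driver $g(r,Y_{r},Z_{r})+\psi_{r}$, where
\[
\psi_{r} := g(r,Y_{r}^{\xi},Z_{r}^{\xi}) + g(r,Y_{r}^{\xi'},Z_{r}^{\xi'}) - g(r,Y_{r},Z_{r}) \ge 0
\]
by the subadditivity of $g$. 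Comparing via Theorem \ref{th2.2} against the BSDE with driver $g$ alone and the same terminal $\xi+\xi'$ (which defines $\mathbb{E}_{s,t}^{g}[\xi+\xi']$) yields $Y_{s}\ge\mathbb{E}_{s,t}^{g}[\xi+\xi']$.

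For (ii)$\Rightarrow$(i), I would fix $t\in[0,T)$ and $(y,z),(y',z')\in R\times R^{d}$, and consider the SDE solutions $Y^{t,y,z}$ and $Y^{t,y',z'}$ of (\ref{Ytyz}) with $f=g$. Each is an $\mathbb{E}^{g}$-martingale, so by (ii) the process $Y_{s}:=Y_{s}^{t,y,z}+Y_{s}^{t,y',z'}$ is an $\mathbb{E}^{g}$-supermartingale on $[t,T]$. Proposition \ref{p2.3} then furnishes an increasing $A$ with $A_{t}=0$ and a $Z$ such that $Y_{s}=Y_{T}+(A_{T}-A_{s})+\int_{s}^{T}g(r,Y_{r},Z_{r})\,dr-\int_{s}^{T}Z_{r}\,dB_{r}$; on the other hand, summing the two SDE representations directly gives $Y_{s}=Y_{T}+\int_{s}^{T}[g(r,Y_{r}^{t,y,z},z)+g(r,Y_{r}^{t,y',z'},z')]\,dr-\int_{s}^{T}(z+z')\,dB_{r}$. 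Uniqueness of the semimartingale decomposition identifies $Z_{r}\equiv z+z'$ and forces
\[
dA_{r} = \bigl[g(r,Y_{r}^{t,y,z},z) + g(r,Y_{r}^{t,y',z'},z') - g(r,Y_{r}^{t,y,z}+Y_{r}^{t,y',z'},z+z')\bigr]\,dr,
\]
whose integrand must be nonnegative on $[t,T]$, a.s., because $A$ is increasing.

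The main obstacle is the last step: passing from this nonnegativity along random trajectories to the deterministic inequality $g(t,y+y',z+z')\le g(t,y,z)+g(t,y',z')$. Lemma \ref{Lem-f-f-} is stated for a single SDE trajectory $Y^{t,y,z}$, whereas here the integrand involves the $R^{2}$-valued trajectory $(Y^{t,y,z},Y^{t,y',z'})$ driven by the joint diffusion $(z,z')$. I would handle this by noting that the proof of Lemma \ref{Lem-f-f-} rests only on the $L^{2}$-continuity $Y_{r}^{t_{i}^{n},y,z}\to y$, which transfers verbatim to a pair of SDEs against the same Brownian motion; thus the piecewise-constant approximant of $F(r,u,v):=g(r,u,z)+g(r,v,z')-g(r,u+v,z+z')$ along $(Y^{t,y,z},Y^{t,y',z'})$ converges to $F$ in $L_{\mathcal{F}}^{2}(t,T)$. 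Specializing the resulting nonnegativity at $r=t$ (where $Y_{t}^{t,y,z}=y$, $Y_{t}^{t,y',z'}=y'$) yields (i).
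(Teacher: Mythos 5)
Your proof is correct, but it takes a genuinely different route from the paper: the paper disposes of this corollary in one line, claiming it follows ``immediately'' from the two preceding propositions (the equivalences for convexity and for positive homogeneity). As you rightly observe, that deduction is not actually available --- subadditivity of $g$ alone neither implies nor is implied by the conjunction of convexity and positive homogeneity, so the paper's one-liner only works if one additionally assumes positive homogeneity (sublinearity $=$ convexity $+$ positive homogeneity). Your direct argument instead replays, for subadditivity itself, exactly the machinery the paper uses to prove those two propositions: for (i)$\Rightarrow$(ii) you sum the two BSDEs, absorb the nonnegative defect $\psi$ into an increasing process, and invoke the comparison theorem (Theorem \ref{th2.2}); for (ii)$\Rightarrow$(i) you observe that the sum of the two $\mathbb{E}^{g}$--martingales $Y^{t,y,z}+Y^{t,y',z'}$ is an $\mathbb{E}^{g}$--supermartingale, apply the Doob--Meyer decomposition (Proposition \ref{p2.3}), identify the martingale and finite-variation parts, and conclude via Lemma \ref{Lem-f-f-}. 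Your explicit remark about extending Lemma \ref{Lem-f-f-} to the pair of trajectories $(Y^{t,y,z},Y^{t,y',z'})$ is well taken: the paper's own proof of the convexity proposition silently relies on the same two-trajectory extension, and your justification (the $L^{2}$--continuity $Y^{t,y,z}_{r}\to y$, $Y^{t,y',z'}_{r}\to y'$ as $r\downarrow t$ carries over verbatim) is exactly what is needed. In short, your proof is self-contained and arguably repairs a gap in the paper's stated derivation, at the cost of repeating an argument the paper intended to reuse wholesale.
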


\begin{proposition}
The price generating function $g$ is independent of $y$ if and only if, the
corresponding $g$--pricing mechanism is cash invariant, namely, for each
$s\leq t$
\[
\mathbb{E}_{s,t}^{g}[\xi+\eta]=\mathbb{E}_{s,t}^{g}[\xi]+\eta,\;
\text{a.s.},\forall\xi\in L^{2}(\mathcal{F}_{t}),\ \eta\in L^{2}%
(\mathcal{F}_{s}).
\]

\end{proposition}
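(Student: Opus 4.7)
The plan is to treat the two directions separately, using BSDE uniqueness for the easy implication and the BSDE--vs.--forward-SDE technique (together with Lemma~\ref{Lem-f-f-}) already used in the preceding two propositions for the harder one.

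For the forward direction ($g$ independent of $y$ implies cash invariance): fix $s\le t$, $\xi\in L^{2}(\mathcal{F}_{t})$ and $\eta\in L^{2}(\mathcal{F}_{s})$, and let $(Y',Z')$ solve the BSDE on $[s,t]$ with terminal condition $\xi+\eta$. Since $\eta$ is $\mathcal{F}_{s}\subset\mathcal{F}_{r}$--measurable for $r\in[s,t]$, the shifted pair $(Y'-\eta,Z')$ is still adapted on $[s,t]$; because $g$ does not involve $y$, this pair solves the BSDE on $[s,t]$ with terminal $\xi$. Uniqueness then gives $Y'_{s}-\eta=\mathbb{E}^{g}_{s,t}[\xi]$, which is exactly cash invariance.

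For the converse, I would fix $\bar{y}\in R$ and define $h(t,y,z):=g(t,y+\bar{y},z)-g(t,y,z)$, aiming to show $h\equiv0$ via Lemma~\ref{Lem-f-f-}; iterating over $\bar{y}\in R$ will then show that $g$ does not depend on $y$. For each $(t,y,z)\in[0,T)\times R\times R^{d}$, look at the forward SDE solution $Y^{t,y,z}$ of (\ref{Ytyz}); as in the convexity and positive-homogeneity proofs, $(Y^{t,y,z},z)$ is a BSDE solution on every subinterval of $[t,T]$, so $Y_{r}^{t,y,z}=\mathbb{E}^{g}_{r,T}[Y_{T}^{t,y,z}]$ for $r\in[t,T]$. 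Applying cash invariance at every intermediate time $r$ with the constant $\eta=\bar{y}$ gives $\mathbb{E}^{g}_{r,T}[Y_{T}^{t,y,z}+\bar{y}]=Y_{r}^{t,y,z}+\bar{y}$, so the BSDE solution $(\hat{Y},\hat{Z})$ on $[t,T]$ with terminal $Y_{T}^{t,y,z}+\bar{y}$ must satisfy $\hat{Y}_{r}\equiv Y_{r}^{t,y,z}+\bar{y}$ pathwise. Matching the It\^{o} decomposition of $\hat{Y}$ (coming from its BSDE) with that of the explicit forward process $Y^{t,y,z}+\bar{y}$ forces $\hat{Z}\equiv z$ and the driver identity
\[
g(r,Y_{r}^{t,y,z}+\bar{y},z)=g(r,Y_{r}^{t,y,z},z),\quad r\in[t,T],\; dr\times dP\text{--a.s.,}
\]
i.e., $h(r,Y_{r}^{t,y,z},z)=0$. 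Since this is established for every $(t,y,z)$, Lemma~\ref{Lem-f-f-} promotes it to $h(t,y,z)=0$ for every $(y,z)\in R\times R^{d}$ a.s.

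The main obstacle is the converse direction: one has to recognize that cash invariance must be invoked at every intermediate time $r\in[t,T]$ (not merely at a single endpoint) in order to identify $\hat{Y}$ with $Y^{t,y,z}+\bar{y}$ as processes on the whole interval, which is what makes the uniqueness of the semimartingale decomposition bite and yields the pointwise driver identity that feeds Lemma~\ref{Lem-f-f-}. After that, the argument is the same bookkeeping already carried out for convexity and positive homogeneity.
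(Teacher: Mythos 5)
Your proposal is correct and follows essentially the same route as the paper: the forward direction by shifting the BSDE solution by $\eta$ and invoking uniqueness, and the converse by identifying the forward flow $Y^{t,y,z}$ as an $\mathbb{E}^{g}$--martingale, applying cash invariance at every intermediate time, matching the two It\^{o} decompositions to extract the driver identity, and feeding it to Lemma~\ref{Lem-f-f-}. The only (harmless) difference is that you translate by a fixed $\bar{y}$ for all starting points, whereas the paper shifts each $Y^{t,y,z}$ by its own initial value $y$ to conclude $g(\cdot,y,z)=g(\cdot,0,z)$ directly; your parametrization is, if anything, slightly cleaner when checking the hypothesis of Lemma~\ref{Lem-f-f-}.
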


\begin{proof}
\textbf{ }We first prove the \textquotedblleft If\textquotedblright\ part. For
each fixed $(y,z)\in R\times R^{d}$, we have $Y_{s}^{t,y,z}\equiv
\mathbb{E}_{s,T}^{g}[Y_{T}^{t,y,z}]\equiv y+\mathbb{E}_{s,T}^{g}[Y_{T}%
^{t,y,z}-y]$. Let $\bar{Y}_{s}=\mathbb{E}_{s,T}^{g}[Y_{T}^{t,y,z}-y]$,
$s\in\lbrack0,T]$ and $\bar{Z}$ be the corresponding part of It\^{o}'s
integrand. By $\bar{Y}_{r}\equiv y+Y_{r}^{t,y,z}$ it follows that
\begin{align*}
y+Y_{s}  &  =y+Y_{T}^{t,y,z}+\int_{s}^{T}g(r,Y_{r}^{t,y,z},z)-\int_{s}%
^{T}zdB_{r}\\
&  =(y+Y_{T}^{t,y,z})+\int_{s}^{T}g(r,\bar{Y}_{r},\bar{Z}_{r})-\int_{s}%
^{T}\bar{Z}_{r}dB_{r}.
\end{align*}
Thus $\bar{Z}_{r}\equiv z$ and
\[
g(r,Y_{r}^{t,y,z},z)\equiv g(r,Y_{r}^{t,y,z}-y,\bar{Z}_{r})\equiv
g(r,Y_{r}^{t,y,z}-y,z).
\]
We then can apply Lemma \ref{Lem-f-f-} to obtain that, for each $(y,z)\in
R\times R^{d}$,
\[
g(r,y,z)\equiv g(r,y-y,z)\equiv g(r,0,z).
\]
Namely, $g$ is independent of $y$. \newline\textquotedblleft Only if
part\textquotedblright: For each for each $s\leq t$ and $\xi\in L^{2}%
(\mathcal{F}_{t})$, $\eta\in L^{2}(\mathcal{F}_{s})$, we have
\[
Y_{r}:=\mathbb{E}_{s,t}^{g}[\xi+\eta]=\xi+\eta+\int_{r}^{t}g(u,Z_{u}%
)du-\int_{s}^{t}Z_{u}dB_{u},\;r\in\lbrack s,t].
\]
Thus $\bar{Y}_{r}:=Y_{r}-\eta$ is a $g$--solution on $[s,t]$ with terminal
condition $\bar{Y}_{t}=\xi+\eta$. This implies
\[
\mathbb{E}_{s,t}^{g}[\xi]+\eta=\bar{Y}_{s}=\mathbb{E}_{s,t}^{g}[\xi+\eta].
\]
The proof is complete.
\end{proof}

We consider the following self--financing condition:
\begin{equation}
\mathbb{E}_{s,t}^{g}[0]\equiv0,\; \forall0\leq s\leq t. \label{selffinance}%
\end{equation}

\begin{proposition}
$\mathbb{E}^{g}[\cdot]$ satisfies the self--financing condition if and only if
its price generating function $g$ satisfies
\[
g(t,0,0)=0\text{, a.e., a.s.}%
\]

\end{proposition}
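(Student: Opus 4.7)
The plan is to treat the two implications separately, with the forward direction being essentially trivial by uniqueness of BSDE solutions and the converse requiring a semimartingale decomposition argument.

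For the \emph{if} direction, I would assume $g(t,0,0)=0$ a.e., a.s.\ and observe that the constant pair $(Y_s, Z_s) \equiv (0,0)$ trivially satisfies the BSDE
\[
Y_s = 0 + \int_s^t g(r, Y_r, Z_r)\,dr - \int_s^t Z_r\,dB_r,
\]
because $g(r,0,0)=0$. By the uniqueness part of Theorem \ref{th2.1}, this is the unique solution, hence $\mathbb{E}_{s,t}^g[0] = Y_s = 0$ for every $0 \leq s \leq t$.

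For the \emph{only if} direction, I would suppose $\mathbb{E}_{s,t}^g[0] \equiv 0$ for all $s \leq t$. Fixing any $t$ and letting $(Y,Z)$ be the BSDE solution with terminal condition $0$, the hypothesis gives $Y_s \equiv 0$ on $[0,t]$. Substituting into the BSDE yields
\[
0 = \int_s^t g(r, 0, Z_r)\,dr - \int_s^t Z_r\,dB_r, \qquad 0 \le s \le t.
\]
The right-hand side decomposes a continuous semimartingale into its finite-variation part $\int_s^t g(r,0,Z_r)\,dr$ and its continuous local martingale part $-\int_s^t Z_r\,dB_r$. Since the total is zero and such a decomposition is unique, both parts must vanish. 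From $\int_s^t Z_r\,dB_r \equiv 0$, Itô isometry (or looking at quadratic variation) gives $Z \equiv 0$, $dt \times dP$-a.s. Then $\int_s^t g(r,0,0)\,dr = 0$ for all $s \leq t$, which forces $g(r,0,0) = 0$ a.e., a.s.

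The main (and only nontrivial) obstacle is the need to conclude that $Z \equiv 0$ in the converse: one cannot read off $g(\cdot,0,0) = 0$ directly from $Y \equiv 0$ without first eliminating the stochastic integral. The semimartingale-decomposition argument above handles this cleanly; alternatively, one could invoke Lemma \ref{Lem-f-f-} applied to the function $f(r,y,z) := g(r,0,z)$ after noting that $(Y_r^{0,0,z})$ trivially equals the BSDE solution when $z=0$, but the direct argument is shorter and uses only tools from the BSDE existence/uniqueness theorem.
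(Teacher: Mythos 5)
Your proposal is correct and follows essentially the same route as the paper: the ``if'' part by uniqueness of the BSDE solution, and the ``only if'' part by substituting $Y\equiv 0$ into the BSDE and concluding $Z\equiv 0$ before reading off $g(\cdot,0,0)=0$. The only difference is that you spell out, via uniqueness of the semimartingale decomposition and quadratic variation, the step the paper states tersely as ``Thus $Z_{s}\equiv 0$,'' which is a welcome elaboration rather than a new argument.
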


\begin{proof}
The \textquotedblleft if\textquotedblright\ part is obvious. \newline The
\textquotedblleft only if part\textquotedblright: $Y_{t}:=\mathbb{E}_{t,T}%
^{g}[0]\equiv0$, implies
\[
Y_{t}\equiv0\equiv0+\int_{t}^{T}g(s,0,Z_{s})ds-\int_{t}^{T}Z_{s}dB_{s}%
,\;t\in\lbrack0,T].
\]
Thus $Z_{s}\equiv0$ and then $g(s,0,Z_{s})=g(s,0,0)\equiv0$.
\end{proof}

Zero--interesting rate condition:
\[
\mathbb{E}_{s,t}^{g}[\eta]=\eta,\; \forall0\leq s\leq t<\infty\text{, }\eta\in
L^{2}(\mathcal{F}_{s}).
\]

\begin{proposition}
$\mathbb{E}^{g}[\cdot]$ satisfies the zero--interesting rate condition if and
only if its price generating function $g$ satisfies, for each $y\in R$,
\[
g(\cdot,y,0)=0\text{.}%
\]

\end{proposition}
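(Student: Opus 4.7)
The plan is to argue both implications directly from the defining BSDE (\ref{tsBSDE}), following the same template as the preceding self--financing proposition.

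For the ``if'' direction, I would assume $g(\cdot,y,0)=0$ in $L^{2}_{\mathcal{F}}(0,T)$ for each $y\in R$. Using the Lipschitz condition (\ref{h2.1}) in $y$ together with density of the rationals, one first upgrades this to a single $dt\times dP$--null set outside which $g(t,\omega,y,0)=0$ holds simultaneously for every $y\in R$. For arbitrary $\eta\in L^{2}(\mathcal{F}_{s})$ I would then guess the explicit solution $(Y_{r},Z_{r})=(\eta,0)$ on $[s,t]$: this pair is adapted (since $\eta$ is $\mathcal{F}_{s}\subset\mathcal{F}_{r}$--measurable for $r\geq s$) and trivially satisfies the BSDE with terminal value $\eta$, because $g(u,\eta(\omega),0)=0$ pointwise $dt\times dP$--a.s. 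The uniqueness part of Theorem \ref{th2.1} then yields $\mathbb{E}_{s,t}^{g}[\eta]=Y_{s}=\eta$.

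For the ``only if'' direction, I would specialize to a constant terminal $\eta=y\in R$. The hypothesis gives $\mathbb{E}_{u,t}^{g}[y]=y$ for all $0\leq u\leq t$, so the BSDE solution $(Y,Z)$ on $[0,t]$ with terminal $y$ satisfies $Y_{u}\equiv y$. Substituting into (\ref{tsBSDE}) yields
\[
\int_{0}^{u}Z_{r}\,dB_{r}=\int_{0}^{u}g(r,y,Z_{r})\,dr,\quad u\in[0,t].
\]
The left--hand side is a continuous local martingale while the right--hand side is absolutely continuous, hence of finite variation; uniqueness of the semimartingale decomposition forces both sides to vanish identically. Consequently $Z_{r}\equiv 0$, so that $\int_{0}^{u}g(r,y,0)\,dr=0$ for every $u\in[0,t]$, and differentiating in $u$ gives $g(\cdot,y,0)=0$ a.e. in time, a.s., which is the desired conclusion.

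I do not anticipate a serious obstacle: the argument exactly parallels the self--financing proof already given. The only mildly delicate point is the uniformity--in--$y$ of the null set in the ``if'' direction, which is handled routinely by Lipschitz continuity of $g$ in $y$. Note also that, unlike some of the earlier inverse--type results in this section, we do not need to invoke Lemma \ref{Lem-f-f-} here, because specializing to a constant terminal condition already makes $Y^{t,y,z}$ collapse to the constant $y$.
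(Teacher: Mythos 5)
Your proof is correct and follows essentially the same route as the paper's: specialize to a constant terminal value $y$, observe that the BSDE solution must then be $(Y,Z)\equiv(y,0)$, and read off $g(\cdot,y,0)=0$. You are in fact more careful than the paper on the two points it glosses over, namely the martingale/finite-variation argument forcing $Z\equiv 0$, and the uniform-in-$y$ null set needed to pass from constant terminal values to a general $\eta\in L^{2}(\mathcal{F}_{s})$ in the ``if'' direction.
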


\begin{proof}
\textbf{ }For a fixed $y\in R$, we consider $Y_{s}:=\mathbb{E}_{s,T}%
^{g}[y]\equiv y$. Let $Z_{s}$ be the corresponding It\^{o}'s integrand
\[
Y_{t}=y+\int_{t}^{T}g(s,Y_{s},Z_{s})-\int_{t}^{T}Z_{s}dB_{s}\equiv y.
\]
But this is equivalent to
\[
Y_{t}\equiv y,\;Z_{s}\equiv0\text{, }g(s,y,0)\equiv0.
\]

\end{proof}

For each $\bar z_{\cdot}^{i_{0}}\in L_{\mathcal{F}}^{2}(0,T)$%
\begin{equation}
\mathbb{E}_{t,T}[\xi]+\int_{0}^{t}\bar z_{s}^{i_{0}}dB_{s}^{i_{0}}%
=\mathbb{E}_{t,T}[\xi+\int_{t}^{T}\bar z_{s}^{i_{0}}dB_{s}^{i_{0}}]
\label{zBi0}%
\end{equation}

\begin{proposition}
Condition \ref{zBi0} holds if and only if $g(s,y,z)$ does not depends on the
$i_{0}$th component $z^{i_{0}}$ of $z\in R^{d}$.
\end{proposition}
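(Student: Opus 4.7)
The plan is to follow the two-pronged strategy employed throughout this section: a direct BSDE construction for the ``if'' direction, and SDE-trajectory testing combined with Lemma \ref{Lem-f-f-} for the ``only if'' direction.

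For the ``if'' direction, assume $g$ does not depend on $z^{i_0}$. Let $(Y, Z)$ solve the BSDE on $[0, T]$ with terminal $\xi$, so $Y_s = \mathbb{E}_{s,T}^{g}[\xi]$. On $[t, T]$ I define
\[
\tilde Y_s := Y_s + \int_t^s \bar z_r^{i_0}\, dB_r^{i_0}, \qquad \tilde Z_s := Z_s + \bar z_s^{i_0}\, e_{i_0},
\]
where $e_{i_0}$ denotes the $i_0$-th standard basis vector in $R^d$. A direct It\^o computation gives $d\tilde Y_s = -g(s, Y_s, Z_s)\, ds + \tilde Z_s\, dB_s$, and the $z^{i_0}$-independence of $g$ allows the replacement $g(s,Y_s,Z_s) = g(s,Y_s,\tilde Z_s)$. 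Hence $(\tilde Y, \tilde Z)$ solves the BSDE on $[t, T]$ with generator $g$ and terminal $\xi + \int_t^T \bar z_r^{i_0}\, dB_r^{i_0}$; uniqueness then yields $\tilde Y_t = \mathbb{E}_{t,T}^{g}[\xi + \int_t^T \bar z_r^{i_0}\, dB_r^{i_0}]$. Since $\tilde Y_t = Y_t = \mathbb{E}_{t,T}^g[\xi]$, adding the $\mathcal{F}_t$-measurable term $\int_0^t \bar z_s^{i_0}\, dB_s^{i_0}$ to both sides delivers (\ref{zBi0}).

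For the ``only if'' direction, fix $(t, y, z) \in [0, T] \times R \times R^d$ and let $Y^{t,y,z}$ solve the forward SDE (\ref{Ytyz}) with $f = g$, so that $Y_s^{t,y,z} = \mathbb{E}_{s,T}^{g}[Y_T^{t,y,z}]$ on $[t, T]$. Applying (\ref{zBi0}) with $\xi := Y_T^{t,y,z}$ and an arbitrary perturbation $\bar z_\cdot^{i_0} \in L_{\mathcal{F}}^{2}(0, T)$, then expanding both sides via their BSDE/SDE representations, uniqueness on $[t, T]$ forces
\[
g(r, Y_r^{t,y,z}, z + \bar z_r^{i_0} e_{i_0}) = g(r, Y_r^{t,y,z}, z), \quad dr\times dP\text{-a.s.\ on }[t, T]\times\Omega.
\]
Defining the auxiliary function $h(r, y', z') := g(r, y', z + z' e_{i_0}) - g(r, y', z)$, which shares the Lipschitz constant of $g$ and vanishes along $(Y_r^{t,y,z}, \bar z_r^{i_0})$ for every admissible $\bar z_\cdot^{i_0}$, Lemma \ref{Lem-f-f-} lifts this identity from the SDE trajectory to arbitrary $(y', z') \in R \times R$. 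This gives $g(r, y, z + z' e_{i_0}) = g(r, y, z)$ for all $(y, z, z')$, i.e.\ $g$ is independent of $z^{i_0}$.

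The main technical obstacle I anticipate is the ``only if'' extraction: the comparison of BSDEs must cleanly isolate the $z^{i_0}$-dependence of $g$, and Lemma \ref{Lem-f-f-} must then be applied to the correct difference-of-generators function to pass from the SDE trajectory to arbitrary $(y, z)$. Both devices mirror the arguments used in the preceding propositions on convexity, positive homogeneity, and cash invariance, so the same toolkit should carry through with only notational changes.
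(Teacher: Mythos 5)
Your proof follows the paper's argument in both directions: the ``if'' part by translating the BSDE solution by the stochastic integral of $\bar z^{i_{0}}$ against $B^{i_{0}}$ and absorbing $\bar z^{i_{0}}$ into the $i_{0}$-th component of $Z$, and the ``only if'' part by testing against the forward trajectories $Y^{t,y,z}$ of (\ref{Ytyz}) and invoking Lemma \ref{Lem-f-f-}; the only cosmetic differences are that the paper performs the shift from time $0$ rather than from $t$, and in the converse direction uses the single constant perturbation $\bar z^{i_{0}}_{s}\equiv -z^{i_{0}}$ (which already yields $g(t,y,z)=g(t,y,\bar z)$ with the $i_{0}$-th component zeroed out), whereas you use arbitrary perturbations and a difference generator $h$. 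One caveat, which your write-up shares with the paper's own proof: after the translation the generator is evaluated at $(Y_{s},\tilde Z_{s})$ and not at $(\tilde Y_{s},\tilde Z_{s})$, so the assertion that $(\tilde Y,\tilde Z)$ solves the $g$--BSDE with the shifted terminal value silently uses $g(s,Y_{s},\cdot)=g(s,\tilde Y_{s},\cdot)$; this is harmless when $g$ does not depend on $y$, but otherwise it is a genuine gap (for instance $g(s,y,z)=y$ is independent of $z^{i_{0}}$ and yet (\ref{zBi0}) as written fails for it), and the symmetric issue appears in the first argument of $g$ along the translated process in the ``only if'' step. Since the paper commits exactly the same elision, this is an inherited imprecision rather than a new error on your part.
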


\begin{proof}
\textbf{ }The \textquotedblleft if\textquotedblright\ part: Since process
$Y_{t}:=\mathbb{E}_{t,T}[\xi]$ solves the following BSDE
\[
Y_{t}=\xi+\int_{t}^{T}g(s,Y_{s},Z_{s})ds-\int_{t}^{T}Z_{s}dB_{s},
\]
we have
\[
Y_{t}+\int_{0}^{t}\bar{z}_{s}^{i_{0}}dB_{s}^{i_{0}}=\xi+\int_{0}^{T}\bar
{z}_{s}^{i_{0}}dB_{s}^{i_{0}}+\int_{t}^{T}g(s,Y_{s},Z_{s})ds-\int_{t}^{T}%
\bar{Z}_{s}dB_{s},
\]
where
\[
\bar{Z}_{s}=\left(  Z_{s}^{1},\cdots,Z_{s}^{i_{0}-1},Z_{s}^{i_{0}}+\bar{z}%
_{s}^{i_{0}},Z_{s}^{i_{0}+1},\cdots,Z_{s}^{d}\right)  .
\]
But since $g(s,y,z)$ does not depend the $i_{0}$th component of $z\in R^{d}$,
we thus have $g(s,Y_{s},Z_{s})\equiv g(s,Y_{s},\bar{Z}_{s})$. Thus
\[
Y_{t}+\int_{0}^{t}\bar{z}_{s}^{i_{0}}dB_{s}^{i_{0}}=\xi+\int_{0}^{t}\bar
{z}_{s}^{i_{0}}dB_{s}^{i_{0}}+\int_{t}^{T}g(s,Y_{s},\bar{Z}_{s})ds-\int%
_{t}^{T}\bar{Z}_{s}dB_{s}.
\]
This means that (\ref{zBi0}) holds. \newline The \textquotedblleft only
if\textquotedblright\ part: For each fixed $(t,y,z)$, let $(Y_{s}%
^{t,y,z})_{s\geq t}$ be the solution of (\ref{Ytyz}). We have,
\[
\mathbb{E}_{s,T}[Y_{T}^{t,y,z}]-z^{i_{0}}B_{s}^{i_{0}}=\mathbb{E}_{s,T}%
[Y_{T}^{t,y,z}-z^{i_{0}}B_{T}^{i_{0}}],\;s\in\lbrack t,T].
\]
Since the process $Y_{r}:=\mathbb{E}_{s,r}[Y_{r}^{t,y,z}-z_{r}^{i_{0}}%
B_{r}^{i_{0}}]$, $r\in\lbrack t,s]$, solves the BSDE
\[
Y_{s}^{t,y,z}-z^{i_{0}}B_{s}^{i_{0}}=Y_{s}=Y_{T}^{t,y,z}+z^{i_{0}}B_{T}%
^{i_{0}}+\int_{s}^{T}g(r,Y_{r},Z_{r})ds-\int_{s}^{T}Z_{r}dB_{r}.
\]
From which we deduce $Z_{s}=\bar{z}:=\left(  z^{1},\cdots,z^{i_{0}%
-1},0,z^{i_{0}+1},\cdots,z^{d}\right)  =z$ and thus
\[
g(r,Y_{r},Z_{r})=g(r,Y_{r}^{t,y,z},\bar{z})=g(r,Y_{r}^{t,y,z},z),\;0\leq t\leq
r\leq T.
\]
It then follows from Lemma \ref{Lem-f-f-} that
\[
g(t,y,\bar{z})=g(t,y,z),\;t\geq0\text{, a.e., a.s.,}%
\]
i.e., $g$ does not depend the $i_{0}$th component of $z\in R^{d}$.
\end{proof}

\begin{proposition}
The following condition are equivalent: \newline\textbf{(i)} For each $0\leq
s\leq t$, and $X\in L^{2}(\mathcal{F}_{t}^{s})$, the $g$--pricing mechanism
$\mathbb{E}_{s,t}^{g}[X]$ is deterministic; \newline\textbf{(ii) }The
corresponding pricing generating function $g$ is a deterministic function of
$(t,y,z)\in\lbrack0,T]\times R\times R^{d}$. $\;$
\end{proposition}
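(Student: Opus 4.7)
For (ii)$\Rightarrow$(i), the plan is a Brownian--shift argument. If $g$ is a deterministic function of $(t,y,z)$ and $X\in L^2(\mathcal{F}_t^s)$, the BSDE
\[
Y_r = X + \int_r^t g(u,Y_u,Z_u)\,du - \int_r^t Z_u\,dB_u,\qquad r\in[s,t],
\]
can be rewritten in terms of the shifted Brownian motion $\tilde B_u:=B_{s+u}-B_s$, which is independent of $\mathcal{F}_s$ and whose own filtration is exactly $(\mathcal{F}_{s+u}^s)_{u\ge 0}$. Both the terminal condition (a functional of the increments of $B$ after $s$) and the generator are then measurable with respect to this filtration, so the standard Picard iteration started from $(Y^0,Z^0)\equiv(0,0)$ produces iterates that are $\mathcal{F}_\cdot^s$--measurable; by uniqueness the full solution is, and in particular $Y_s=\mathbb{E}_{s,t}^g[X]$ is $\mathcal{F}_s^s$--measurable, hence a.s.\ deterministic.

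For the converse (i)$\Rightarrow$(ii) I would invoke the infinitesimal formula (\ref{limE}) with the choices $b\equiv 0$, $\sigma\equiv I_d$, $x=0$ and $p=z$, so that $X_{t+\varepsilon}^{t,0}=B_{t+\varepsilon}-B_t$ and the identity reduces to
\[
g(t,y,z)\;=\;L^{2}\text{--}\lim_{\varepsilon\downarrow 0}\frac{1}{\varepsilon}\Bigl(\mathbb{E}_{t,t+\varepsilon}^{g}\bigl[y+z\cdot(B_{t+\varepsilon}-B_t)\bigr]-y\Bigr).
\]
For each $\varepsilon>0$ the payoff $y+z\cdot(B_{t+\varepsilon}-B_t)$ lies in $L^2(\mathcal{F}_{t+\varepsilon}^t)$, so (i) forces $\mathbb{E}_{t,t+\varepsilon}^g[\,\cdot\,]$ to be a deterministic constant. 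Every element of the approximating net is therefore deterministic, and $L^2$--limits of deterministic constants are still deterministic (the variance passes to the limit and equals zero). Hence $g(t,y,z)$ is non--random for each fixed $(t,y,z)$.

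The main obstacle is a regularity technicality: the formula (\ref{limE}) was established in \cite{BCHMP00} under the additional assumption $g(\cdot,y,z)\in D_{\mathcal{F}}^2(0,T)$, whereas here we work only under (\ref{h2.1}). To proceed under (\ref{h2.1}) alone I would replace the pointwise--in--$t$ limit by its time--averaged analogue and invoke the Lebesgue differentiation theorem on $t\mapsto g(t,y,z)\in L_{\mathcal{F}}^2(0,T)$ to recover the identity for a.e.\ $t\in[0,T]$. Finally, Lemma~\ref{Lem-f-f-} (applied to the difference between $g$ and a deterministic version of itself along trajectories of the SDE (\ref{Ytyz})) promotes the a.e.\ determinism at fixed $(y,z)$ to an identification of $g(\omega,t,y,z)$ with a deterministic function on $[0,T]\times R\times R^d$, completing (ii).
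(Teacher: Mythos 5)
Your overall strategy is sound, and note first that the paper gives no proof to compare against: it states only that ``the proof is similar as the others'' and omits it. Your direction (ii)$\Rightarrow$(i) via the shifted Brownian motion $\tilde{B}_u=B_{s+u}-B_s$ is correct and essentially complete: $\tilde{B}$ generates the filtration $(\mathcal{F}_{s+u}^{s})_{u\geq0}$, so martingale representation holds for it, Picard iteration stays inside that filtration, and uniqueness of the BSDE solution in the larger filtration identifies the two solutions, whence $Y_{s}$ is measurable with respect to the (augmented) trivial $\sigma$--algebra $\mathcal{F}_{s}^{s}$ and hence a.s. constant. For (i)$\Rightarrow$(ii) you depart from the template used in the neighbouring propositions, and for good reason: that template would feed the terminal value $Y_{T}^{t,y,z}$ of the SDE (\ref{Ytyz}) back into the hypothesis, but $Y_{T}^{t,y,z}$ is not known to lie in $L^{2}(\mathcal{F}_{T}^{t})$ until $g$ is already known to be deterministic, so the direct analogue of the paper's trajectory--plus--Lemma~\ref{Lem-f-f-} argument is circular here. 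Your detour through the representation formula (\ref{limE}) avoids this circularity and is the right move.

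Two caveats. First, (\ref{limE}) is stated in the paper only under the extra hypothesis $g(\cdot,y,z)\in D_{\mathcal{F}}^{2}(0,T)$; under (\ref{h2.1}) alone the a.e.--$t$ representation of the generator is precisely the content of \cite{JiangL2004a} and \cite{JiangL2004b}, which the paper cites. Your remark about time--averaging and the Lebesgue differentiation theorem is the correct first step but not yet a proof: one must also control the error term of the form $\epsilon^{-1}E\int_{t}^{t+\epsilon}|g(r,Y_{r},Z_{r})-g(r,y,z)|\,dr$, i.e. show that the solution $(Y_{r},Z_{r})$ of the small--interval BSDE stays close to $(y,z)$, which requires estimates of the type of Lemma~\ref{em7.2a} together with a priori BSDE bounds. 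You should either supply these or invoke Jiang's theorem outright. Second, your closing appeal to Lemma~\ref{Lem-f-f-} is misplaced: that lemma transfers a property holding along trajectories of (\ref{Ytyz}) for every $(t,y,z)$ to a property at fixed $(y,z)$, which is not the situation you are in at that stage. Once $g(t,y,z)$ is a.s. constant for a.e. $t$ and each fixed $(y,z)$, simply set $g_{0}(t,y,z):=E[g(t,y,z)]$; the identity $g=g_{0}$, $dt\times dP$--a.e., extends from a countable dense set of $(y,z)$ to all of $R\times R^{d}$ by the Lipschitz condition (\ref{h2.1}), and that is all that is needed to conclude (ii).
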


The proof is similar as the others. We omit it.

\section{Pricing accumulated contingent claim by a general $\mathbb{E}%
_{s,t}[\cdot]$\label{ss4}}

For a given $K\in D_{\mathcal{F}}^{2}(0,\infty)$, we will find the
corresponding definition $\mathbb{E}_{s,t}[\cdot;K]$, for an abstract dynamic
pricing mechanism $\mathbb{E}_{s,t}[\cdot]$ defined on $[0,\infty)$. To this
end we first consider the case $K\in D_{\mathcal{F}}^{0}(0,\infty)$, the space
of step processes defined by%

\begin{equation}
D_{\mathcal{F}}^{0}(0,\infty):=\{K_{t}=\sum_{i=0}^{N-1}\xi_{i}1_{[t_{i}%
,t_{i+1})}(t),\ \{t_{i}\}_{i=0}^{T}\in\pi_{\lbrack0,\infty)}^{N},\ \xi_{i}\in
L^{2}(\mathcal{F}_{t_{i}})\}. \label{e3.2}%
\end{equation}

Now let $K_{t}=\sum_{i=0}^{N-1}\xi_{i}1_{[t_{i},t_{i+1})}(t)$ be fixed. We
observe that, for each $T>0$ and $X\in L^{2}(\mathcal{F}_{T})$, $(X,K)$ is an
accumulated contingent claim with maturity $T$ in a simple way that, at each
time $t_{i}\leq T$, the buyer of the contract $(X,K)$ receives $K_{t_{i}%
}-K_{t_{i}-}$, and, in addition, she or he receives $X$ at the maturity $T$.

We define, for each $0\leq i\leq N-1$, $s$, $t\in\lbrack t_{i},t_{i+1}]$, with
$s\leq t$ and $X\in L^{2}(\mathcal{F}_{t})$,
\[
\mathbb{E}_{s,t}^{i}[X;K]:=\mathbb{E}_{s,t}[X+K_{t}-K_{s}].
\]

\begin{lemma}
\label{m3.2}For each $i=0,1,2,\cdots,N-1$, $\mathbb{E}_{s,t}[\cdot;K]$,
$t_{i}\leq s\leq t\leq t_{i+1}$ is an $\mathcal{F}_{t}$--consistent pricing mechanism.
\end{lemma}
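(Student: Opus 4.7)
The plan is to verify conditions (A1)--(A4) of Definition \ref{d2.1} for the candidate family
\[
\mathbb{E}_{s,t}^{i}[X; K] := \mathbb{E}_{s,t}[X + K_t - K_s], \quad t_i \leq s \leq t \leq t_{i+1},
\]
by transporting each axiom, one at a time, across the definition from the ambient $\mathbb{E}_{s,t}[\cdot]$. The structural fact I will lean on throughout is that $K_u = \xi_i$ on $[t_i, t_{i+1})$, so the increment $K_t - K_s$ vanishes whenever both $s$ and $t$ lie in the open interval $[t_i, t_{i+1})$, while in the only remaining case $s < t_{i+1} = t$ it equals the $\mathcal{F}_{t_{i+1}}$-measurable jump $\xi_{i+1} - \xi_i$. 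In every case $K_t - K_s \in L^2(\mathcal{F}_t)$, so $X + K_t - K_s \in L^2(\mathcal{F}_t)$ and the defining formula makes sense.

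Axioms (A1), (A2), and (A4) reduce directly to the corresponding axioms for $\mathbb{E}_{s,t}[\cdot]$. Monotonicity $X \geq X' \Rightarrow X + K_t - K_s \geq X' + K_t - K_s$ a.s.\ together with (A1) of $\mathbb{E}$ gives (A1). At $s = t$ the increment vanishes, so (A2) of $\mathbb{E}$ gives $\mathbb{E}_{t,t}^{i}[X; K] = \mathbb{E}_{t,t}[X] = X$. For (A4), given $A \in \mathcal{F}_s$, the identity $1_A(X + K_t - K_s) = 1_A(1_A X + K_t - K_s)$ a.s.\ together with two applications of (A4) for $\mathbb{E}$ (once to $X + K_t - K_s$, once to $1_A X + K_t - K_s$) yields $1_A \mathbb{E}_{s,t}^{i}[X; K] = 1_A \mathbb{E}_{s,t}^{i}[1_A X; K]$.

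The only axiom requiring real care is (A3). For $t_i \leq r \leq s \leq t \leq t_{i+1}$, I would split on whether $s < t_{i+1}$ or $s = t_{i+1}$. In the first case $K_r = K_s = \xi_i$, so $K_s - K_r = 0$ and $K_t - K_s = K_t - K_r$; setting $Y := X + K_t - K_r \in L^2(\mathcal{F}_t)$ and invoking (A3) for the ambient mechanism gives
\[
\mathbb{E}_{r,s}^{i}[\mathbb{E}_{s,t}^{i}[X; K]; K] = \mathbb{E}_{r,s}[\mathbb{E}_{s,t}[Y]] = \mathbb{E}_{r,t}[Y] = \mathbb{E}_{r,t}^{i}[X; K].
\]
In the boundary case $s = t_{i+1}$ we necessarily have $t = t_{i+1}$ as well, so $K_t - K_s = 0$ and (A2) for $\mathbb{E}$ gives $\mathbb{E}_{s,t}^{i}[X; K] = X$; the tower then collapses to $\mathbb{E}_{r, t_{i+1}}[X + K_{t_{i+1}} - K_r] = \mathbb{E}_{r,t}^{i}[X; K]$, which is the definition.

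The boundary piece of (A3) is the one place one could be tempted into an illegitimate step: a careless attempt would try to pull the $\mathcal{F}_s$-measurable term $K_s - K_r$ out of $\mathbb{E}_{r,s}[\cdot]$ via a cash-invariance property that is not assumed for the ambient $\mathbb{E}_{s,t}[\cdot]$. The step-process structure sidesteps this entirely, because the offending term either vanishes (when $s < t_{i+1}$) or collapses into the terminal condition through (A2) (when $s = t_{i+1}$), so no translation-type axiom is ever needed.
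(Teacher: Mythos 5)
Your proof is correct and follows essentially the same route as the paper: both verify (A1)--(A4) by transporting each axiom through the definition $\mathbb{E}^{i}_{s,t}[X;K]=\mathbb{E}_{s,t}[X+K_{t}-K_{s}]$, with the identical two-fold application of (A4) for the ambient mechanism. The paper dismisses (A1)--(A3) as ``easy to check'' and writes out only (A4); your case analysis for (A3) supplies exactly the step the paper omits, and your observation that the step-process structure (the increment $K_{s}-K_{r}$ either vanishes or is absorbed via (A2)) removes any need for a cash-invariance property is the right one.
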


\begin{proof}
\textbf{ }It is easy to check that (A1), (A2) and (A3) holds. We now prove
(A4), i.e., for each $t_{i}\leq s\leq t\leq t_{i+1}$ and $X\in L^{2}%
(\mathcal{F}_{t})$,
\begin{equation}
1_{A}\mathbb{E}_{s,t}^{i}[X;K]=1_{A}\mathbb{E}_{s,t}^{i}[1_{A}X;K],\; \forall
A\in\mathcal{F}_{s}. \label{e3.4}%
\end{equation}
We have
\begin{align*}
1_{A}\mathbb{E}_{s,t}^{i}[X;K_{\cdot}]  &  =1_{A}\mathbb{E}_{s,t}%
[X+K_{t}-K_{s}]\\
\  &  =1_{A}\mathbb{E}_{s,t}[1_{A}(1_{A}X+K_{t}-K_{s})]\\
\  &  =1_{A}\mathbb{E}_{s,t}[1_{A}X+K_{t}-K_{s}]\\
\  &  =1_{A}\mathbb{E}_{s,t}^{i}[1_{A}X;K_{\cdot}].
\end{align*}
Thus (A4) holds.
\end{proof}

By Proposition \ref{m2.3}, there exists a unique $\mathcal{F}_{t}$--consistent
pricing mechanism $\mathbb{E}[\cdot;K]$, that coincides with $\mathbb{E}%
^{i}[\cdot;K]$ for each interval $[t_{i},t_{i+1}]$.

\begin{definition}
\label{m3.1}We denote this unique $\mathcal{F}_{t}$--consistent pricing
mechanism that coincides with $\mathbb{E}^{i}[\cdot;K]$ by $\mathbb{E}%
_{s,t}[\cdot;K]$:
\[
\mathbb{E}_{s,t}[X;K]:X\in L^{2}(\mathcal{F}_{t})\rightarrow L^{2}%
(\mathcal{F}_{s}).
\]

\end{definition}

\begin{remark}
It is easy to check that, for each accumulated contingent claim $(X,K)$ with
maturity $t$ and $K\in D_{\mathcal{F}}^{0}(0,t)$, the only consistent price of
$(X,K)$ of the pricing mechanism $\mathbb{E}$ is $\mathbb{E}_{s,t}[X;K]_{0\leq
s\leq t}$.
\end{remark}

\begin{proposition}
\label{m3.7}We assume that $\mathbb{E}_{s,t}[\cdot]_{0\leq s\leq t<\infty}$ is
a given pricing mechanism satisfying (A1)--(A5) and (A4$_{0}$). Then for each
$K\in D_{\mathcal{F}}^{2}(0,\infty)$ there exists a pricing mechanism
$\mathbb{E}_{s,t}[\cdot;K]_{0\leq s\leq t<\infty}$ which is dominated by
$\mathbb{E}^{g_{\mu}}$ in the following sense, for each $K$, $K^{\prime}\in
D_{\mathcal{F}}^{2}(0,\infty)$ and for each $0\leq s\leq t$, $X$, $X^{\prime
}\in L^{2}(\mathcal{F}_{t})$, we have
\begin{align}
\mathbb{E}_{s,t}^{-g_{\mu}}[X-X^{\prime};(K-K^{\prime})_{\cdot}]  &
\leq\mathbb{E}_{s,t}[X;K_{\cdot}]-\mathbb{E}_{s,t}[X^{\prime};K_{\cdot
}^{\prime}]\label{e3.10}\\
\  &  \leq\mathbb{E}_{s,t}^{g_{\mu}}[X-X^{\prime};(K-K^{\prime})_{\cdot
}]\hbox{, a.s.}\nonumber
\end{align}
and%
\begin{equation}
\mathbb{E}_{s,t}^{-g_{\mu}}[0;K_{\cdot}]\leq\mathbb{E}_{s,t}[0;K_{\cdot}%
]\leq\mathbb{E}_{s,t}^{g_{\mu}}[0;K_{\cdot}] \label{e3.10a}%
\end{equation}
Moreover, for each $K_{t}=\sum_{i=0}^{N-1}\xi_{i}I_{[t_{i},t_{i+1})}(t)$, we
have
\[
\mathbb{E}_{s,t}[X;K_{\cdot}]=\mathbb{E}_{s,t}[X+K_{t}-K_{s}],\ \forall\lbrack
s,t]\in\lbrack t_{i},t_{i+1}]\
\]
Such pricing mechanism is uniquely defined. Furthermore, under the pricing
mechanism $\mathbb{E}$ the price process of the accumulated contingent claim
$(X,K)$ with maturity $t$ is $\mathbb{E}_{s,t}[X;K_{\cdot}]$, $s\leq t$.
\end{proposition}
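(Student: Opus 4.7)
The plan is to build $\mathbb{E}_{s,t}[\cdot;K]$ first on the dense class of step processes $D_{\mathcal{F}}^{0}(0,\infty)$ using the pasting construction already available from Definition \ref{m3.1} and Proposition \ref{m2.3}, then to use the $\mathbb{E}^{g_{\mu}}$--domination (A5) together with Corollary \ref{c2.1} to show that this assignment is Lipschitz in $K$, and finally to extend to arbitrary $K\in D_{\mathcal{F}}^{2}(0,\infty)$ by continuity. First, I would fix a step process $K_{t}=\sum_{i=0}^{N-1}\xi_{i}1_{[t_{i},t_{i+1})}(t)$ and note that Lemma \ref{m3.2} already gives an $\mathcal{F}_{t}$--consistent pricing mechanism on each slab $[t_{i},t_{i+1}]$; Proposition \ref{m2.3} then pastes these into a unique $\mathcal{F}_{t}$--consistent mechanism $\mathbb{E}[\cdot;K]$ on $[0,\infty)$ coinciding with $\mathbb{E}_{s,t}[X+K_{t}-K_{s}]$ whenever $[s,t]\subset[t_{i},t_{i+1}]$.

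Next I would prove the double domination (\ref{e3.10}) \emph{for step} $K,K'\in D_{\mathcal{F}}^{0}$. On a slab $[s,t]\subset[t_{i},t_{i+1}]$, the definition gives $\mathbb{E}_{s,t}[X;K]-\mathbb{E}_{s,t}[X';K']=\mathbb{E}_{s,t}[X+K_{t}-K_{s}]-\mathbb{E}_{s,t}[X'+K'_{t}-K'_{s}]$, which by (A5) is dominated by $\mathbb{E}_{s,t}^{g_{\mu}}[(X-X')+(K-K')_{t}-(K-K')_{s}]=\mathbb{E}_{s,t}^{g_{\mu}}[X-X';(K-K')]$, where the last equality uses Definition \ref{d2.2} for $g_{\mu}$ on a single slab. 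The reverse domination by $\mathbb{E}^{-g_{\mu}}$ follows from (\ref{e2.5.1}) applied to (A5) with the roles of $X,X'$ exchanged. To glue across the grid points $t_{i}$, I would iterate: by (A3), $\mathbb{E}_{s,t}[X;K]=\mathbb{E}_{s,t_{i+1}}[\mathbb{E}_{t_{i+1},t}[X;K];K]$, and using Corollary \ref{c2.1} (which gives the same iterated domination for $\mathbb{E}^{g_{\mu}}$) together with the one--slab estimate, induction on the number of grid points inside $[s,t]$ yields (\ref{e3.10}) for all step $K,K'$. In particular, taking $X=X'=0$ and $K'=0$ produces (\ref{e3.10a}).

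Now I would extend to general $K\in D_{\mathcal{F}}^{2}(0,\infty)$. Approximate $K$ by step processes $K^{n}\in D_{\mathcal{F}}^{0}$ with $\|K^{n}-K\|\to 0$ in the natural norm on $D_{\mathcal{F}}^{2}$; the domination (\ref{e3.10}) applied to $K^{n},K^{m}$ controls $\mathbb{E}_{s,t}[X;K^{n}]-\mathbb{E}_{s,t}[X;K^{m}]$ in $L^{2}(\mathcal{F}_{s})$ by $\mathbb{E}_{s,t}^{g_{\mu}}[0;K^{n}-K^{m}]$ and its reverse, and the $L^{2}$--continuity of the BSDE defining $\mathbb{E}^{g_{\mu}}$ (Theorem \ref{th2.1}) forces this bound to vanish as $n,m\to\infty$. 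Thus $\mathbb{E}_{s,t}[X;K^{n}]$ is Cauchy in $L^{2}(\mathcal{F}_{s})$; define $\mathbb{E}_{s,t}[X;K]$ as its limit and check, using the same sandwich, that the limit is independent of the approximating sequence. Passing to the limit in (\ref{e3.10}) and (\ref{e3.10a}) for the approximations preserves both inequalities for the general $K$.

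Finally I would verify that the extended $\mathbb{E}_{\cdot,\cdot}[\cdot;K]$ still satisfies (A1)--(A4) and the required domination: each axiom is a pointwise or almost-sure relation which passes to $L^{2}$--limits and is already established for step processes, so monotonicity, $\mathbb{E}_{t,t}[X;K]=X$, the tower property (A3), and the $1_{A}$--property (A4) all carry over. Uniqueness is then forced: any other pricing mechanism $\widetilde{\mathbb{E}}[\cdot;K]$ coinciding with $\mathbb{E}_{s,t}[X+K_{t}-K_{s}]$ on slabs of a refining step approximation must, by the same double domination applied to the difference and the uniqueness part of Proposition \ref{m2.3}, agree with $\mathbb{E}[\cdot;K]$. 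The main obstacle I anticipate is precisely the gluing step in the domination proof: one has to check that the nonlinear ``change of variable'' $K\mapsto K-K'$ is compatible with the pasting across grid points, and this relies crucially on the fact that $\mathbb{E}^{g_{\mu}}$ itself obeys an iterated domination of exactly the same form (Corollary \ref{c2.1}), so that the induction closes.
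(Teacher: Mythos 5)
The paper does not actually prove Proposition \ref{m3.7}: it states the result and refers the reader to \cite{Peng2005d}. But the scaffolding set up in Section \ref{ss4} --- the space $D_{\mathcal{F}}^{0}$ of step processes, Lemma \ref{m3.2}, the pasting Proposition \ref{m2.3} and Definition \ref{m3.1} --- points to exactly the construction you propose: the one--slab identification $\mathbb{E}_{s,t}[X;K]=\mathbb{E}_{s,t}[X+K_{t}-K_{s}]$, domination for step $K$ by induction over the grid using (A3) together with the self--domination (\ref{e2.10aa}) and the semigroup identity (\ref{e2.5}) of $\mathbb{E}^{g_{\mu}}$, and then extension to general $K\in D_{\mathcal{F}}^{2}$ by a Cauchy argument driven by the sandwich (\ref{e3.10}). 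Your argument is correct in its essentials, and the gluing induction does close for precisely the reason you identify.

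Two points deserve more care than you give them. First, an RCLL process in $D_{\mathcal{F}}^{2}(0,\infty)$ is not in general approximated \emph{uniformly} by step processes built on a deterministic dyadic grid: at a jump time of $K$ the left--endpoint approximation converges to $K_{r-}$, not $K_{r}$. What you actually get is $K_{r}^{n}\rightarrow K_{r}$ outside the countable jump set, which, combined with $E[\sup_{s}|K_{s}|^{2}]<\infty$ and dominated convergence, suffices for the a priori BSDE estimate that bounds $\mathbb{E}_{s,t}^{g_{\mu}}[0;K_{\cdot}^{n}-K_{\cdot}^{m}]$; but you should either say this explicitly or adjoin the endpoints $s$ and $t$ to the grids so that the terminal and initial increments converge in $L^{2}$. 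Second, passing (A3) to the limit is not a purely pointwise matter: to interchange the limit with the outer evaluation in $\mathbb{E}_{r,s}[\mathbb{E}_{s,t}[X;K^{n}];K^{n}]$ you need the maps $\xi\mapsto\mathbb{E}_{r,s}[\xi;K^{n}]$ to be $L^{2}$--Lipschitz uniformly in $n$. This uniform continuity is itself a consequence of (\ref{e3.10}) applied with $K=K^{\prime}=K^{n}$ together with the standard estimate for the $g_{\mu}$--BSDE, so the gap is easily closed, but it is a step, not a tautology, and it is also what makes your uniqueness argument (sandwiching an arbitrary candidate against the step approximations) rigorous for a general $K$ that is constant on no slab.
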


The prove of this proposition can be found in \cite{Peng2005d}.

\section{$\mathbb{E}[\cdot;K]$--martingales \label{ss5}}

Hereinafter, $\mathbb{E}_{s,t}[\cdot]$ will be a fixed $\mathcal{F}_{t}%
$--consistent pricing mechanism satisfying (A1)--(A5) and (A4$_{0}$). Similar
to $\mathbb{E}_{s,t}^{g}[\cdot]$-pricing mechanism, we introduce the notion of
$\mathbb{E}[\cdot;K]$--martingale:

\begin{definition}
\label{m4.1}Let $K\in D_{\mathcal{F}}^{2}(0,T)$ be given. A process $Y\in
L_{\mathcal{F}}^{2}(t_{0},t_{1})$ satisfying $E[\mathrm{ess}\sup_{s\in
[t_{1},t_{1}]}|Y_{s}|^{2}]<\infty$, is said to be an $\mathbb{E}[\cdot
;K]$--martingales (resp. $\mathbb{E}[\cdot;K]$--supermartingale,
$\mathbb{E}[\cdot;K]$--submartingale) on $[t_{0},t_{1}]$ if for each
$t_{0}\leq s\leq t\leq t_{1}$, we have
\begin{equation}
\mathbb{E}_{s,t}[Y_{t};K]=Y_{s}\hbox{, (resp. }\leq Y_{s}\hbox{,
}\geq Y_{s}\hbox{), a.s.} \label{e4.1}%
\end{equation}

\end{definition}

We then can apply $\mathbb{E}^{g}$--supermartingale decomposition theorem,
i.e., Proposition \ref{p2.3}, to get the following result.

\begin{proposition}
\label{m4.3}We assume (A1)--(A5) and (A4$_{0}$). Let $K$ $\in D_{\mathcal{F}%
}^{2}(0,T)$ be given. For fixed $t\in[0,T]$ and $X\in L^{2}(\mathcal{F}_{t})$,
the process $Y_{s}^{t,X,K}:=\mathbb{E}_{s,t}[X;K],\;s\in[0,t]$, has the
following expression: there exist processes $(g_{\cdot}^{t,X,K},z_{\cdot
}^{t,X,K})\in L_{\mathcal{F}}^{2}(0,t;R\times R^{d})$ such that
\begin{equation}
Y_{s}^{t,X,K}=X+K_{t}-K_{s}+\int_{s}^{t}g_{r}^{t,X,K}dr-\int_{s}^{t}%
z_{r}^{t,X,K}dB_{r},\; \;s\in[0,t], \label{e4.4}%
\end{equation}
such that
\begin{equation}
|g_{s}^{t,X,K}|\leq\mu(|Y_{s}^{t,X,K}|+|z_{s}^{t,X,K}|),\; \forall s\in[0,t].
\label{e4.5}%
\end{equation}
Moreover let $Y_{s}^{t,X^{\prime},K^{\prime}}:=\mathbb{E}_{s,t_{1}}[X^{\prime
};K^{\prime}]$, $s\in[0,t]$, for some other $K^{\prime}\in D_{\mathcal{F}}%
^{2}(0,T)$, $X^{\prime}\in L^{2}(\mathcal{F}_{t})$ and let $(g_{\cdot
}^{t,X^{\prime},K^{\prime}},z_{\cdot}^{t,X^{\prime},K^{\prime}})$ be the
corresponding expression in (\ref{e4.4}), then we have
\begin{equation}
|g_{s}^{t,X,K}-g_{s}^{t,X^{\prime},K^{\prime}}|\leq\mu(|Y_{s}^{t,X,K}%
-Y_{s}^{t,X^{\prime},K^{\prime}}|+|z_{s}^{t,X,K}-z_{s}^{t,X^{\prime}%
,K^{\prime}}|),\; \forall s\in[0,t]. \label{e4.6}%
\end{equation}

\end{proposition}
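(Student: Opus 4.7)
The plan is to squeeze $Y_s^{t,X,K} := \mathbb{E}_{s,t}[X;K]$ between an $\mathbb{E}^{g_\mu}[\cdot;K]$-submartingale and an $\mathbb{E}^{-g_\mu}[\cdot;K]$-supermartingale using the two-sided domination (\ref{e3.10}), then extract the representation (\ref{e4.4}) by applying the nonlinear Doob--Meyer decomposition (Corollary \ref{c2.2}) from both sides and identifying the semi-martingale parts.

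\textbf{Step 1 (sub/super-martingale property).} By construction $Y$ is an $\mathbb{E}[\cdot;K]$-martingale. Since $g_\mu(0,0)=0$ gives $\mathbb{E}^{g_\mu}_{r,s}[0;0]=0$, and (A4$_0$) combined with (\ref{e3.10a}) gives $\mathbb{E}_{r,s}[0;0]=0$, applying (\ref{e3.10}) with $(X',K')=(0,0)$ yields, for $r\leq s\leq t$,
\[
\mathbb{E}^{-g_\mu}_{r,s}[Y_s;K]\ \leq\ Y_r\ \leq\ \mathbb{E}^{g_\mu}_{r,s}[Y_s;K],
\]
so $Y$ is simultaneously an $\mathbb{E}^{-g_\mu}[\cdot;K]$-supermartingale and an $\mathbb{E}^{g_\mu}[\cdot;K]$-submartingale.

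\textbf{Step 2 (two decompositions).} Applying Corollary \ref{c2.2} on the supermartingale side produces an increasing $A^-\in D^2_{\mathcal{F}}(0,t)$ with $A^-_0=0$ and $z^-\in L^2_{\mathcal{F}}(0,t;R^d)$ satisfying
\[
Y_s = X + (K_t-K_s) + (A^-_t-A^-_s) - \int_s^t g_\mu(Y_r,z^-_r)\,dr - \int_s^t z^-_r\,dB_r.
\]
On the submartingale side, Remark \ref{r2.1} together with the identity $g_\mu^\ast = -g_\mu$ shows that $-Y$ is an $\mathbb{E}^{-g_\mu}[\cdot;-K]$-supermartingale; a second application of Corollary \ref{c2.2} followed by a sign change yields an increasing $A^+$ with $A^+_0=0$ and $z^+$ such that
\[
Y_s = X + (K_t-K_s) - (A^+_t-A^+_s) + \int_s^t g_\mu(Y_r,z^+_r)\,dr - \int_s^t z^+_r\,dB_r,
\]
where $g_\mu(-y,-z)=g_\mu(y,z)$ has been used to rewrite the drift.

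\textbf{Step 3 (identification and the bound (\ref{e4.5})).} Both displays are semi-martingale decompositions of $Y$ in the Brownian filtration. Subtracting them, the resulting continuous local martingale $\int(z^--z^+)\,dB$ equals a finite-variation process and must therefore vanish, forcing $z^-=z^+ =: z^{t,X,K}$. Uniqueness of the finite-variation part then gives
\[
(A^-_t-A^-_s)+(A^+_t-A^+_s) = 2\int_s^t g_\mu(Y_r,z_r)\,dr.
\]
Since both $A^\pm$ are increasing and their sum is absolutely continuous, each is absolutely continuous with density $\alpha^\pm_r\in[0,\,2g_\mu(Y_r,z_r)]$. Setting $g^{t,X,K}_r := -g_\mu(Y_r,z_r) + \alpha^-_r$ gives (\ref{e4.4}) together with $|g^{t,X,K}_r|\leq g_\mu(Y_r,z_r)=\mu(|Y_r|+|z_r|)$, which is (\ref{e4.5}). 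For the comparison (\ref{e4.6}), I set $\hat Y_s := Y_s^{t,X,K} - Y_s^{t,X',K'}$, $\hat X := X-X'$, $\hat K := K-K'$; the two-sided domination (\ref{e3.10}) directly yields
\[
\mathbb{E}^{-g_\mu}_{r,s}[\hat Y_s;\hat K]\ \leq\ \hat Y_r\ \leq\ \mathbb{E}^{g_\mu}_{r,s}[\hat Y_s;\hat K],
\]
so Steps 1--3 applied to $\hat Y$ produce $(\hat g,\hat z)$ with $|\hat g_r|\leq\mu(|\hat Y_r|+|\hat z_r|)$. Subtracting the BSDEs (\ref{e4.4}) for $Y^{t,X,K}$ and $Y^{t,X',K'}$ gives a second semi-martingale decomposition of $\hat Y$; uniqueness forces $\hat z = z^{t,X,K}-z^{t,X',K'}$ and $\hat g = g^{t,X,K}-g^{t,X',K'}$, delivering (\ref{e4.6}).

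\textbf{Main obstacle.} The delicate point is Step 3: promoting the two RCLL increasing processes $A^\pm$ produced by the nonlinear Doob--Meyer theorem to absolutely continuous processes with explicit density bounds $0\leq\alpha^\pm\leq 2g_\mu(Y,z)$. This hinges on the uniqueness of the continuous semi-martingale decomposition in the Brownian filtration and on careful bookkeeping when combining Remark \ref{r2.1} with Corollary \ref{c2.2}; a secondary but necessary task is verifying that $Y^{t,X,K}\in D^2_{\mathcal{F}}(0,t)$ so that Corollary \ref{c2.2} genuinely applies, which follows from (\ref{e3.10}) by sandwiching $Y$ between $\mathbb{E}^{\pm g_\mu}$-pricing processes.
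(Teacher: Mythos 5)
Your proposal is correct and follows essentially the same route as the paper: sandwich $Y^{t,X,K}$ as an $\mathbb{E}^{g_{\mu}}[\cdot;K]$--submartingale and $\mathbb{E}^{-g_{\mu}}[\cdot;K]$--supermartingale, apply the nonlinear Doob--Meyer decomposition on both sides, identify the martingale parts to force $z^{+}=z^{-}$ and $dA^{+}+dA^{-}=2g_{\mu}(Y,z)\,ds$, and read off the density bound; the estimate (\ref{e4.6}) is then obtained exactly as in the paper by running the same argument on $\hat{Y}$ and invoking uniqueness of the semimartingale decomposition. The only (welcome) additions are your explicit justification of the sub/supermartingale property via (\ref{e3.10}) with $(X',K')=(0,0)$ and your remark that $Y^{t,X,K}\in D_{\mathcal{F}}^{2}(0,t)$ must be checked before Corollary \ref{c2.2} applies, both of which the paper leaves implicit.
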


\begin{proof}
\textbf{ }Since $(Y_{s}^{t,X,K})_{s\in\lbrack0,t]}$, is an $\mathbb{E}%
^{g_{\mu}}[\cdot;K]$--submartingale and $\mathbb{E}^{-g_{\mu}}[\cdot
;K]$--super--martingale, by Proposition \ref{p2.3} and Corollary \ref{c2.2},
there exists an increasing process $A_{\cdot}^{+}\in D_{\mathcal{F}}^{2}(0,t)$
and $A_{\cdot}^{-}\in D_{\mathcal{F}}^{2}(0,t)$ with $A_{0}^{+}=A_{0}^{-}=0$,
such that
\begin{equation}
Y_{s}^{t,X,K}=\mathbb{E}_{s,t}^{g_{\mu}}[X;(K-A^{+})_{\cdot}]=\mathbb{E}%
_{s,t}^{g_{\mu}}[X;(K+A^{-})_{\cdot}],\;s\in\lbrack0,t]. \label{e4.7}%
\end{equation}
According to the notion of $\mathbb{E}^{g}$ defined in (\ref{e2.3}),
$Y_{s}^{t,X,K}$ is the solution of the following BSDE on $[0,t]$:
\begin{align}
Y_{s}^{t,X,K}  &  =X+(K-A^{+})_{t}-(K-A^{+})_{s}\label{e4.8}\\
&  \ +\int_{s}^{t}\mu(|Y_{r}^{t,X,K}|+|Z_{r}^{+}|)dr-\int_{s}^{t}Z_{r}%
^{+}dB_{r}\nonumber
\end{align}
and
\begin{align}
Y_{s}^{t,X,K}  &  =X+(K+A^{-})_{t}-(K+A^{-})_{s}\label{e4.9}\\
&  -\int_{s}^{t}\mu(|Y_{r}^{t,X,K}|+|Z_{r}^{-}|)dr-\int_{s}^{t}Z_{r}^{-}%
dB_{r}.\nonumber
\end{align}
It then follows that $Z_{s}^{t,X,K}:=Z_{s}^{+}\equiv Z_{s}^{-}$, $s\in
\lbrack0,t]$ and thus
\[
-dA_{s}^{+}+\mu(|Y_{s}^{t,X,K}|+|Z_{s}^{t,X,K}|)ds\equiv dA_{s}^{-}-\mu
(|Y_{s}^{t,X,K}|+|Z_{s}^{t,X,K}|)ds,
\]
or
\begin{equation}
dA_{s}^{-}+dA_{s}^{+}\equiv2\mu(|Y_{s}^{t,X,K}|+|Z_{s}^{t,X,K}|)ds,\;s\in
\lbrack0,t] \label{e4.10}%
\end{equation}
Thus $dA^{+}$ and $dA^{-}$ are absolutely continuous with respect to $ds$. We
denote $a_{s}^{+}ds=dA_{s}^{+}$ and $a_{s}^{-}ds=dA_{s}^{-}$. It is clear
that
\begin{align*}
0  &  \leq a_{s}^{+}\leq2\mu(|Y_{s}^{t,X,K}|+|Z_{s}^{t,X,K}|),\\
0  &  \leq a_{s}^{-}\leq2\mu(|Y_{s}^{t,X,K}|+|Z_{s}^{t,X,K}|),\;dP\times
dt\hbox{--a.e.}
\end{align*}
We then can rewrite (\ref{e4.8}) as
\begin{equation}
Y_{s}^{t,X,K}=X+K_{t}-K_{s}+\int_{s}^{t}[-a_{r}^{+}+\mu(|Y_{r}^{t,X,K}%
|+|Z_{r}^{+}|)]dr-\int_{s}^{t}Z_{r}^{+}dB_{r}. \label{e4.11}%
\end{equation}
Thus, by setting $g_{r}^{t,X,K}:=-a_{r}^{+}+\mu(|Y_{r}^{t,X,K}|+|Z_{r}^{+}|)$,
we have the expression (\ref{e4.4}) as well as the estimate (\ref{e4.5}).
\end{proof}

It remains to prove (\ref{e4.6}). By (A5) of Proposition \ref{p2.3} $\hat
{Y}_{s}=Y_{s}^{t,X,K}-Y_{s}^{t,X^{\prime},K^{\prime}}$ is an $\mathbb{E}%
^{g_{\mu}}[\cdot;K-K^{\prime}]$--submartingale and an $\mathbb{E}^{-g_{\mu}%
}[\cdot;K-K^{\prime}]$--supermartingale on $[0,t]$. Thus we can repeat the
above procedure to prove that there exist processes $(\hat{g}_{\cdot},\hat
{Z}_{\cdot})\in L_{\mathcal{F}}^{2}(0,t;R\times R^{d})$ such that
\begin{equation}
\hat{Y}_{s}=X-X^{\prime}+(K-K^{\prime})_{t}-(K-K^{\prime})_{s}+\int_{s}%
^{t}\hat{g}_{r}dr-\int_{s}^{t}\hat{Z}_{r}dB_{r},\; \;s\in\lbrack0,t],
\label{e4.12}%
\end{equation}
such that
\begin{equation}
|\hat{g}_{s}|\leq\mu(|\hat{Y}_{s}|+|\hat{Z}_{s}|),\; \forall s\in\lbrack0,t].
\label{e4.13}%
\end{equation}
But by (\ref{e4.4}) and $\hat{Y}_{s}\equiv Y_{s}^{t,X,K}-Y_{s}^{t,X^{\prime
},K^{\prime}}$, we immediately have
\begin{equation}
\hat{g}_{s}\equiv g_{s}^{t,X,K}-g_{s}^{t,X^{\prime},K^{\prime}},\; \hat{Z}%
_{s}\equiv z_{s}^{t,X,K}-z_{s}^{t,X^{\prime},K^{\prime}}. \label{e4.14}%
\end{equation}
This with (\ref{e4.13}) yields (\ref{e4.6}). The proof is complete.\

\begin{corollary}
\label{m4.4}Let $K^{1}$ and $K^{2}\in D_{\mathcal{F}}^{2}(0,T)$ and $X^{1}\in
L^{2}(\mathcal{F}_{t_{1}})$, $X^{2}\in L^{2}(\mathcal{F}_{t_{2}})$ be given
for some fixed $0\leq t_{1}\leq t_{2}\leq T$ and let $(g_{s}^{t_{i}%
,X^{i},K^{i}},Z_{s}^{t_{i},X^{i},K^{i}})_{s\in[0,t_{i}]}$, $i=1,2$, be the
pair in (\ref{e4.4}) for $Y_{s}^{t_{i},X^{i},K^{i}}=\mathbb{E}_{s,t_{i}}%
[X^{i};(K^{i})_{\cdot}]$, $i=1,2$, respectively. Then we have
\begin{equation}
|g_{s}^{t,X^{1},K^{1}}-g_{s}^{t,X^{2},K^{2}}|\leq\mu(|Y_{s}^{t_{1},X^{1}%
,K^{2}}-Y_{s}^{t_{2},X^{2},K^{2}}|+|z_{s}^{t_{1},X^{1},K^{1}}-z_{s}%
^{t_{2},X^{2},K^{2}}|),\; \forall s\in[0,t_{1}]. \label{e4.15}%
\end{equation}

\end{corollary}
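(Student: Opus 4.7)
The plan is to reduce this inhomogeneous-maturity estimate to the equal-maturity estimate (\ref{e4.6}) of Proposition \ref{m4.3} by rewriting the longer process as a process of maturity $t_1$ with a suitably chosen terminal condition. I read the right-hand side as $|Y_s^{t_1,X^1,K^1}-Y_s^{t_2,X^2,K^2}|$, treating the exponent $K^2$ on $Y^{t_1,X^1,\cdot}$ as a typo.

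First I would set $\tilde X^2 := Y^{t_2,X^2,K^2}_{t_1}=\mathbb{E}_{t_1,t_2}[X^2;K^2]\in L^2(\mathcal{F}_{t_1})$. Because $\mathbb{E}[\cdot;K^2]$ enjoys the consistency property (A3) on $[0,t_2]$, for every $s\in[0,t_1]$ one has
\[
Y_s^{t_2,X^2,K^2}=\mathbb{E}_{s,t_2}[X^2;K^2]=\mathbb{E}_{s,t_1}\bigl[\mathbb{E}_{t_1,t_2}[X^2;K^2];K^2\bigr]=Y_s^{t_1,\tilde X^2,K^2}.
\]
In other words, restricting the maturity-$t_2$ process to $[0,t_1]$ produces exactly the maturity-$t_1$ process attached to the terminal $\tilde X^2$ with the same dividend stream $K^2$.

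Next I would identify the decomposition coefficients. On $[0,t_1]$, the representation (\ref{e4.4}) for $(X^2,K^2,t_2)$ can be rewritten, by splitting the integrals at $t_1$, as
\[
Y_s^{t_2,X^2,K^2}=\tilde X^2+K^2_{t_1}-K^2_s+\int_s^{t_1} g^{t_2,X^2,K^2}_r\,dr-\int_s^{t_1} z^{t_2,X^2,K^2}_r\,dB_r,
\]
which is precisely a representation of the form (\ref{e4.4}) for $(\tilde X^2,K^2,t_1)$, and the bound (\ref{e4.5}) is inherited from its validity on $[0,t_2]$. The uniqueness of the semimartingale decomposition of $Y^{t_1,\tilde X^2,K^2}=Y^{t_2,X^2,K^2}|_{[0,t_1]}$ — the $dB$-integrand being determined by Brownian martingale representation and the absolutely-continuous drift being determined thereafter — then forces
\[
z_s^{t_2,X^2,K^2}=z_s^{t_1,\tilde X^2,K^2},\qquad g_s^{t_2,X^2,K^2}=g_s^{t_1,\tilde X^2,K^2},\qquad s\in[0,t_1].
\]

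Finally I would apply the equal-maturity estimate (\ref{e4.6}) of Proposition \ref{m4.3} to the two pairs $(X^1,K^1)$ and $(\tilde X^2,K^2)$ at the common maturity $t_1$, obtaining
\[
|g_s^{t_1,X^1,K^1}-g_s^{t_1,\tilde X^2,K^2}|\leq\mu\bigl(|Y_s^{t_1,X^1,K^1}-Y_s^{t_1,\tilde X^2,K^2}|+|z_s^{t_1,X^1,K^1}-z_s^{t_1,\tilde X^2,K^2}|\bigr),
\]
and then substituting the identifications $Y^{t_1,\tilde X^2,K^2}=Y^{t_2,X^2,K^2}$, $g^{t_1,\tilde X^2,K^2}=g^{t_2,X^2,K^2}$, $z^{t_1,\tilde X^2,K^2}=z^{t_2,X^2,K^2}$ on $[0,t_1]$ gives (\ref{e4.15}).

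The only delicate step is the uniqueness assertion linking $(g^{t_2,X^2,K^2},z^{t_2,X^2,K^2})$ to $(g^{t_1,\tilde X^2,K^2},z^{t_1,\tilde X^2,K^2})$ on $[0,t_1]$. This is the main obstacle; if needed, it can be argued directly by identifying the martingale part of $Y^{t_2,X^2,K^2}+K^2$ via Brownian representation, which determines $z$ uniquely, after which the drift coefficient $g$ is recovered pathwise from the finite-variation part and is therefore also unique $dP\times dt$-a.e. Everything else is bookkeeping.
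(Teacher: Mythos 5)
Your proposal is correct and follows essentially the same route as the paper: the paper's own proof consists precisely of the observation that $Y_{s}^{t_{2},X^{2},K^{2}}=\mathbb{E}_{s,t_{1}}[Y_{t_{1}}^{t_{2},X^{2},K^{2}};(K^{2})_{\cdot}]$ for $s\in[0,t_{1}]$, followed by an appeal to Proposition \ref{m4.3}. You have merely spelled out the identification of the pairs $(g,z)$ on $[0,t_{1}]$ (via uniqueness of the semimartingale decomposition) that the paper leaves implicit, which is a sound and welcome addition.
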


\begin{proof}
With the observation
\[
Y_{s}^{t_{2},X^{2},K^{2}}=\mathbb{E}_{s,t_{1}}[Y_{t_{1}}^{t_{2},X^{2},K^{2}%
};(K^{2})_{\cdot}],\;s\in\lbrack0,t_{1}],
\]
it is an immediate consequence of Proposition \ref{m4.3}.\
\end{proof}

\begin{corollary}
\label{m4.5}For each $t\in[0,T]$ and $X\in L^{2}(\mathcal{F}_{t})$, $K\in
D_{\mathcal{F}}^{2}(0,T)$, the process $(\mathbb{E}_{s,t}[X;K_{\cdot}%
])_{s\in[0,t]}$ is also in $D_{\mathcal{F}}^{2}(0,t)$. If moreover, $K\in
S_{\mathcal{F}}^{2}(0,T)$ (resp. It\^o's process), then $(\mathbb{E}%
_{s,t}[X;K_{\cdot}])_{s\in[0,t]}$ is also in $S_{\mathcal{F}}^{2}(0,t)$ (resp.
It\^o's process).
\end{corollary}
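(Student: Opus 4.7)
The plan is to read off everything from the representation established in Proposition \ref{m4.3}. From that proposition, there exist processes $(g^{t,X,K},z^{t,X,K})\in L^2_{\mathcal{F}}(0,t;R\times R^d)$ such that
\begin{equation*}
Y_s^{t,X,K}=X+K_t-K_s+\int_s^t g_r^{t,X,K}\,dr-\int_s^t z_r^{t,X,K}\,dB_r,\qquad s\in[0,t],
\end{equation*}
together with the bound $|g_s^{t,X,K}|\le\mu(|Y_s^{t,X,K}|+|z_s^{t,X,K}|)$. My plan is to set $\bar Y_s:=Y_s^{t,X,K}+K_s$ so that
\begin{equation*}
\bar Y_s=X+K_t+\int_s^t g_r^{t,X,K}\,dr-\int_s^t z_r^{t,X,K}\,dB_r,
\end{equation*}
and argue entirely about $\bar Y$, which is a continuous semimartingale because its right-hand side is the sum of a fixed $L^2$ random variable, a continuous (pathwise) Lebesgue integral and a continuous stochastic integral. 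Having $\bar Y$ continuous, the process $Y=\bar Y-K$ inherits the regularity of $K$: RCLL when $K\in D^2_{\mathcal{F}}(0,T)$, continuous when $K\in S^2_{\mathcal{F}}(0,T)$, and an It\^o process when $K$ is It\^o (in that case, writing $dK_s=\alpha_s ds+\beta_s dB_s$, I would read off $dY_s=-(\alpha_s+g_s^{t,X,K})ds+(z_s^{t,X,K}-\beta_s)dB_s$).

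What remains is the $L^2$-estimate $E[\sup_{s\le t}|Y_s|^2]<\infty$. First, from the Lipschitz bound on $g^{t,X,K}$ and the fact that $Y^{t,X,K},z^{t,X,K}\in L^2_{\mathcal{F}}(0,t)$, I get $E\int_0^t|g_r^{t,X,K}|^2\,dr<\infty$. Then I would apply Cauchy--Schwarz to the drift term and Doob's maximal inequality to the martingale part to conclude
\begin{equation*}
E\Bigl[\sup_{0\le s\le t}|\bar Y_s|^2\Bigr]\le C\Bigl(E[|X+K_t|^2]+t\,E\!\int_0^t\!|g_r^{t,X,K}|^2dr+E\!\int_0^t\!|z_r^{t,X,K}|^2dr\Bigr)<\infty.
\end{equation*}
Since $K\in D^2_{\mathcal{F}}(0,T)$ gives $E[\sup_{s\le t}|K_s|^2]<\infty$, adding the two bounds yields $E[\sup_{s\le t}|Y_s|^2]<\infty$, placing $Y$ in $D^2_{\mathcal{F}}(0,t)$ in the RCLL case and in $S^2_{\mathcal{F}}(0,t)$ in the continuous case.

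I do not anticipate any serious obstacle here; the only mildly delicate point is making sure the $L^2$-bounds on $g^{t,X,K}$ and $z^{t,X,K}$ are already in hand. That they are is exactly the content of Proposition \ref{m4.3}, where both live in $L^2_{\mathcal{F}}(0,t)$ and $g^{t,X,K}$ is pointwise controlled by $Y^{t,X,K}$ and $z^{t,X,K}$. Thus the corollary reduces to the semimartingale bookkeeping above.
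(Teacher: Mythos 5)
Your proposal is correct and follows exactly the route the paper intends: the corollary is left as an immediate consequence of the representation (\ref{e4.4}) in Proposition \ref{m4.3}, and your shift $\bar{Y}=Y+K$ is the same device the paper already uses in (\ref{e-gbar}) and in the assertion $Y^{t,X,K}+K\in S_{\mathcal{F}}^{2}(0,t)$ of Theorem \ref{th2.1}. The semimartingale bookkeeping (Cauchy--Schwarz on the drift, Doob's inequality on the martingale part, and the pointwise bound (\ref{e4.5}) to control $g^{t,X,K}$) fills in the omitted details correctly.
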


\section{$\mathbb{E}$--supermartingale decomposition theorem: intrinsic
formulation\label{ss7}}

Our objective of this section is to give the following $\mathbb{E}%
$--supermartingale decomposition theorem of Doob--Meyer's type. Since
$(\mathbb{E}_{s,t}[\cdot])_{s\leq t}$ is abstract and nonlinear, it is
necessary to introduce the intrinsic form (\ref{e6.1}). This theorem plays an
important role in the proof of the main theorem of this paper. It can be also
considered as a generalization of Proposition \ref{p2.3}. This is a very
profond theorem, the proof can be found in \cite[Peng 2005]{Peng2005d}.

\begin{theorem}
\label{m6.1}We assume (A1)--(A5) as well as (A4$_{0}$). Let $Y\in
S_{\mathcal{F}}^{2}(0,T)$ be an $\mathbb{E}[\cdot]$--supermartingale. Then
there exists an increasing process $A\in S_{\mathcal{F}}^{2}(0,T)$ with
$A_{0}=0$, such that $Y$ is an $\mathbb{E}[\cdot;A]$--martingale, i.e.,
\begin{equation}
Y_{t}=\mathbb{E}_{t,T}[Y_{T};A_{\cdot}],\;t\in[0,T]. \label{e6.1}%
\end{equation}

\end{theorem}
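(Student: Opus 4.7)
The plan is to adapt the classical $g$-supermartingale decomposition of Proposition \ref{p2.3} to the abstract pricing mechanism setting, using the $\mathbb{E}^{g_\mu}$-domination (A5) as the source of uniform a priori estimates. As a preparatory step I would first derive the two-sided bound $\mathbb{E}^{-g_\mu}_{s,t}[\cdot]\le\mathbb{E}_{s,t}[\cdot]\le\mathbb{E}^{g_\mu}_{s,t}[\cdot]$ by setting $X'=0$ (resp.\ $X=0$) in (A5) and invoking (A4$_0$). In particular, the hypothesis that $Y$ is an $\mathbb{E}$-supermartingale automatically makes it an $\mathbb{E}^{-g_\mu}$-supermartingale, so Proposition \ref{p2.3} already produces an increasing $A^0\in D_{\mathcal{F}}^2(0,T)$ realizing the decomposition with respect to $\mathbb{E}^{-g_\mu}$ and a corresponding $L^2$-control. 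This is only a reference bound; the target decomposition is with respect to $\mathbb{E}$ itself.

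Next I would construct discrete approximations. For each partition $\pi_n=\{0=t_0^n<\cdots<t_{k_n}^n=T\}$ with $|\pi_n|\to 0$, define a step process $A^n\in D_{\mathcal{F}}^0(0,T)$ inductively by choosing the jump $a_k^n=A^n_{t_k^n}-A^n_{t_{k-1}^n}\in L^2(\mathcal{F}_{t_k^n})$ so that
\[
\mathbb{E}_{t_{k-1}^n,t_k^n}[Y_{t_k^n};A^n]=\mathbb{E}_{t_{k-1}^n,t_k^n}[Y_{t_k^n}+a_k^n]=Y_{t_{k-1}^n}.
\]
Solvability on each step comes from the domination together with the fact that the BSDE behind $\mathbb{E}^{g_\mu}$ is invertible with respect to an added terminal ``cash.'' The $\mathbb{E}$-supermartingale property of $Y$, combined with monotonicity (A1), forces $a_k^n\ge 0$, so $A^n$ is increasing. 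Then Proposition \ref{m4.3}, applied to $\mathbb{E}_{s,t_k^n}[Y_{t_k^n};A^n]$ on each interval, yields the BSDE-like representation
\[
Y_s=Y_T+(A^n_T-A^n_s)+\int_s^T g^n_r\,dr-\int_s^T z^n_r\,dB_r,\qquad |g^n_r|\le\mu(|Y_r|+|z^n_r|),
\]
so that a standard It\^o-type computation, together with continuity of $Y$, gives uniform estimates on $\|A^n_T\|_{L^2}$ and $\|z^n\|_{L^2_{\mathcal F}(0,T)}$ depending only on $\mu$, $T$, $\|Y\|_{S^2}$ and $\|Y_T\|_{L^2}$.

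Finally I would pass to the limit. By the uniform estimates, a subsequence $(A^n,z^n)$ converges weakly in $L^2_{\mathcal{F}}(0,T;\mathbb{R}\times\mathbb{R}^d)$ to some $(A,z)$; since each $A^n$ is non-decreasing and the $A^n_T$ are weakly convergent, one upgrades to pointwise monotone convergence along a further subsequence, obtaining an increasing $A$. Continuity of $Y$ and of $Y_T+A^n_T-A^n_s-\int_s^T g^n_r dr$ in the appropriate norms then lets one identify the limit process as living in $S_{\mathcal F}^2(0,T)$ and passes the identity into $Y_s=\mathbb{E}_{s,T}[Y_T;A]$ for partition points $s$; Corollary \ref{m4.4} and the Lipschitz stability of $\mathbb{E}[\cdot;K]$ in $K$ (coming from the $\mathbb{E}^{g_\mu}$-domination, as in Proposition \ref{m3.7}) extend the identity to all $t\in[0,T]$. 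The main obstacle is precisely this last convergence step: upgrading weak $L^2$-limits of the $A^n$ to a pathwise increasing limit in $S^2$, and showing that the relation $Y_t=\mathbb{E}_{t,T}[Y_T;A]$ survives passage to the limit for every $t$, not only at partition points. Uniqueness of $A$ would then follow from the inverse comparison/$\mathbb{E}^{g_\mu}$-domination estimates, as in Corollary \ref{c2.2}.
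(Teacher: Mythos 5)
A preliminary remark: the paper gives no proof of Theorem \ref{m6.1} at all; it states only that ``the proof can be found in \cite{Peng2005d}.'' So there is no internal argument to compare yours against, and I am judging your outline on its own terms and against the strategy of that reference. In outline you are on the right track, and your preparatory step is correct: the two-sided bound $\mathbb{E}^{-g_{\mu}}_{s,t}[\cdot]\le\mathbb{E}_{s,t}[\cdot]\le\mathbb{E}^{g_{\mu}}_{s,t}[\cdot]$ does follow from (A5) together with (A4$_0$), an $\mathbb{E}$-supermartingale is therefore an $\mathbb{E}^{-g_{\mu}}$-supermartingale, and Proposition \ref{p2.3} yields the It\^{o}-type reference representation of $Y$. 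The discretize--correct--pass-to-the-limit architecture is also the one used in the cited reference.

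The gaps are concentrated in exactly the places you gloss over, and they are genuine. First, the one-step solvability: since $\mathbb{E}_{s,t}$ has no translation property, finding $a_{k}^{n}\ge 0$ with $\mathbb{E}_{t_{k-1}^{n},t_{k}^{n}}[Y_{t_{k}^{n}}+a_{k}^{n}]=Y_{t_{k-1}^{n}}$ is a nonlinear inverse problem. It can be solved, but only by combining (A4) (which localizes $\mathbb{E}_{s,t}[X+c]$ over an $\mathcal{F}_{s}$-measurable partition, so that simple $\mathcal{F}_{s}$-measurable perturbations are handled piece by piece), the pointwise Lipschitz bound and coercivity in $c$ supplied by the two-sided domination, and a measurable selection --- none of which appears in your sketch. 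Moreover (A1) gives monotonicity of $a\mapsto\mathbb{E}_{s,t}[Y_{t_{k}^{n}}+a]$, not injectivity, so nonnegativity of $a_{k}^{n}$ must be built into the construction rather than ``forced'' afterwards. Second, your displayed representation bounds $|g^{n}_{r}|$ by $\mu(|Y_{r}|+|z^{n}_{r}|)$, but Proposition \ref{m4.3} controls $g^{n}$ through the interpolated price process $Y^{n}_{s}=\mathbb{E}_{s,t_{k}^{n}}[Y_{t_{k}^{n}};A^{n}]$, which coincides with $Y$ only at grid points; the uniform estimates must therefore control $Y^{n}$, $z^{n}$ and $A^{n}_{T}$ simultaneously (the usual circular estimate, broken by applying It\^{o}'s formula to $|Y^{n}|^{2}$). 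Third, and most seriously, the limit passage you defer \emph{is} the theorem: $K\mapsto\mathbb{E}_{s,T}[X;K]$ is Lipschitz under strong $L^{2}$ convergence of $K$ (via (\ref{e3.1})) but is not weakly continuous, so weak convergence of $(A^{n},z^{n})$ does not allow you to pass to the limit in $Y_{s}=\mathbb{E}_{s,T}[Y_{T};A^{n}]$. Closing this requires Peng's monotonic limit theorem (weak convergence of the increasing processes combined with strong convergence of $z^{n}$ off a set of small measure, plus the corresponding stability of $g_{\mu}$-BSDEs under exactly this mode of convergence), which is the technical heart of \cite{Peng1999} and \cite{Peng2005d}. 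You correctly identify this as the main obstacle, but without it the proposal is a plan rather than a proof.
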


\begin{remark}
\label{m6.1Rem1}This theorem has an interesting interpretation: the fact that
$Y\in S_{\mathcal{F}}^{2}(0,T)$ is an $\mathbb{E}[\cdot]$--supermartingale
means that if $Y$ is always undervalued by the pricing mechanism $\mathbb{E}$,
i.e., $\mathbb{E}_{t,T}[Y_{T}]\leq Y_{t}$, then there exists an increasing
process $A$ such that $Y_{t}$ is just the $\mathbb{E}$ price of the
accumulated contingent claim $(Y_{T},A)$ at maturity $T$.
\end{remark}

\begin{remark}
\label{m6.1Rem2}In the case where $(\mathbb{E}_{s,t}[\cdot])_{0\leq s\leq
t\leq T}$ is a system of linear mappings, (\ref{e6.1}) becomes
\[
Y_{t}+A_{t}=\mathbb{E}_{t,T}[Y_{T}+A_{T}],\;t\in[0,T],
\]
i.e., as in classical situation, $Y+A$ is an $\mathbb{E}[\cdot]$--martingale.
But, the intrinsic formulation that can be applied to nonlinear situation is
that $Y$ is an $\mathbb{E}[\cdot;A]$--martingale.
\end{remark}

\section{Appendix \label{ss8}}

\subsection{Proof of Theorem \ref{m7.1}}

For each fixed $(t,y,z)\in\lbrack0,T]\times R\times R^{d}$, we consider the
solution $Y^{t,y,z}\in S_{\mathcal{F}}^{2}(0,T)$ of a It\^{o}'s equation on
$[t,T]$:
\begin{align}
dY_{s}^{t,y,z}  &  =-g_{\mu}(Y_{s}^{t,y,z},z)ds+zdB_{s},\; \;s\in
(t,T],\label{e7.2}\\
Y_{t}^{t,y,z}  &  =y. \label{e7.4}%
\end{align}
It is easy to check that $Y^{t,y,z}$ is an $\mathbb{E}^{g_{\mu}}[\cdot
]$--martingale, i.e., it is a price process of the pricing mechanism
$\mathbb{E}^{g_{\mu}}[\cdot]$ on $[t,T]$. Since the pricing mechanism
$\mathbb{E}[\cdot]$ is dominated by $\mathbb{E}^{g_{\mu}}[\cdot]$, from
(\ref{e3.1}) $Y^{t,y,z}$ is also an $\mathbb{E}[\cdot]$--supermartingale. By
Decomposition Theorem \ref{m6.1}, there exists an increasing process
$A^{t,y,z}\in S_{\mathcal{F}}^{2}(0,T)$ with $A_{0}^{t,y,z}=0$, such that
\begin{equation}
Y_{s}^{t,y,z}=\mathbb{E}_{s,T}[Y_{T}^{t,y,z};A_{\cdot}^{t,y,z}]. \label{e7.5}%
\end{equation}
i.e., $Y_{\cdot}^{t,y,z}$ is just the pricing process produced by
$\mathbb{E}[\cdot]$ of the accumulated contingent claim $(Y_{T}^{t,y,z}%
;A_{\cdot}^{t,y,z})$ with maturity $T$. By Proposition \ref{m4.3} and
Corollary \ref{m4.4}, there exists $($ $g^{t,y,z},Z^{t,y,z})\in L_{\mathcal{F}%
}^{2}(0,T)$ such that
\begin{equation}
-dY_{s}^{t,y,z}=dA_{s}^{t,y,z}+g_{s}^{t,y,z}ds-Z_{s}^{t,y,z}dB_{s}%
,\;s\in\lbrack t,T], \label{e7.6}%
\end{equation}
and such that, for each different $(t,y,z)$, $(t^{\prime},y^{\prime}%
,z^{\prime})$ $\in\lbrack0,T]\times R\times R^{d}$%
\begin{equation}
|g_{s}^{t,y,z}-g_{s}^{t^{\prime},y^{\prime},z^{\prime}}|\leq\mu|Y_{s}%
^{t,y,z}-Y_{s}^{t^{\prime},y^{\prime},z^{\prime}}|+\mu|Z_{s}^{t,y,z}%
-Z_{s}^{t^{\prime},y^{\prime},z^{\prime}}|,\;s\in\lbrack t\vee t^{\prime},T],
\label{e7.7}%
\end{equation}
and
\begin{equation}
|g_{s}^{t,y,z}|\leq\mu|Y_{s}^{t,y,z}|+\mu|Z_{s}^{t,y,z}|,\;s\in\lbrack
t,T],\;ds\times dP\hbox{--a.e.} \label{e7.7a}%
\end{equation}

Now for each $X\in L^{2}(\mathcal{F}_{t^{\prime}})$, we set
\begin{equation}
\bar Y_{s}^{t^{\prime},X}:=\mathbb{E}_{s,t^{\prime}}[X]=\mathbb{E}%
_{s,t^{\prime}}[X;0]. \label{e7.8a}%
\end{equation}
We use once more Proposition \ref{m4.3} and Corollary \ref{m4.4}: there exists
$($ $\bar g^{t^{\prime},X},\bar Z^{t^{\prime},X})\in L_{\mathcal{F}}%
^{2}(0,t^{\prime})$ such that, for $s\in[0,t^{\prime}]$,
\begin{equation}
-d\bar Y_{s}^{t^{\prime},X}=\bar g_{s}^{t^{\prime},X}ds-\bar Z_{s}^{t^{\prime
},X}dB_{s},\; \bar Y_{t^{\prime}}=X, \label{e7.8}%
\end{equation}
such that
\begin{equation}
|g_{s}^{t,y,z}-\bar g_{s}^{t^{\prime},X}|\leq\mu|Y_{s}^{t,y,z}-\bar
Y_{s}^{t^{\prime},X}|+\mu|Z_{s}^{t,y,z}-\bar Z_{s}^{t^{\prime},X}%
|,\;s\in[t,t^{\prime}],\;ds\times dP\hbox{--a.e.} \label{e7.9}%
\end{equation}
and, for $X$, $X\in L^{2}(\mathcal{F}_{t^{\prime}})$,
\[
|\bar g_{s}^{t^{\prime},X}-\bar g_{s}^{t^{\prime},X^{\prime}}|\leq\mu|\bar
Y_{s}^{t^{\prime},X}-\bar Y_{s}^{t^{\prime},X^{\prime}}|+\mu|\bar
Z_{s}^{t^{\prime},X}-\bar Z_{s}^{t^{\prime},X^{\prime}}|,\;s\in[0,t^{\prime
}],\;ds\times dP\hbox{--a.e..} \label{e7.10}
\]
On the other hand, comparing to (\ref{e7.2}) and (\ref{e7.6}), we have
\[
Z_{s}^{t,y,z}\equiv1_{[t,T]}(s)z. \label{e7.11}
\]
Thus (\ref{e7.7}), (\ref{e7.7a}) and (\ref{e7.9}) become, respectively,
\begin{equation}
|g_{s}^{t,y,z}-g_{s}^{t^{\prime},y^{\prime},z^{\prime}}|\leq\mu|Y_{s}%
^{t,y,z}-Y_{s}^{t^{\prime},y^{\prime},z^{\prime}}|+\mu|z-z^{\prime}%
|,\;s\in[t\vee t^{\prime},T],\;ds\times dP\hbox{--a.e.,} \label{e7.12}%
\end{equation}
\begin{equation}
|g_{s}^{t,y,z}|\leq\mu|Y_{s}^{t,y,z}|+\mu|z|,\; \label{e7.12a}%
\end{equation}
and
\begin{equation}
|g_{s}^{t,y,z}-\bar g_{s}^{t^{\prime},X}|\leq\mu|Y_{s}^{t,y,z}-\bar
Y_{s}^{t^{\prime},X}|+\mu|z-\bar Z_{s}^{t^{\prime},X}|,\;s\in[t,t^{\prime
}],\;ds\times dP\hbox{--a.e.} \label{e7.13}%
\end{equation}

Now, for each $n=1,2,3,\cdots$, we set $t_{i}^{n}=i2^{-n}T$, $i=0,1,2,\cdots
,2^{n}$, and define
\begin{equation}
g^{n}(s,y,z):=\sum_{i=0}^{2^{n}-1}g_{s}^{t_{i}^{n},y,z}1_{[t_{i}^{n}%
,t_{i+1}^{n})}(s),\;s\in\lbrack0,T],\ (y,z)\in R\times R^{d}. \label{e7.14}%
\end{equation}
It is clear that $g^{n}$ is an $\mathcal{F}_{t}$--adapted process.

\begin{lemma}
\label{m7.2aa}For each fixed $(y,z)\in R\times R^{d}$ and $T>0$,
$\{g^{n}(\cdot,y,z)\}_{n=1}^{\infty}$ is a Cauchy sequence in $L_{\mathcal{F}%
}^{2}(0,T)$.
\end{lemma}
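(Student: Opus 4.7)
The plan is to prove Cauchy-ness by controlling $\|g^{n+1}(\cdot,y,z)-g^n(\cdot,y,z)\|_{L^2_{\mathcal{F}}(0,T)}$ geometrically in $n$, using the Lipschitz estimate (\ref{e7.12}) and the standard SDE moment bound of Lemma \ref{em7.2a}.

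First I would fix $(y,z)$ and compare $g^{n+1}$ to $g^{n}$ interval by interval. Because the partition at level $n+1$ refines that at level $n$, on each dyadic slab $[t_i^n,t_{i+1}^n)$ the process $g^n$ equals $g_s^{t_i^n,y,z}$ while $g^{n+1}$ equals $g_s^{t_{2i}^{n+1},y,z}=g_s^{t_i^n,y,z}$ on the first half $[t_{2i}^{n+1},t_{2i+1}^{n+1})$ and $g_s^{t_{2i+1}^{n+1},y,z}$ on the second half. Hence $g^{n+1}-g^n$ is supported on the union of the second halves, and on each such slab it equals $g_s^{t_{2i+1}^{n+1},y,z}-g_s^{t_i^n,y,z}$, which by (\ref{e7.12}) (with common $z$) is bounded pointwise by $\mu|Y_s^{t_{2i+1}^{n+1},y,z}-Y_s^{t_i^n,y,z}|$.

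Next I would estimate that difference of SDE solutions. Both $Y_s^{t_i^n,y,z}$ and $Y_s^{t_{2i+1}^{n+1},y,z}$ satisfy the same equation (\ref{e7.2}) on $[t_{2i+1}^{n+1},T]$ with driving Brownian increment equal to $z(B_s-B_{t_{2i+1}^{n+1}})$, so they differ only through their values at $t_{2i+1}^{n+1}$: the second one equals $y$ exactly, while by Lemma \ref{em7.2a} the first satisfies
\begin{equation*}
E\bigl[|Y_{t_{2i+1}^{n+1}}^{t_i^n,y,z}-y|^2\bigr]\le C(|y|^2+|z|^2+1)(t_{2i+1}^{n+1}-t_i^n)\le C(|y|^2+|z|^2+1)\,2^{-n-1}T.
\end{equation*}
A standard Gr\"onwall argument applied to the difference of the two SDE solutions on $[t_{2i+1}^{n+1},T]$ (using the Lipschitz continuity of $g_\mu$ in $y$) then propagates this bound, giving $E[|Y_s^{t_i^n,y,z}-Y_s^{t_{2i+1}^{n+1},y,z}|^2]\le C'(|y|^2+|z|^2+1)\,2^{-n}$ uniformly for $s$ in the relevant slab.

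Finally I would integrate and sum. Squaring the pointwise bound, taking expectation, and integrating $s$ over the $2^n$ slabs of length $2^{-n-1}T$ each yields
\begin{equation*}
E\int_0^T |g^{n+1}(s,y,z)-g^n(s,y,z)|^2\,ds\le \mu^2\, C'(|y|^2+|z|^2+1)\,2^{-n}\cdot 2^n\cdot 2^{-n-1}T = C''\,2^{-n},
\end{equation*}
so $\|g^{n+1}(\cdot,y,z)-g^n(\cdot,y,z)\|_{L^2_{\mathcal F}(0,T)}\le C'''\,2^{-n/2}$, which is summable. The Cauchy property follows. The only mildly delicate step is the SDE stability propagation from $t_{2i+1}^{n+1}$ to arbitrary $s$ in the slab; this is essentially Gr\"onwall applied to $E|Y^1-Y^2|^2$, and is the point where I expect routine but careful bookkeeping.
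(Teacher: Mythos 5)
Your proposal is correct, and it rests on the same two ingredients as the paper's proof --- the Lipschitz estimate (\ref{e7.12}) with common $z$ and the moment bound (\ref{ee7.15}) of Lemma \ref{em7.2a} --- but it reaches the Cauchy property by a more roundabout path. You compare only consecutive terms $g^{n+1}$ and $g^{n}$, exploit the fact that the level-$(n+1)$ partition refines the level-$n$ one so that the difference lives on the second halves of the dyadic slabs, and then need a Gr\"onwall stability argument to propagate the bound $E[|Y_{t_{2i+1}^{n+1}}^{t_i^n,y,z}-y|^2]\leq C(|y|^2+|z|^2+1)2^{-n-1}T$ from the single time $t_{2i+1}^{n+1}$ to all $s$ in the slab, before summing a geometric series. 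The paper dispenses with both the refinement bookkeeping and the Gr\"onwall step by a single triangle inequality through the common initial value: for arbitrary $m<n$ and $s\in[t_i^m,t_{i+1}^m)\cap[t_j^n,t_{j+1}^n)$ it writes
\begin{equation*}
|g^{m}(s,y,z)-g^{n}(s,y,z)|\leq\mu|Y_{s}^{t_{i}^{m},y,z}-Y_{s}^{t_{j}^{n},y,z}|\leq\mu|Y_{s}^{t_{i}^{m},y,z}-y|+\mu|Y_{s}^{t_{j}^{n},y,z}-y|,
\end{equation*}
and since $s-t_i^m\leq 2^{-m}T$ and $s-t_j^n\leq 2^{-n}T$, Lemma \ref{em7.2a} applies to each term directly, giving $\sup_{s}E[|g^m(s,y,z)-g^n(s,y,z)|^2]\leq 2\mu^2C(|y|^2+|z|^2+1)(2^{-m}+2^{-n})T$ for all pairs $m<n$ at once. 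What your route buys is a quantitative rate $\|g^{n+1}-g^n\|_{L^2_{\mathcal F}(0,T)}\leq C2^{-n/2}$ and hence summability (so convergence of the series, not just Cauchy-ness of the sequence); what the paper's route buys is brevity and a uniform-in-$s$ estimate valid for arbitrary pairs of indices without any use of the nesting of the partitions. Both are sound; your Gr\"onwall step, the only place you flagged as delicate, is indeed routine since the Brownian increments cancel in the difference of the two solutions of (\ref{e7.2}).
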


\begin{proof}
Let $0<m<n$ be two integers. For each $s\in\lbrack0,T)$, there are some
integers $i\leq2^{m}-1$ and $j\leq2^{n}-1$ with $t_{i}^{m}\leq t_{j}^{m}$,
such that $s\in\lbrack t_{i}^{m},t_{i+1}^{m})\cap\lbrack t_{j}^{n},t_{j+1}%
^{n})$. We have, by (\ref{e7.12})
\begin{align*}
|g^{m}(s,y,z)-g^{n}(s,y,z)|  &  =|g_{s}^{t_{i}^{m},y,z}-g_{s}^{t_{j}^{n}%
,y,z}|\\
&  \leq\mu|Y_{s}^{t_{i}^{m},y,z}-Y_{s}^{t_{j}^{n},y,z}|\\
&  \leq\mu|Y_{s}^{t_{i}^{m},y,z}-y|+\mu|Y_{s}^{t_{j}^{n},y,z}-y|.
\end{align*}
By (\ref{ee7.15}) of Lemma \ref{em7.2a} given later,
\begin{align*}
&  E[|g^{m}(s,y,z)-g^{n}(s,y,z)|^{2}]\\
&  \leq2\mu^{2}C(|y|^{2}+|z|^{2}+1)(2^{-m}+2^{-n})T.
\end{align*}
Thus
\begin{align}
\sup_{s\in\lbrack0,T)}E[|g^{m}(s,y,z)-g^{n}(s,y,z)|^{2}]  &  \leq2\mu
^{2}E[|Y_{s}^{t_{i}^{m},y,z}-y|^{2}+|Y_{s}^{t_{j}^{n},y,z}-y|^{2}%
]\label{e7.16}\\
&  \leq2\mu^{2}C(|y|^{2}+|z|^{2}+1)(2^{-m}+2^{-n})T.\nonumber
\end{align}
Thus $\{g^{n}(\cdot,y,z)\}_{n=1}^{\infty}$ is a Cauchy sequence in
$L_{\mathcal{F}}^{2}(0,T)$.\
\end{proof}

We can give

\begin{definition}
\label{m7.3}For each $(y,z)\in R\times R^{d}$, we denote $g(\cdot,y,z)\in
L_{\mathcal{F}}^{2}(0,T)$, the Cauchy limit of $\{g^{n}(\cdot,y,z)\}_{n=1}%
^{\infty}$ in $L_{\mathcal{F}}^{2}(0,T)$.
\end{definition}

We will prove that the pricing mechanism $\mathbb{E}_{s,t}[\cdot]$ is just the
$g$--pricing mechanism with $g$ obtained in the above definition as its
generating function, and thus our main result Theorem \ref{m7.1} hold true. We
still need to investigate some important properties of $g$. We have the
following estimates for the function $g$.

\begin{lemma}
\label{m7.4}The limit $g:\Omega\times\lbrack0,T]\times R\times R^{d}%
\rightarrow R^{d}$ satisfies the following properties:
\begin{equation}
\left\{
\begin{array}
[c]{rrl}%
\text{(i)}\, &  & g(\cdot,y,z)\in L_{\mathcal{F}}^{2}%
(0,T)\hbox{, for each }(y,z)\in R\times R^{d};\\
\text{(ii)}\, &  & |g(s,y,z)-g(s,y^{\prime},z^{\prime})|\leq\mu(|y-y^{\prime
}|+|z-z^{\prime}|),\; \forall y,y^{\prime}\in R,\;z,z^{\prime}\in R^{d};\\
\text{(iii)}\, &  & g(s,0,0)\equiv0;\\
\text{(iv)}\, &  & |g(s,y,z)-\bar{g}^{t,X}|\leq\mu|y-\bar{Y}_{s}^{t,X}%
|+\mu|z-\bar{Z}_{s}^{t,X}|,\forall s\in\lbrack0,t],\ X\in L^{2}(\mathcal{F}%
_{t}).
\end{array}
\right.  \label{e7.17}%
\end{equation}
where $(\bar{Y}^{t,X},\bar{Z}^{t,X})$ is the process defined in (\ref{e7.8a})
and (\ref{e7.8}).
\end{lemma}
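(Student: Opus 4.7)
The plan is to pass to the limit $n\to\infty$ in the approximating family $g^n$ defined in (\ref{e7.14}), exploiting the Lipschitz-type bounds (\ref{e7.12})--(\ref{e7.13}) together with Lemma \ref{em7.2a}. Property (i) is immediate from Definition \ref{m7.3}: $g(\cdot,y,z)$ is by construction the $L^2_{\mathcal{F}}(0,T)$-limit of a Cauchy sequence and so belongs to that space.

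For (iii), if $y=z=0$ the It\^o equation (\ref{e7.2}) reduces to $dY_s^{t,0,0}=-\mu|Y_s^{t,0,0}|\,ds$ with initial value $0$, whose unique solution is $Y^{t,0,0}\equiv 0$. Substituting into the pointwise bound (\ref{e7.12a}) gives $g_s^{t,0,0}=0$, hence $g^n(s,0,0)\equiv 0$ for every $n$, and therefore $g(s,0,0)\equiv 0$.

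For (ii), fix $(y,z),(y',z')\in R\times R^d$ and, for each $n$ and each $s\in[0,T)$, let $i=i(s,n)$ be the unique index with $s\in[t_i^n,t_{i+1}^n)$. Applying (\ref{e7.12}) with the common time $t=t'=t_i^n$ gives
$$|g^n(s,y,z)-g^n(s,y',z')|\leq \mu\bigl|Y_s^{t_i^n,y,z}-Y_s^{t_i^n,y',z'}\bigr|+\mu|z-z'|.$$
Splitting $Y_s^{t_i^n,y,z}-Y_s^{t_i^n,y',z'}=(Y_s^{t_i^n,y,z}-y)+(y-y')+(y'-Y_s^{t_i^n,y',z'})$ and applying Lemma \ref{em7.2a} with $|s-t_i^n|\leq 2^{-n}T$, the two boundary terms vanish in $L^2_{\mathcal{F}}(0,T)$ as $n\to\infty$. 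Passing to the limit in $L^2_{\mathcal{F}}(0,T)$, and extracting if needed a $dt\times dP$-a.e.\ convergent subsequence, produces the Lipschitz bound (ii). The argument for (iv) is entirely parallel: start from (\ref{e7.13}) with $t'=t_i^n$, use Lemma \ref{em7.2a} to send $|Y_s^{t_i^n,y,z}-y|\to 0$ in $L^2$, and take the limit of both sides to obtain $|g(s,y,z)-\bar g_s^{t,X}|\leq \mu|y-\bar Y_s^{t,X}|+\mu|z-\bar Z_s^{t,X}|$ on $[0,t]$.

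The main technical obstacle is the rigorous passage from the $n$-wise pointwise inequalities for $g^n$ to a $dt\times dP$-a.e.\ inequality for the limit $g$: because the convergence $g^n\to g$ is only in $L^2_{\mathcal{F}}(0,T)$, one must combine an a.e.\ convergent subsequence with the uniform-in-$n$ control of $\|Y_s^{t_i^n,y,z}-y\|_{L^2}$ supplied by Lemma \ref{em7.2a}, and then check that the resulting a.e.\ set can be chosen independently of the representatives $(y,z)$, $(y',z')$ by density (e.g., restricting first to $(y,z)\in \mathbb{Q}^{1+d}$ and then using continuity on both sides). Once this framework is fixed for one pair of arguments, the same machinery simultaneously handles (ii) and (iv).
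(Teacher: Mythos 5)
Your proposal is correct and follows essentially the same route as the paper: pass to the limit in the piecewise definition $g^n$ using the stability estimates (\ref{e7.12})--(\ref{e7.13}) together with the SDE estimate of Lemma \ref{em7.2a}, after splitting by the triangle inequality through the initial values $y$, $y'$. Your treatment of (iii) via the exact identity $Y^{t,0,0}\equiv0$ and your explicit handling of the a.e.\ exceptional sets are minor refinements of the paper's argument, not a different method.
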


\begin{proof}
\textbf{ }(i) is clear. To prove (ii), we choose $t_{i}^{n}=i2^{-n}T$,
$i=0,1,2,\cdots,2^{n}$ as in (\ref{e7.14}). For each $s\in\lbrack0,T)$. We
have, once more by (\ref{e7.12}),
\begin{align}
|g^{n}(s,y,z)-g^{n}(s,y^{\prime},z^{\prime})|  &  =\sum_{j=0}^{2^{n}%
-1}1_{[t_{j}^{n},t_{j+1}^{n})}(s)|g_{s}^{t_{j}^{n},y,z}-g_{s}^{t_{j}^{n}%
,y,z}|\label{e7.17a}\\
&  \leq\mu\sum_{j=0}^{2^{n}-1}1_{[t_{j}^{n},t_{j+1}^{n})}(s)(|Y_{s}^{t_{j}%
^{n},y,z}-Y_{s}^{t_{j}^{n},y,z}|+|z-z^{\prime}|)\nonumber\\
&  \leq\mu\sum_{j=0}^{2^{n}-1}1_{[t_{j}^{n},t_{j+1}^{n})}(s)(|Y_{s}^{t_{j}%
^{n},y,z}-y|+|Y_{s}^{t_{j}^{n},y,z}-y^{\prime}|)\nonumber\\
&  +\mu(|y-y^{\prime}|+|z-z^{\prime}|)\nonumber
\end{align}
The first term $I^{n}(s)$ of the right hand is dominated by, using
(\ref{ee7.15}),
\begin{align*}
E[|I^{n}(s)|^{2}]  &  \leq2\mu^{2}\sum_{i=0}^{2^{n}-1}1_{[t_{j}^{n}%
,t_{j+1}^{n})}(s)E[|Y_{s}^{t_{j}^{n},y,z}-y|^{2}+|Y_{s}^{t_{j}^{n},y^{\prime
},z^{\prime}}-y^{\prime}|^{2}]\\
&  \leq2\mu^{2}\sum_{i=0}^{2^{n}-1}1_{[t_{j}^{n},t_{j+1}^{n})}(s)C(|y|^{2}%
+|z|^{2}+|y^{\prime}|^{2}+|z^{\prime}|^{2}+2)2^{-n}T.
\end{align*}
Thus $I^{n}(\cdot)\rightarrow0$ in $L_{\mathcal{F}}^{2}(0,T)$ as
$n\rightarrow\infty$. (ii) is obtained by passing to the limit in both sides
of (\ref{e7.17a}). (iii) is proved similarly by using (\ref{e7.12a}) and
(\ref{ee7.15}). \newline To prove (iv), We apply (\ref{e7.13}),
\end{proof}

\begin{align*}
|g^{n}(s,y,z)-\bar{g}_{s}^{t,X}|  &  =\sum_{i=0}^{2^{n}-1}1_{[t_{j}%
^{n},t_{j+1}^{n})}(s)|g_{s}^{t_{j}^{n},y,z}-\bar{g}_{s}^{t,X}|\\
\  &  \leq\sum_{i=0}^{2^{n}-1}1_{[t_{j}^{n},t_{j+1}^{n})}(s)[\mu|Y_{s}%
^{t_{j}^{n},y,z}-\bar{Y}_{s}^{t,X}|+\mu|z-\bar{Z}_{s}^{t,X}|]\\
\  &  \leq\mu\sum_{i=0}^{2^{n}-1}1_{[t_{j}^{n},t_{j+1}^{n})}(s)|Y_{s}%
^{t_{j}^{n},y,z}-y|+\mu|y-\bar{Y}_{s}^{t,X}|+\mu|z-\bar{Z}_{s}^{t,X}|.
\end{align*}
Then we pass to the limit on both sides.

Finally, We give

\medskip

\noindent\textbf{Proof of Theorem \ref{m7.1}.} For each fixed $t$ and $X\in
L^{2}(\mathcal{F}_{t})$, we denote by $\bar{Y}_{s}^{t,X}:=\mathbb{E}_{s,t}%
[X]$, the $\mathbb{E}$-price on $s\in\lbrack0,t]$ with the contingent $\bar
{Y}_{t}^{t,X}=X$ at the maturity $X$. By Proposition \ref{m4.3} and Corollary
\ref{m4.4}, this price process can has the form
\[
\bar{Y}_{s}^{t,X}=X+\int_{s}^{t}\bar{g}_{r}^{t,X}dr-\int_{s}^{t}\bar{Z}%
_{r}^{t,X}dB_{r},\;s\in\lbrack0,t].
\]
On the other hand, let $Y_{s}^{t,X}=\mathbb{E}_{s,t}^{g}[X]$, the price
process of the contingent claim $X$ generated by $g$. It solves the BSDE
\[
Y_{s}^{t,X}=X+\int_{s}^{t}g(r,Y_{r}^{t,X},Z_{r}^{t,X})dr-\int_{s}^{t}%
Z_{r}^{t,X}dB_{r},\;s\in\lbrack0,t].
\]
By Lemma \ref{m7.4}--(i) and (ii), this BSDE is well--posed. We then apply
It\^{o}'s formula to $|\bar{Y}^{t,X}-Y|^{2}$ on the pricing interval $[0,t]$,
take expectation. Exactly as the classical proof of the uniqueness of BSDE, we
have, using (iv) of Lemma \ref{m7.4}.
\begin{align*}
E|\bar{Y}_{s}^{t,X}-Y_{s}|^{2}  &  +E\int_{s}^{t}|\bar{Z}_{r}^{t,X}-Z_{r}%
|^{2}dEr\\
&  =2E\int_{s}^{t}(\bar{Y}_{r}^{t,X}-Y_{r})(\bar{g}_{r}^{t,X}-g(r,Y_{r}%
,Z_{r}))dr\\
&  \leq2E\int_{s}^{t}(|\bar{Y}_{r}^{t,X}-Y_{r}|\cdot|\bar{g}_{r}%
^{t,X}-g(r,Y_{r},Z_{r})|)dr\\
&  \leq2E\int_{s}^{t}|\bar{Y}_{r}^{t,X}-Y_{r}|\cdot\mu(|\bar{Y}_{r}%
^{t,X}-Y_{r}|+|\bar{Z}_{r}^{t,X}-Z_{r}|)dr\\
&  \leq E\int_{s}^{t}2(\mu+\mu^{2})|\bar{Y}_{r}^{t,X}-Y_{r}|^{2}+\frac{1}%
{2}|\bar{Z}_{r}^{t,X}-Z_{r}|^{2})dr.
\end{align*}
It then follows by using Gronwall's inequality that
\[
\bar{Y}_{s}^{t,X}\equiv Y_{s}=\mathbb{E}_{s,t}^{g}[X],\ \forall s\in
\lbrack0,t].
\]
We thus have the desired result. The proof is complete. $\Box$

\subsection{Testing condition of domination (A5)}

{\Large \bigskip with computational realization by CHEN Lifeng and SUN Peng}

With Chen and Sun of our research group, we have applied our main result
\ref{m7.1} to test if a specific pricing mechanism is a $g$--expectation, or
$g$--pricing mechanism. We need to verify by testing if the crucial assumption
(A5), i.e., the domination inequality (\ref{e3.1}) holds true. We have tested
by using the data of prices of different options given by the price mechanism.

We first test the CME (Chicago Mercantile Exchange)'s market price mechanism
of the options with S\&P500 futures as the underlying asset. The data of the
call and put priceses, from\ year 2000 to 2003, and the corresponding S\&P500
future's prices is obtained from the parameter files of SPAN (Standard
Portfolio Analysis of Risk) system downloaded from CME's ftp site.

We denote by $X_{T}^{i}=(S_{T}-k_{i})^{+}$ (resp. $Y_{T}^{i}=(S_{T}-k_{i}%
)^{-}$, the market price of the call (resp. put ) option with muturity $T$ and
strike price $k_{i}$. We denote their market price at time $t<T$ by
$\mathbb{E}_{t,T}^{m}[X_{T}^{i}]$ and $\mathbb{E}_{t,T}^{m}[Y_{T}^{i}]$,
respectively. The inequalities we need to put to the test are (\ref{e3.1}) in
the following different conbinations, with different $(t,T)$ and different
strike prices%
\begin{equation}
\left\{
\begin{array}
[c]{ccc}%
\text{Call--Call:} &  & \mathbb{E}_{t,T}^{m}[X_{T}^{i}]-\mathbb{E}_{t,T}%
^{m}[X_{T}^{j}]\leq\mathbb{E}_{t,T}^{g_{\mu}}[X_{T}^{i}-X_{T}^{j}]\\
\text{Put--Put:} &  & \mathbb{E}_{t,T}^{m}[Y_{T}^{i}]-\mathbb{E}_{t,T}%
^{m}[Y_{T}^{j}]\leq\mathbb{E}_{t,T}^{g_{\mu}}[Y_{T}^{i}-Y_{T}^{j}]\\
\text{Call--Put:} &  & \mathbb{E}_{t,T}^{m}[X_{T}^{i}]-\mathbb{E}_{t,T}%
^{m}[Y_{T}^{j}]\leq\mathbb{E}_{t,T}^{g_{\mu}}[X_{T}^{i}-Y_{T}^{j}]\\
\text{Put--Call:} &  & \mathbb{E}_{t,T}^{m}[Y_{T}^{i}]-\mathbb{E}_{t,T}%
^{m}[X_{T}^{j}]\leq\mathbb{E}_{t,T}^{g_{\mu}}[Y_{T}^{i}-X_{T}^{j}]
\end{array}
\right.  \label{testedIQ}%
\end{equation}
\ \

In the above inequalities the left hand are market data taken from CME
parameter files. The right hand are the corresponding $g_{\mu}$--expectations.
We have calculated all these values by using standard binomial tree algorithm
of BSDE. Here use the algorithms in Peng and Xu [2005] to solve the following
1-dimensional BSDE:%

\begin{align}
y_{t}  &  =\xi+\int_{t}^{T}\mu(|y_{s}|+|z_{s}|)ds-\int_{t}^{T}z_{s}%
dB_{s}\label{Numerical}\\
y_{T}  &  =X_{T}^{i}-X_{T}^{j}\ \ \text{(resp.\ }Y_{T}^{i}-Y_{T}^{j}%
\text{,\ }X_{T}^{i}-Y_{T}^{j}\text{ and }Y_{T}^{i}-X_{T}^{j}\text{).\ }%
\nonumber
\end{align}

5 parameter files from year 2000 to 2003 have been put in the test. We list
the number of tested inequalities (\ref{testedIQ}) corresponding to each CME
parameter file:%

\begin{tabular}
[t]{|c|c|c|}\hline
CME parameter file name & year & number of tested inequalities\\\hline
cme0105s.par & 2000 & 54584\\\hline
cme0105s.par & 2001 & 62424\\\hline
cme0104s.par & 2002 & 35830\\\hline
cme0103s.par & 2003 & 28162\\\hline
cme0701s.par & 2003 & 61438\\\hline
total number tested &  & 242438\\\hline
\end{tabular}

This means that BSDE (\ref{Numerical}) have been caculated 242438 times (with
CPU P4 Xeron 2.8G). A surpricingly positive result was obtained: among the
totally 242438 tested inqualities, only 5 are against the criteria
(\ref{testedIQ}). Moreover, those 5 counterexamples are singular situation
since they themself all violate Axiomatic monotonicity condition (A1). 5 cases
are all from cme0701s.par, 2003, Put--Put. They are all the singular cases of
form
\[
\mathbb{E}_{t,T}^{m}[(S_{T}-k_{i})^{-}]>\mathbb{E}_{t,T}^{m}[(S_{T}-k_{j}%
)^{-}],\ \ \text{for }k_{i}>k_{j}.
\]
More specific results of the test will be given in our forthcoming paper.

Another feature of our test is, usualy, the bigger $T-t$ is, the
$\mathbb{E}_{t,T}^{g_{\mu}}[X_{T}^{i}-X_{T}^{j}]-(\mathbb{E}_{t,T}^{m}%
[X_{T}^{i}]-\mathbb{E}_{t,T}^{m}[X_{T}^{j}])$. We present 4 features for a
relatively smaller $T-t$ to show the tested result.

\bigskip\medskip Test of inequalities: CME file: cme0105s.par, 2001, $t$: Jan.
05,\ $T$: Jan. 17 for S\&P500 01-03 future with $S_{t}=\$1413.5$\bigskip\

1.\qquad Call--Call:$\ \ \ \ \ \mathbb{E}_{t,T}^{g_{\mu}}[X_{T}^{i}-X_{T}%
^{j}]-(\mathbb{E}_{t,T}^{m}[X_{T}^{i}]-\mathbb{E}_{t,T}^{m}[X_{T}^{j}])>0$

\medskip

2. \ Put--Put:\ \ $\mathbb{E}_{t,T}^{g_{\mu}}[Y_{T}^{i}-Y_{T}^{j}%
]-(\mathbb{E}_{t,T}^{m}[Y_{T}^{i}]-\mathbb{E}_{t,T}^{m}[Y_{T}^{j}])>0$

3.\qquad Call--Put:\ \ \ $\mathbb{E}_{t,T}^{g_{\mu}}[X_{T}^{i}-Y_{T}%
^{j}]-(\mathbb{E}_{t,T}^{m}[X_{T}^{i}]-\mathbb{E}_{t,T}^{m}[Y_{T}^{j}])>0$

\medskip

4.\qquad Put--Call\ $\mathbb{E}_{t,T}^{g_{\mu}}[Y_{T}^{i}-X_{T}^{j}%
]-(\mathbb{E}_{t,T}^{m}[Y_{T}^{i}]-\mathbb{E}_{t,T}^{m}[X_{T}^{j}])>0$

%
%
%
%
%
%
%
%
%
%

\bigskip

Remark: the computations was realized by CHEN Lifeng and SUN Peng

\end{document}